\mathchardef\mhyphen="2D 
\newcommand{\vect}[1]{\boldsymbol{#1}}
\newcommand*{\logeq}{\ratio\Leftrightarrow}
\newcommand{\frakA}{\mathfrak{A}}
\newcommand{\calA}{\mathcal{A}}
\newcommand{\calB}{\mathcal{B}}
\newcommand{\frakI}{\mathfrak{I}}
\newcommand{\calL}{{\mathcal{L}}}
\newcommand{\calM}{\mathcal{M}}
\newcommand{\aent}[1]{{\overline{\Gamma^*_{#1}}}}
\newcommand{\step}[1]{\mathrm{S}_{#1}}
\newcommand{\tup}{\mathrm{Tup}}
\newcommand{\Supp}{\mathrm{Supp}}
\newcommand{\att}{\mathsf{Var}}
\newcommand{\val}{\mathsf{Val}}
\newcommand{\Rpos}{\mathbb{R}_{\geq 0}}
\newcommand{\ar}{\mathrm{ar}}
\newcommand{\pind}{\perp\!\!\!\perp}
\newcommand {\pci}[3] {#2~\!\!\pind\!\!~#3 \mid #1}
\newcommand {\pmi}[2] {#1~\!\!\pind\!\!~#2}
\newcommand{\join}[2]{#1 \!\bowtie \!#2}
\newcommand{\kplus}{\oplus}
\newcommand{\ktimes}{\otimes}
\newcommand{\bigkplus}{\bigoplus}
\newcommand{\bigktimes}{\bigotimes}
\newcommand{\Dom}{\mathrm{Dom}}
\def\BState{\State\hskip-\ALG@thistlm}
\title{Conditional independence on semiring relations} 
\author{Miika Hannula\footnote{The author has been supported by the ERC grant 101020762.}}{Department of Mathematics and Statistics\\ University of Helsinki\\Finland\\\texttt{miika.hannula@helsinki.fi}}{}{}{}
\authorrunning{M. Hannula} 
\begin{document}

\maketitle

\begin{abstract}
Conditional independence plays a foundational role in database theory, probability theory, information theory, and graphical models. In databases, a notion similar to conditional independence, known as the (embedded) multivalued dependency, appears in database normalization. Many properties of conditional independence are shared across various domains, and to some extent these commonalities can be studied through a measure-theoretic approach.
The present paper proposes an alternative approach via semiring relations, defined by extending database relations with tuple annotations from some commutative semiring. Integrating various interpretations of conditional independence in this context, we investigate how the choice of the underlying semiring impacts the corresponding axiomatic and decomposition properties.
We specifically identify positivity and multiplicative cancellativity as the key semiring properties that enable extending results from the relational context to the broader semiring framework. Additionally, we explore the relationships between different conditional independence notions through model theory, and consider how methods to test logical consequence and validity generalize from database theory and information theory to semiring relations.
\end{abstract}

\section{Introduction}
Conditional independence (CI) is an expression of the form $\pci{X}{Y}{Z}$, stating that $Y$ and $Z$ are conditionally independent given $X$. Common to its different interpretations is that conditional independence is a mark of redundancy. For instance, on a relation schema over attributes $X,Y,Z$, the multivalued dependency (MVD) $X \twoheadrightarrow Y$ can be viewed as the counterpart of the CI $\pci{X}{Y}{Z}$, expressing that a relation can be losslessly decomposed into its projections on $X,Y$ and $X,Z$. The process of splitting the schema into smaller parts---in order to avoid data redundancy---is called normalization, and a database schema is in fourth normal form if every non-trivial MVD follows from some key. In probability theory, CIs over random variables give rise to factorizations of joint probability distributions into conditional distributions. Since the decomposed distributions can be represented more compactly, this allows more efficient reasoning about the random variables. In addition to these classical examples, conditional independence has applications in ordinal conditional functions \cite{Spohn1988}, Dempster-Schaefer theory \cite{dempster,shafer1976mathematical}, and possibility theory \cite{ZADEH19783}. 
 
 Since the notion of conditional independence has a relatively fixed meaning across various contexts,
 it is no coincidence that the central rules governing its behavior are universally shared.
 The semigraphoid axioms \cite{Pearl88a} state five basic rules that hold true for diverse interpretations of conditional independence.  Initially conjectured to be complete by Pearl, Studen\'{y} \cite{Studeny89} proved incompleteness of these rules by discovering a 
 new rule
that is not derivable by the semigraphoid axioms, while being sound for probability distributions. Later he \cite{studeny:1993} proved that there cannot be any finite axiomatization for conditional independence, a fact that had been established earlier for embedded multivalued dependencies (EMVDs) \cite{parker:1980}.
The \emph{implication problem}, which is to determine whether some set of dependencies $\Sigma$ logically implies a dependency $\tau$,  is in fact undecidable not only for EMVDs \cite{herrmann95}, but also for CIs in probability theory, as has been recently shown \cite{kuhne22,Li23}. 
 
 In some partial cases, the semigraphoid axioms are known to be complete. 
 A \emph{saturated conditional independence} (SCI) is a CI that contains all the variables of the underlying joint distribution.
 The semigraphoid axioms 
 are complete for the implication of arbitrary CIs by saturated ones under various semantics \cite{GyssensNG14},
 and for the implication of CIs from a set of CIs encoded in the topology of a Bayesian network \cite{GeigerVP90}.
 In databases, where SCIs correspond to MVDs, the implication problem for MVDs combined with functional dependencies (FDs) is 
 well-known to have a finite axiomatization and a polynomial-time algorithm  \cite{BeeriFH77}.

 Moving beyond saturated CIs, the implication problem not only becomes undecidable, but also more sensitive to the underlying semantics. Studen\'{y} \cite{Studeny93} presents several example inference rules that involve non-saturated CIs and are sound in one setting while failing to be sound in others.  For instance, the aforementioned rule\footnote{This rule states that $\pci{CD}{A}{B} \land \pci{A}{C}{D}\land \pci{B}{C}{D} \land \pci{\emptyset}{A}{B} $ if and only if $ \pci{AB}{C}{D} \land \pci{C}{A}{B}\land \pci{D}{A}{B} \land \pci{\emptyset}{C}{D}$. For probability distributions the rule follows by the non-negativity of conditional mutual information $I(Y;Z|C)$, and the fact that  $I(Y;Z|X)=0$ if and only if $Y$ and $Z$ are conditionally independent given $X$. For database relations the rule is not sound; see a counterexample in \cite{Studeny93}.} showing incompleteness of the semigraphoid axioms is not sound for database relations, but its
   soundness for probability distributions
 follows by a simple information-theoretic argument.
FDs and MVDs can also be alternatively expressed in terms of information measures over a uniformly distributed database relation \cite{Lee87}, and their implication problem has recently been connected to validity of information inequalities \cite{KenigS22}. Galliani and V\"{a}\"{a}n\"{a}nen \cite{GallianiV22} associate relations with a so-called diversity measure to capture FDs and other data dependencies. These measure-theoretic approaches, however, fail to capture the semantics of the embedded multivalued dependency in full generality.

This paper examines $K$-relations as a unifying framework for conditional independence and other dependency concepts.  Introduced in the seminal work \cite{GreenKT07}, $K$-relations extend ordinary relations by tuple annotations from a commutative semiring $K$, providing a powerful abstraction for data provenance.
While it is natural to consider propagation of tuple annotations through queries in this context, one can also ask how tuple annotations couple with data dependencies. Dependencies on $K$-relations have thus far received limited attention (see, e.g., \cite{Atserias20,BarlagHKPV23,ChuMRCS18}). Related to this work, Barlag et al. \cite{BarlagHKPV23} define conditional independence for $K$-relations, and raise the question of how much the related axiomatic properties depend on the algebraic properties of $K$. Atserias and Kolaitis \cite{Atserias20} study the relationship between local and global consistency for $K$-relations, introducing also many concepts that will be adopted in this paper. Although the authors do not  consider conditional independence, they show that functional dependencies on $K$-relations entail lossless-join decompositions.

The following contributions are presented in this paper: First, we show that conditional independence for $K$-relations corresponds to lossless-join decompositions whenever $K$ is positive and multiplicatively cancellative. 
Then, we provide a proof that, for any $K$ exhibiting these characteristics, the semigraphoid axioms are sound for general CIs, and extend to a complete axiomatization of SCI+FD which is comparable to that of MVD+FD. This entails that database normalization techniques extend to $K$-relations whenever positivity and multiplicative cancellativity are assumed. To showcase potential applications, we illustrate through an example how the semiring perspective can lead to decompositions of data tables which appear non-decomposable when interpreted relationally. We also explore how $K$-relations and model theory can shed light into the interconnections among different CI semantics. Lastly, we compare two standard methods used in database theory and information theory to test logical consequence and validity for constraints: the chase and the copy lemma. In particular, we demonstrate that the chase can sometimes be simulated using the copy lemma, and prove that the latter method extends to $K$-relations whenever $K$ is positive.

\section{Semirings}
We commence by recapitulating concepts related to semirings.
  A \emph{semiring} is a tuple $K=(K,\kplus,\ktimes,0,1)$, where $\kplus$ and $\ktimes$ are binary operations on $K$, $(K,\kplus,0)$ is a commutative monoid with identity element $0$, $(K,\ktimes ,1)$ is a monoid with identity element $1$, $\ktimes$ left and right distributes over $\kplus$, and $x \ktimes 0 =0= 0\ktimes x$ for all $x \in K$.
  The semiring $K$ is called \emph{commutative} if $(K,\ktimes ,1)$ is a commutative monoid.  That is, semirings are rings which need not have additive inverses.
   As usual, we often
  write $ab$ instead of $a \ktimes b$. In this paper, we assume that every semiring is non-trivial ($0\neq 1$) and commutative.
  The symbols $\kplus, \ktimes,\bigkplus, \bigktimes$ are used in reference to specific semiring operations, and symbols $+,\cdot,\sum,\prod$ refer to ordinary arithmetic operations.
  
We list some example semirings that will be considered in this paper. 
  \begin{itemize}
    \item The \emph{Boolean semiring} $\mathbb{B}=(\mathbb{B},\lor,\land,0,1)$ models logical truth and is formed from the two-element Boolean algebra. It is the simplest example of a semiring that is not a ring.
    \item The \emph{probability semiring} $\mathbb{R}_{\geq 0}=(\mathbb{R}_{\geq 0},+,\cdot,0,1)$ consists of the non-negative reals with standard addition and multiplication.
    \item The \emph{semiring of natural numbers} $\mathbb{N}=(\mathbb{N},+,\cdot,0,1)$ consists of natural numbers with their usual operations.
    \item The \emph{tropical semiring} $\mathbb T= (\mathbb{R}\cup\{\infty\}, \min, +, \infty, 0)$ consists of the reals expanded with infinity
	and has $\min$ and standard addition respectively plugged in for addition and multiplication.
\item The \emph{Viterbi semiring} $\mathbb V = ([0,1], \max, \cdot, 0,1)$ associates the unit interval with maximum as addition and standard multiplication. 
   \end{itemize}
   Other examples include the semiring of multivariate polynomials $\mathbb{N}[\vect{X}]=(\mathbb{N}[\vect{X}],+,\cdot,0,1)$ which is the free commutative semirings generated by the indeterminates in $\vect{X}$,  and
%
the \emph{Lukasiewicz semiring} $\mathbb{L} = ([0,1], \max, \cdot , 0, 1)$, used in multivalued logic, which endows the 
	unit interval with $\max$ addition and multiplication $a \cdot b \coloneqq \max(0, a+b-1)$.

  Let $\leq$ be a partial order. 
  A binary operator $*$ is said to be \emph{monotone under $\leq$} if $a\leq b$ and $a'\leq b'$ implies $a*a' \leq b*b'$. 
  If $*=\kplus$ (resp. $*=\ktimes$), we call this property of $(K,\leq)$ \emph{additive monotony} (resp. \emph{multiplicative monotony}).
  A \emph{partially ordered semiring} is a tuple $K=(K,\kplus,\ktimes,0,1, \leq)$, where $(K,\kplus,\ktimes,0,1)$ is a semiring, and $(K,\leq)$ is a partially ordered set satisfying additive and multiplicative monotony. Given a semiring ${K}=(K,\kplus,\ktimes,0,1)$, define a binary relation $\leq_{K}$ on $K$ as 
  \begin{equation}\label{eq:nat}
  a \leq_{K} b\logeq\exists c : a\kplus c = b.
 \end{equation}
  This relation is a preorder, meaning it is reflexive and transitive. If $\leq_{K}$ is also antisymmetric, it is a partial order, called the \emph{natural order} of ${K}$, and ${K}$ is said to be \emph{naturally ordered}. 
  In this case, ${K}$ endowed with its natural order is a partially ordered semiring. If additionally the natural order is \emph{total}, i.e.,  $a\leq_K b$ or $b\leq_K a$ for all $a,b \in K$, we say that $K$ is \emph{naturally totally ordered}.

  If a semiring $K$ satisfies $ab=0$ for some $a,b\in K$ where $a\neq 0 \neq b$, 
  we say that $K$ has \emph{divisors of $0$}. The 
  semiring $K$ is called \emph{$\kplus$-positive} if $a\kplus b=0$ implies 
  that $a=b=0$. If $K$ is both $\kplus$-positive and has no divisors of $0$, 
  it is called \emph{positive}.
  For example, the modulo two integer semiring $\mathbb{{Z}}_2$ 
  is not positive since it is not $\kplus$-positive (even though it has no divisors of $0$). 
  Conversely, an example of a semiring with divisors of $0$ is $\mathbb{{Z}}_4$.
  A semiring is called 
  \emph{additively} (resp. \emph{multiplicatively}) \emph{cancellative}
  if $a \kplus b = a \kplus c$  implies $b=c$
  (resp.  $ab = ac$ and $a \neq 0$ implies $b=c$). It is simply
  \emph{cancellative} if it is both additively and multiplicatively
  cancellative. 
A semiring $K$ in which each non-zero element has a multiplicative inverse is called a \emph{semifield}. A semifield $K$ in which each element has an additive inverse is a \emph{field}.

  In particular, note that
	the probability semiring  $\mathbb{R}_{\geq 0}$, the semiring of natural numbers $\mathbb{N}$, the Boolean semiring $\mathbb{B}$, and the tropical semiring are positive, multiplicatively cancellative, and naturally ordered.
  Of these only the first two are also additively cancellative. This difference seems to be crucial for the behavior of conditional independence.

 This section concludes with two lemmata. The first lemma is applied when examining the relationship between lossless-join decompositions and conditional independence (Theorem \ref{thm:decomp}). The second lemma comes into play when comparing the CI implication problem for different semirings (Theorem \ref{thm:rpostoK}).  
 A formal definition of an \emph{embedding} of a model into another model and the lemma proofs are located in Appendix \ref{sect:embeddings}. 
 A field $F$ endowed with a total order $\leq$ is a \emph{totally ordered field} if $(F,\leq)$ satisfies additive monotony and \emph{monotony of non-negative multiplication}: $a \geq 0$ and $ b\geq 0$ implies $ab\geq 0$.

\begin{restatable}{lemma}{possemi}\label{prop:embedding}
  Any positive multiplicatively cancellative semiring $K$ embeds in a positive semifield $F$.
  Furthermore, 
  if $K$ is additively cancellative, then $F$ is additively cancellative, and
  if $K$ is a naturally totally ordered, then $F$ is naturally totally ordered.
\end{restatable}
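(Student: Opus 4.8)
The plan is to imitate the construction of the field of fractions of an integral domain, adapted to the semiring setting. Since $K$ is positive it has no divisors of $0$, so the set $S = K \setminus \{0\}$ is closed under $\ktimes$ and contains $1$; I would form the \emph{semiring of fractions} $F$ whose elements are equivalence classes of pairs $(a,b)$ with $a \in K$, $b \in S$, under the relation $(a,b) \sim (c,d) \logeq ad = bc$. Reflexivity and symmetry are immediate, and the one place where the hypotheses bite is transitivity: from $ad = bc$ and $cf = de$ one derives $d(af) = d(be)$, and \emph{multiplicative cancellativity} (with $d \neq 0$) yields $af = be$. Writing $a/b$ for the class of $(a,b)$, I would equip $F$ with $a/b \kplus c/d \defeq (ad \kplus bc)/(bd)$ and $a/b \ktimes c/d \defeq (ac)/(bd)$; these are well defined (routine, using $ab' = a'b$) and land in $F$ because $bd \neq 0$ by the absence of zero divisors.

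Next I would verify that $F$, with zero $0/1$ and unit $1/1$, is a semifield into which $K$ embeds. The semiring axioms transfer by direct computation, $0/1 \neq 1/1$ because $K$ is non-trivial, and every nonzero element $a/b$ (i.e.\ $a \neq 0$) has multiplicative inverse $b/a$. The map $a \mapsto a/1$ is a homomorphism, and it is injective since $a/1 = b/1$ forces $a\ktimes 1 = 1 \ktimes b$; because the natural order is $\kplus$-definable this map also preserves and reflects $\leq$, so it is an embedding. Positivity of $F$ follows from that of $K$: if $a/b \ktimes c/d = 0/1$ then $ac = 0$, whence $a = 0$ or $c = 0$ (no divisors of $0$); and if $a/b \kplus c/d = 0/1$ then $ad \kplus bc = 0$, so $ad = bc = 0$ by $\kplus$-positivity, forcing $a = c = 0$ since $b, d \neq 0$.

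For the first refinement, if $K$ is additively cancellative I would clear denominators in $a/b \kplus c/d = a/b \kplus e/f$: expanding the cross-multiplied identity and cancelling the common summand $abdf$ (additive cancellativity) leaves $b^2 cf = b^2 de$, and cancelling $b^2 \neq 0$ (multiplicative cancellativity) gives $cf = de$, i.e.\ $c/d = e/f$. The subtler refinement is the naturally-totally-ordered case, which I expect to be the main obstacle, since the behaviour of the natural order under the fraction construction does not follow from monotony alone. My plan is to define $a/b \preceq c/d \logeq ad \leq_K bc$ and prove that $\preceq$ is well defined, is a total order, and coincides with the natural order $\leq_F$ of $F$. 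Each of these steps needs to \emph{cancel} a common factor from an inequality $xz \leq_K yz$, which is not valid in a general partially ordered semiring; here I would use totality together with antisymmetry and multiplicative cancellativity as a cancellation principle: if $xz \leq_K yz$ but not $x \leq_K y$, then by totality $y \leq_K x$, which gives $yz \leq_K xz$ by monotony, so $xz = yz$ by antisymmetry and hence $x = y$ after cancelling $z$, a contradiction. With this principle in hand, well-definedness, totality, antisymmetry, and the identification $\preceq\,=\,\leq_F$ (the direction $\preceq \Rightarrow \leq_F$ being witnessed by rewriting both fractions over the common denominator $bd$) all go through, showing that $F$ is naturally totally ordered.
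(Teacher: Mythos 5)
Your proof is correct and follows essentially the same route as the paper's: the localization $K \times (K\setminus\{0\})$ modulo $(a,b)\sim(c,d) \logeq ad=bc$, with the same operations, the same candidate order $a/b \preceq c/d \logeq ad \leq_K bc$, and the same embedding $a \mapsto a/1$. The one point where you are more explicit than the paper is in justifying the cancellation of a nonzero factor from an \emph{inequality} (the paper simply invokes multiplicative cancellativity there); your derivation of that order-cancellation principle from totality, antisymmetry, and multiplicative cancellativity is the right way to fill in that step.
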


 \begin{lemma}\label{lem:semi-cons}
Any naturally totally ordered cancellative semiring
  embeds in a totally ordered field.
  \end{lemma}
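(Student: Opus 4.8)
The plan is to reduce the statement to Lemma~\ref{prop:embedding} and then complete the resulting semifield to a field via the classical group-of-differences (Grothendieck) construction. First I would observe that a naturally totally ordered cancellative semiring $K$ is automatically positive. Indeed, in any naturally ordered semiring $0 \leq_{K} a$ holds for every $a$, so if $a \kplus b = 0$ then $a \leq_{K} 0 \leq_{K} a$ forces $a = 0$ by antisymmetry and hence $b = 0\kplus b = a \kplus b = 0$, giving $\kplus$-positivity; and multiplicative cancellativity rules out divisors of $0$, since $ab = 0 = a \ktimes 0$ with $a \neq 0$ yields $b = 0$. Being cancellative, $K$ is in particular multiplicatively cancellative, so Lemma~\ref{prop:embedding} applies and embeds $K$ into a positive semifield $F$ that is moreover additively cancellative and naturally totally ordered, since $K$ enjoys both of these properties.

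Next I would form the ring of differences $R$ of the additively cancellative commutative monoid $(F,\kplus,0)$, whose elements are equivalence classes $[a,b]$ of pairs under $(a,b)\sim(c,d) \logeq a \kplus d = b \kplus c$, with $[a,b]$ standing for the formal difference $a-b$, componentwise addition, and multiplication $[a,b]\cdot[c,d] = [ac \kplus bd,\, ad \kplus bc]$. Additive cancellativity guarantees that $a \mapsto [a,0]$ is an injective semiring homomorphism, so $F$, and hence $K$, embeds in $R$; note $R$ is nontrivial because $[1,0]=[0,0]$ would force $1=0$. The crucial step is then to show that $R$ is in fact a field, and this is exactly where total order is used: given a nonzero $[a,b]$, by totality either $a \geq_{K} b$, in which case $a = b \kplus c$ and $[a,b]=[c,0]$ with $c \neq 0$, or $b \geq_{K} a$, in which case $[a,b]=[0,c]$ with $c\neq 0$. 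Since $F$ is a semifield, $c$ has an inverse $c^{-1}$, and one checks $[c,0]\cdot[c^{-1},0]=[1,0]$ and $[0,c]\cdot[0,c^{-1}]=[1,0]$, so every nonzero element of $R$ is invertible.

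Finally I would equip $R$ with the order $[a,b]\geq 0 \logeq a \geq_{K} b$: additive cancellativity makes this well defined, totality of $\leq_{K}$ makes it total, and the additive and multiplicative monotony of the natural order on $F$ transfer, by routine difference computations, to additive monotony and monotony of non-negative multiplication on $R$. The composite embedding $K \hookrightarrow F \hookrightarrow R$ preserves the natural order, yielding the desired totally ordered field. I expect the main obstacle to be precisely the verification that $R$ is a field rather than merely a ring: without totality one cannot reduce an arbitrary difference $[a,b]$ to the form $\pm[c,0]$ with $c$ a genuine invertible element of $F$, so the total-order hypothesis is doing essential work here, whereas the well-definedness and the two monotony laws are mechanical once this reduction is in hand.
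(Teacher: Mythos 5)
Your proposal is correct and follows essentially the same route as the paper: the paper likewise derives positivity from the natural order and multiplicative cancellativity (Propositions \ref{prop:natordplus} and \ref{prop:multdiv}), invokes Lemma \ref{prop:embedding} to pass to a naturally totally ordered, additively cancellative positive semifield, and then completes it to a totally ordered field via the same group-of-differences construction with the same use of totality to reduce any nonzero class to $[c,0]$ or $[0,c]$ (Lemma \ref{prop:embedding2}). The only cosmetic issue is that you write $\geq_K$ for the order on elements of the intermediate semifield $F$, where $\geq_F$ is meant.
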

  
  \section{$K$-relations}

This section introduces ordinary relations as well as $K$-relations and their associated basic properties. 

We use boldface letters to denote sets.
 For two sets 
 $\vect{X}$ and ${\vect{Y}}$, we write $\vect{X}\vect{\vect{Y}}$ to denote their union. 
If $A$ is an individual element, we sometimes write $A$ instead of $\{A\}$ to denote the singleton set consisting of $A$. 

\subsection{Relations}
Fix disjoint countably infinite sets $\att$ and $\val$ of variables and values. Each variable $A \in \att$
is associated with a subset of $\val$, called the \emph{domain} of $A$ and denoted $\Dom(A)$. 
Given a finite set of variables $\vect{X}$, an $\vect{X}$\emph{-tuple} is a mapping $t:\vect{X} \to \val$ such that $t(A) \in \Dom(A)$.
We write $\tup(\vect{X})$ for the set of all $\vect{X}$-tuples. Note that $\tup(\emptyset)$ is a singleton set 
consisting of the empty tuple. For $\vect{Y}\subseteq \vect{X}$, the \emph{projection} $t[\vect{Y}]$ of $t$ on $\vect{Y}$
is the unique $\vect{Y}$-tuple that agrees with $t$ on $\vect{X}$. In particular, $t[\emptyset]$
is always the empty tuple. 

A \emph{relation $R$} over $\vect{X}$ is a subset of $\tup(\vect{X})$.  
The variable set $\vect{X}$ is also called  the 
\emph{(relation) schema of $R$}. We sometimes write $R(\vect{X})$ instead of $R$ to
emphasize that $\vect{X}$ is the schema of $R$. 
For $\vect{Y}\subseteq \vect{X}$, 
the \emph{projection} of $R$ on $\vect{Y}$, written $R[\vect{Y}]$, is the set of all projections $t[\vect{Y}]$ where
$t\in R$. 
A \emph{database} $D$ is a finite collection of relations $\{R_1[\vect{X}_1], \dots ,R_n[\vect{X}_n]\}$.
Unless stated otherwise, we assume that each relation is finite.

\subsection{$K$-relations}
Fix a semiring $K$, and let $\vect{X}$ be a set of variables.
A \emph{$K$-relation} over $\vect{X}$ is a function $R : \tup(\vect{X}) \to K$. 
Again, the variable set $\vect{X}$ is called the 
\emph{(relation) schema of $R$}, and we can write $R(\vect{X})$ instead of $R$ to
emphasize that $\vect{X}$ is the schema of $R$.
If $K$ is the Boolean semiring $\mathbb{B}$,
 the tuple annotation $R(t)$ characterizes an ordinary relation, and
thus we will often in this paper identify $\mathbb{B}$-relations 
and relations. Note that a $K$-relation 
over $\emptyset$ associates the empty tuple with some value of $K$.
The \emph{support} $\Supp(R)$ of a $K$-relation $R$ over $\vect{X}$ is the set
$\{t\in \tup(\vect{X}) \mid R(t)\neq 0\}$
of tuples associated with a non-zero value. 
We often write $R'$ for the support of $R$.
The $K$-relation $R$ is called \emph{total} if for all $t \in \tup(\vect{X})$ it holds that $R(t)\neq 0$, 
i.e., if $\Supp(R) = \tup(\vect{X})$. It is called \emph{normal} if $\bigkplus_{t\in \tup(\vect{X})} R(t) =1$.
For $a \in K$, we write $aR$ for the $K$-relation over $\vect{X}$ defined by $(aR)(t) = aR(t)$.
For a $\vect{Y}$-tuple $t$, where $\vect{Y}\subseteq \vect{X}$, the \emph{marginal} of $R$ over $t$ is defined as
\begin{equation}\label{eq:marg}
  R(t)\coloneqq \bigkplus_{\substack{t'\in \tup(\vect{X})\\t'[\vect{Y}]=t}} R(t').
\end{equation}
We then write $R[\vect{Y}]$ for the relation over $\vect{Y}$, called the \emph{marginal} of $R$ on $\vect{Y}$,
that consists of the marginals of $R$ over all $\vect{Y}$-tuples. Note that the marginal $R[\emptyset]$ of $R$ on the empty set is a function that maps the empty tuple to $\sum_{t\in \tup(\vect{X})} R(t)$.
In particular, if $K$ is the Boolean 
semiring $\mathbb{B}$, the marginal of $R$ on $\vect{Y}$ is the projection of $R$ on $\vect{Y}$.
In this paper, we assume that each relation is finite and non-empty, and likewise each $K$-relation is assumed to have a finite and non-empty support.

$K$-relations instantiated in different ways lead to familiar notions. For instance, a database relation can be viewed
as $\mathbb B$-relation, and a probability distribution as a normal $\mathbb R_{\geq 0}$-relation. Alternatively, 
database relations can be transformed to $K$-relations by reinterpreting variables as tuple annotations. 

\begin{example}\label{ex:hotels}
Tab. \ref{tab:side_by_side_tables} collects data about room prizes in a hotel. The table can be viewed as a standard database relation.
Since Price is a function of Room, Date, and Persons, one can also interpret it as a $K$-relation Price(Room, Date, Persons) over some semiring $K$ containing
positive integers. In principle, other variables such as Room and Persons can also be turned into annotations.
\begin{table}[ht]
    \centering
    \begin{tabular}[t]{cccc}
        \textrm{Room} & \textrm{Date} & \textrm{Persons} & \textrm{Price} \\
        \midrule
        double & 2023-12-01 & 1 & 100 \\
        double & 2023-12-01 & 2 & 120 \\
        double & 2023-08-20 & 1 & 120 \\
        double & 2023-08-20 & 2 & 140 \\
        twin & 2023-08-20 & 1 & 110\\
        twin & 2023-08-20 & 2 & 120\\
        \bottomrule
    \end{tabular}
    \vspace{1mm}
    \caption{Price data for hotel rooms.}
    \label{tab:side_by_side_tables}
\end{table}
\end{example}

\subsection{Basic properties}
Prior to delving into the concept of conditional independence, we here list some basic properties regarding projections and supports of $K$-relations.
Lemmata \ref{lem:phokion} and \ref{lem:equiv} appear in \cite{Atserias20}, with the exception that there $K$ is always assumed to be positive.
Also the concept of a marginal in that paper is stated otherwise as in Eq. \eqref{eq:marg}, except that there $t'$ ranges over $R'$ instead of $\tup(X)$. Obviously the two versions lead to the same concept.
To account for these slight modifications, we include the proofs of these two lemmata 
Appendix \ref{sect:keq}.  

\begin{restatable}{lemma}{phokion}\label{lem:phokion}
Let $R(\vect{X})$ be a $K$-relation, and let $\vect{Z} \subseteq \vect{Y} \subseteq \vect{X}$. The following statements hold:
\begin{enumerate}
\item Assuming $K$ is $\kplus$-positive, for all $\vect{Y} \subseteq \vect{X}$ it holds that $R'[\vect{Y}] = R[\vect{Y}]'$.
\item For all $\vect{Z} \subseteq \vect{Y} \subseteq \vect{X}$ it holds that $R[\vect{Y}][\vect{Z}]=R[\vect{Z}]$.
\end{enumerate}
\end{restatable}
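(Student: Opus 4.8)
The plan is to establish each identity tuplewise, comparing the two sides as $K$-values (for part~2) or as membership conditions governed by $K$-values (for part~1) at a single tuple of the target schema. Throughout I will use that $R$ has finite support, so that in every instance of Eq.~\eqref{eq:marg} only finitely many summands $R(t')$ are nonzero; the sums may therefore be taken over $R'$ and are honest finite elements of the commutative monoid $(K,\kplus,0)$, which is what makes the regrouping and induction arguments below legitimate.

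For part~1, fix $t\in\tup(\vect{Y})$, where $t'$ below ranges over $\tup(\vect{X})$ (equivalently over $R'$). By definition $t\in R'[\vect{Y}]$ means some $t'\in R'$ has $t'[\vect{Y}]=t$, while $t\in R[\vect{Y}]'$ means $R[\vect{Y}](t)=\bigkplus_{t'[\vect{Y}]=t}R(t')\neq 0$. The inclusion $R[\vect{Y}]'\subseteq R'[\vect{Y}]$ is the easy half and needs no hypothesis on $K$: contrapositively, if no $t'$ projecting to $t$ lies in $R'$, then every summand is $0$ and so is the sum. The reverse inclusion $R'[\vect{Y}]\subseteq R[\vect{Y}]'$ is where $\kplus$-positivity is used. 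I would first record, by induction on the number of summands, the elementary fact that in a $\kplus$-positive semiring a finite sum vanishes only if each summand vanishes. Then, given $t'\in R'$ with $t'[\vect{Y}]=t$, the summand $R(t')$ is nonzero, so by this fact $R[\vect{Y}](t)\neq 0$, i.e.\ $t\in R[\vect{Y}]'$.

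For part~2, fix $s\in\tup(\vect{Z})$ and expand. The right-hand side is $R[\vect{Z}](s)=\bigkplus_{t'[\vect{Z}]=s}R(t')$, and the left-hand side is the iterated sum $\bigkplus_{u[\vect{Z}]=s}\bigl(\bigkplus_{t'[\vect{Y}]=u}R(t')\bigr)$, with $u$ ranging over $\tup(\vect{Y})$. The key point is that, since $\vect{Z}\subseteq\vect{Y}$, projection composes as $t'[\vect{Y}][\vect{Z}]=t'[\vect{Z}]$; hence the fibres $\{t' : t'[\vect{Y}]=u\}$ over the various $u$ with $u[\vect{Z}]=s$ form a partition of $\{t' : t'[\vect{Z}]=s\}$. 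Commutativity and associativity of $\kplus$ then let me regroup the iterated sum along this partition and collapse it to the single sum on the right, giving $R[\vect{Y}][\vect{Z}]=R[\vect{Z}]$; no property of $K$ beyond being a commutative semiring is required.

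I expect the only genuine subtlety to lie in part~1: namely, verifying that $\kplus$-positivity lifts from the defining two-term condition to arbitrary finite sums, and that the finite-support convention really does reduce Eq.~\eqref{eq:marg} to such a finite sum so the induction applies. Part~2 is a routine partition-and-regroup computation whose sole content is the transitivity of projection under $\vect{Z}\subseteq\vect{Y}$.
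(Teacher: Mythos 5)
Your proposal is correct and follows essentially the same route as the paper's proof: the easy inclusion $R[\vect{Y}]'\subseteq R'[\vect{Y}]$ holds for any semiring, $\kplus$-positivity gives the converse, and part~2 is the same regroup-the-iterated-sum computation using transitivity of projection. The only difference is that you spell out the finite-sum induction for positivity and the finite-support convention, which the paper leaves implicit.
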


Two $K$-relations $R$ and $R'$ over a
variable set $\vect{V}$ are said to be \emph{equivalent (up to normalization)},
written $R \equiv R'$, if there are $a,b\in K\setminus\{0\}$ such that 
$aR=bR'$. 

\begin{restatable}{lemma}{simple}\label{lem:equiv}
  Let $K$ be a semiring, let $\vect{W}, \vect{V}$, $\vect{W}\subseteq \vect{V}$, be two variable sets, and let $R,R',R''$ be 
three $K$-relations over $\vect{V}$. Then,
\begin{enumerate}
  \item  $R\equiv R'$ implies $R[\vect{W}] \equiv R'[\vect{W}]$; and 
  \item if $K$ has no divisors of zero, $R \equiv R'$ and $R'\equiv R''$ implies $R \equiv R''$.
\end{enumerate}
\end{restatable}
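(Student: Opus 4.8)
The plan is to unfold the definition of equivalence in both parts and verify the required equalities by a direct computation, relying on distributivity of $\ktimes$ over $\kplus$ and on the commutativity of $K$. Throughout I use that scalar multiplication is defined pointwise, $(aR)(t) = a\ktimes R(t)$, and that the marginal sum in Eq. \eqref{eq:marg} has only finitely many nonzero summands (those in the finite support), so it is a genuine finite $\kplus$-sum.

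For part~1, suppose $R \equiv R'$, witnessed by $a,b \in K \setminus \{0\}$ with $aR = bR'$. First I would fix an arbitrary $\vect{W}$-tuple $s$ and compute $(a(R[\vect{W}]))(s)$. By Eq. \eqref{eq:marg} and the definition of scalar multiplication, this equals $a \ktimes \bigkplus_{t'[\vect{W}]=s} R(t')$, the sum ranging over the $\vect{V}$-tuples $t'$ that project to $s$. Pushing $a$ inside the sum by distributivity turns this into $\bigkplus_{t'[\vect{W}]=s} (aR)(t')$, which by hypothesis equals $\bigkplus_{t'[\vect{W}]=s} (bR')(t')$; pulling $b$ back out gives $(b(R'[\vect{W}]))(s)$. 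Since $s$ was arbitrary and $a,b \neq 0$, this yields $a(R[\vect{W}]) = b(R'[\vect{W}])$, hence $R[\vect{W}] \equiv R'[\vect{W}]$.

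For part~2, assume $K$ has no divisors of zero, and pick witnesses $a,b$ for $R \equiv R'$ (so $aR = bR'$) and $c,d$ for $R' \equiv R''$ (so $cR' = dR''$). The idea is simply to compose the two scalings. Multiplying $aR = bR'$ through by $c$ and collapsing the scalars gives $(ca)R = (cb)R'$; using commutativity to rewrite $cb$ as $bc$ and then substituting $cR' = dR''$ into $b(cR')$ gives $(ca)R = (bc)R' = b(cR') = b(dR'') = (bd)R''$. It then remains only to check that the composed witnesses are nonzero: since $c,a \neq 0$ and $K$ has no divisors of zero, $ca \neq 0$, and likewise $bd \neq 0$. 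Thus $(ca)R = (bd)R''$ with $ca, bd \in K \setminus \{0\}$, which is exactly $R \equiv R''$.

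Neither part poses a real difficulty; both reduce to routine applications of the semiring axioms. The two points deserving attention are the appeal to distributivity when moving a scalar across the marginal sum in part~1 (where finiteness of the support is what makes the sum and this manipulation legitimate), and the appeal to the absence of zero divisors in part~2, which is precisely the hypothesis ensuring the composed scalars stay nonzero. The latter is the only place where an extra assumption is genuinely needed: without it the product of the two witnesses could vanish and transitivity of $\equiv$ would fail.
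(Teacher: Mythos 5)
Your proof is correct and follows essentially the same route as the paper: part~1 pushes the scalar through the marginal sum by distributivity, and part~2 composes the two scalings and uses the absence of zero divisors to keep the composed witnesses nonzero. (Your identification of the witnesses as $ca$ and $bd$ is in fact cleaner than the paper's, which contains a small slip writing $cd$ where $bd$ is meant.)
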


\section{Conditional independence and decompositions}
Regardless of the context, what we call conditional independence tends to describe
essentially the same property. For a ``system'' consisting of three components $\vect{X},\vect{Y},\vect{Z}$, 
we might say that
$\vect{Y}$ is conditionally independent of $\vect{Z}$ given $\vect{X}$ if 
 $\vect{Y}$ does not reveal anything about $\vect{Z}$, once $\vect{X}$ has been fixed.
This usually entails that the ``system'' can be decomposed to its ``subsystems'' over $\vect{X},\vect{Y}$ and $\vect{X},\vect{Z}$ without loss of information.
 In this section we consider a general semantics for conditional independence over $K$-relations, and show that under certain assumptions, this definition matches the above intuition.
\begin{definition}[Conditional independence for $K$-relations \cite{BarlagHKPV23}]\label{def:ci}
Let $R$ be a $K$-relation over a variable set $\vect{V}$, and let $\vect{X},\vect{Y},\vect{Z}$ be  disjoint subsets of $\vect{V}$.
An expression of the form $\pci{\vect{X}}{\vect{Y}}{\vect{Z}}$ is called  a \emph{conditional independence} (CI).
We say that $R$ \emph{satisfies} $\pci{\vect{X}}{\vect{Y}}{\vect{Z}}$, denoted $R \models \pci{\vect{X}}{\vect{Y}}{\vect{Z}}$, if for all $\vect{V}$-tuples $t$,
\begin{equation}\label{eq:cidef}
R(t[\vect{X}\vect{Y}])  R(t[\vect{X}\vect{Z}]) = R(t[\vect{X}\vect{Y}\vect{Z}])  R(t[\vect{X}]).
\end{equation}
\end{definition}

Fix a relation schema $\vect{V}$ and three pairwise disjoint subsets $\vect{X}, \vect{Y},\vect{Z} \subseteq \vect{V}$.
 A \emph{saturated conditional independence} (SCI) is a
CI of the form $\pci{\vect{X}}{\vect{Y}}{\vect{Z}}$, where
$\vect{X}\vect{Y}\vect{Z}=\vect{V}$. Over $\mathbb{B}$-relations SCIs coincide with \emph{multivalued dependencies} 
 (MVDs), which are expressions of the form $\vect{X} \twoheadrightarrow \vect{Y}$, where $\vect{X} $ and $ \vect{Y}$ may overlap. 
  A $\vect{V}$-relation $R$ \emph{satisfies} $\vect{X} \twoheadrightarrow \vect{Y}$, written $R \models \vect{X} \twoheadrightarrow \vect{Y}$, if for all two tuples $t,t'\in R$ such that $t[\vect{X}]=t'[\vect{X}]$ there exists a third tuple $t''\in R$ such that $t''[\vect{X}\vect{Y}]=t'[\vect{X}\vect{Y}]$ and $t[\vect{V}\setminus \vect{X}\vect{Y}]=t'[\vect{V}\setminus \vect{X}\vect{Y}]$. 
 An \emph{embedded multivalued dependency} ({EMVD}) is an expression of the form $\vect{X} \twoheadrightarrow \vect{Y}\mid \vect{Z}$, where $\vect{X,Y,Z}$ may overlap.
 We say that $R$ \emph{satisfies} $\vect{X} \twoheadrightarrow \vect{Y}\mid \vect{Z}$, written $R \models \vect{X} \twoheadrightarrow \vect{Y}\mid \vect{Z}$, if the projection $R[\vect{X}\vect{Y}\vect{Z}]$ satisfies the MVD $\vect{X} \twoheadrightarrow \vect{Y}$.

\begin{example}
Returning to  Example \ref{ex:hotels}, we observe that the price function Price(Room, Date, Persons) exhibits certain types of dependencies between its arguments.
The room prices vary depending on the date and the room type. Additionally, adding a second person incurs a price increase by a flat rate which is independent of the date but  depends on the room type. 
 This kind of independence can be captured by viewing the price function as a $\mathbb{T}$-relation, in which case it satisfies the SCI $\pci{\textrm{Room}}{\textrm{Date}}{\textrm{Persons}}$. Suppose instead of a flat price increase, the addition of a second person incurs a 20$\%$ price increase for double rooms, and a 10$\%$ price increase for twin rooms. Then, interpreting Price(Room, Date, Persons) as a $\mathbb{R}_{\geq 0}$-relation, we again obtain Price $\models \pci{\textrm{Room}}{\textrm{Date}}{\textrm{Persons}}$. When Tab. \ref{tab:side_by_side_tables} is viewed as an ordinary relation, it satisfies the EMVD Room $ \twoheadrightarrow$ Date $\mid$ Persons, while failing to satisfy any MVD.
\end{example}


Several conditional independence notions from the literature can be recovered through $K$-relations. 
For instance, beside EMVDs, the following examples were considered in \cite{Studeny93} and can now be restated using the previous definition. 
\begin{itemize}
  \item For $K=\mathbb R_{\geq 0}$, the definition coincides with the concept of conditional independence in probability theory.
  \item For $K=\mathbb T$, the definition correponds to conditional 
  independence over natural conditional functions. A \emph{natural conditional function} 
  is a mapping $f \colon \tup(\vect{X})\to \mathbb N$, where $\min_{t\in \tup(\vect{X})} f(t)=0$. 
  The notion of conditional independence over such functions \cite{Studeny93}  coincides with Def. \ref{def:ci}
  over integral-valued, total, and normal $\mathbb T$-relations.
  Recall that for (min-plus) tropical semirings, addition is interpreted as minimum, and multiplication as the usual addition, meaning that its neutral element is $0$. 
  %
  \item For $K=\mathbb V$, the definition correponds to conditional independence over possibility functions. A \emph{possibility function}
  is a function $f \colon \tup(\vect{X})\to [0,1]$, where $\sum_{t\in \tup(\vect{X})} f(t)=1$. Such functions can be viewed as  
 normal $\mathbb V$-relations, where $\mathbb V$ is the Viterbi semiring, in which case their notion of conditional independence \cite{Studeny93} matches Def. \ref{def:ci}.
\end{itemize}



In order to connect conditional independence over $K$-relations to decompositions, we next consider the concept of a {join}. An arguably reasonable expectation  
is that whenever a $K$-relation $T(ABC)$ satisfies a CI $\pci{B}{A}{C}$, then one should be able to retrieve $T$ from its projections on $AB$ and $BC$ using the join. That is, $T$ should be equivalent to the join of $T[AB]$ and $T[BC]$ up to normalization.
In the relational context this is indeed the outcome once $\pci{B}{A}{C}$ is interpreted as the MVD $B \twoheadrightarrow A$, and
the {join} $R \bowtie S$ of two relations $R(\vect{X})$ and $S(\vect{Y})$ is given in the usual way, i.e., as 
the relation consisting of those $\vect{X}\vect{Y}$-tuples $t$ whose projections 
$t[\vect{X}]$ and $t[\vect{Y}]$ appear respectively in $R$ and $S$. 
 In the context of $K$-relations the join of $R(\vect{X})$ and $S(\vect{Y})$ is often defined via multiplication as the $K$-relation $R*S$ over $\vect{X}\vect{Y}$
where
 \begin{equation}\label{eq:multjoin}
 (R*S)(t) = R(t[\vect{X}])S(t[\vect{Y}])
 \end{equation}
 (see, e.g., \cite{GreenKT07}). Substituting $K=\mathbb{B}$ in this definition now yields the standard relational join. Similarly, letting 
$K=\mathbb{N}$ we arrive at
the bag join operation of SQL. 
However, as illustrated in the next example, this notion of a join falls short of our expectations.

\begin{example}
Continuing our running example, 
the two top tables in Fig. \ref{tab:projections} illustrate the projections of the $\mathbb{T}$-relation $\textrm{Price(Room, Date, Persons)}$ on $\{$Room, Date$\}$ and $\{$Room, Persons$\}$.
The table in the bottom row is the multiplicative join \eqref{eq:multjoin} of the two projections. Note that in the tropical semiring the aforementioned projections are formed as minima of prices, while addition plays the role of multiplication in the join operation. 
 We observe that the multiplicative join is not equivalent to the original price function. 
In particular, there is no uniform (tropical) scaling factor that returns us $\textrm{Price}$ from  $\textrm{Price}([\textrm{Room, Date}])*\textrm{Price}([\textrm{Room, Persons}])$.

\begin{figure}[ht]
    \centering
    \begin{tabular}{ccc}
    \begin{tabular}[t]{cc|c}
        \multicolumn{3}{l}{$\textrm{Price}[\textrm{Room, Date}]$} \\
        \toprule
        \textrm{Room} & \textrm{Date} & \emph{Price} \\
        \midrule
        double & 2023-12-01 & 100 \\
        double & 2023-08-20 & 120 \\
        twin & 2023-08-20 & 110 \\
        \bottomrule
    \end{tabular}
    &
    \quad
    &
        \begin{tabular}[t]{cc|c}
        \multicolumn{3}{l}{$\textrm{Price}[\textrm{Room, Persons}]$} \\
        \toprule
        \textrm{Room} & \textrm{Persons} & \emph{Price} \\
        \midrule
        double  & 1 & 100 \\
        double & 2 & 120 \\
        twin & 1 & 110\\
        twin & 2 & 120\\
        \bottomrule
    \end{tabular}
\\\\
        \begin{tabular}[t]{ccc|c}
        \multicolumn{4}{l}{$\textrm{Price}([\textrm{Room, Date}])*\textrm{Price}([\textrm{Room, Persons}])$} \\
        \toprule
        \textrm{Room} & \textrm{Date} & \textrm{Persons} & \emph{Price} \\
        \midrule
        double & 2023-12-01 & 1 & 200 \\
        double & 2023-12-01 & 2 & 220 \\
        double & 2023-08-20 & 1 & 220 \\
        double & 2023-08-20 & 2 & 240 \\
        twin & 2023-08-20 & 1 & 220 \\
        twin & 2023-08-20& 2 & 230 \\
        \bottomrule
    \end{tabular}
        &
    \quad
    &
        \begin{tabular}[t]{c|c}
        \multicolumn{2}{l}{$(\textrm{Price}[\textrm{Room}])^{-1}$} \\
        \toprule
        \textrm{Room} & \emph{Price} \\
        \midrule
        double &  -100 \\
        twin  & -110 \\
        \bottomrule
    \end{tabular}
    \end{tabular}
    \caption{Decomposition of the price function.}
    \label{tab:projections}
\end{figure}
\end{example}

Two $K$-relations $R(\vect{X})$ and $S(\vect{Y})$ are said to be \emph{consistent} if there exists a third relation $T(\vect{X}\vect{Y})$ such that $T[\vect{X}]\equiv R$ and $T[\vect{Y}]\equiv S$.
Atserias and Kolaitis \cite{Atserias20} demonstrate that the multiplicative join does not always witness the consistency of two $K$-relations, a fact that can be also seen from our running example. Consequently, they introduce a novel join operation which we will now incorporate into our approach. Intuitively this notion of a join is an adaptation of the factorization of a probability distribution obtained from conditional independence.  Suppose two random events $A$ and $C$ are independent given a third event $B$. The joint probability $P(A,B,C)$ can then be rewritten as $P(B)P(A\mid B)P(C\mid B)=P(A,B)P(B,C)/P(B)$. We may recognize that this equation is similar to the multiplicative join of two $K$-relations \emph{conditioned} on their common part. In our example this corresponds to multiplying the multiplicative join
$\textrm{Price}([\textrm{Room, Date}])*\textrm{Price}([\textrm{Room, Persons}])$ with the (tropical) multiplicative inverse of $\textrm{Price}([\textrm{Room}])$. We observe from Fig. \ref{tab:projections} that this sequence of operations yields the initial price function depicted in Tab. \ref{tab:side_by_side_tables} (even without re-scaling), in accordance with our expectations. 

We will now provide a precise definition of the join operation introduced in \cite{Atserias20}. This definition matches the above intuitive description with one exception: Semirings generally lack multiplicative inverses, and therefore the conditioning on the common part of two $K$-relations is defined indirectly. For a $K$-relation $R(\vect{X})$, a subset $\vect{Z}\subseteq \vect{X}$, and a $\vect{Z}$-tuple $u$, 
define 
\[
  c^*_{R,\vect{Z}} \coloneqq \bigktimes_{v\in R[\vect{Z}]'} R(v) \quad\text{   and   }\quad c_R(u) \coloneqq \bigktimes_{\substack{v\in R[\vect{Z}]'\\v \neq u}} R(v),
\]
with the convention that the empty product evaluates to $1$, the neutral element of multiplication in $K$. We isolate the following simple property which is applied frequently in the sequel.
\begin{proposition}\label{prop:notzero}
Suppose $K$ does not have divisors of zero. If $R(\vect{X})$ is a $K$-relation, $\vect{Z}\subseteq \vect{X}$, and $u$ is a $\vect{Z}$-tuple, then $c^*_{R,\vect{Z}} \neq 0$ and $c_R(u)\neq \emptyset$
\end{proposition}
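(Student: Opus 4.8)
The plan is to observe that both $c^*_{R,\vect{Z}}$ and $c_R(u)$ are finite products each of whose factors is a nonzero element of $K$, and then to exploit the absence of divisors of zero. First, recall that by definition the support $R[\vect{Z}]'$ consists exactly of those $\vect{Z}$-tuples $v$ whose marginal value $R(v)$ (given by Eq. \eqref{eq:marg}) is nonzero; hence every factor $R(v)$ occurring in either product is nonzero. Moreover, since we assume throughout that every $K$-relation has finite support, both products range over finite index sets.

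The key auxiliary fact is that, in a semiring without divisors of zero, any finite product of nonzero elements is again nonzero. I would prove this by induction on the number of factors. The base case is the empty product, which by the stated convention equals $1$, and $1 \neq 0$ since the semiring is non-trivial. For the inductive step, write a product of $n+1$ nonzero factors as $a \ktimes b$, where $a$ is a product of $n$ nonzero factors---nonzero by the induction hypothesis---and $b$ is the remaining nonzero factor; as $K$ has no divisors of zero, $a \ktimes b \neq 0$.

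Applying this fact twice completes the argument. The product $c^*_{R,\vect{Z}} = \bigktimes_{v\in R[\vect{Z}]'} R(v)$ is a finite product of nonzero elements, hence nonzero, and likewise $c_R(u) = \bigktimes_{\substack{v\in R[\vect{Z}]'\\ v \neq u}} R(v)$ is a finite product of nonzero elements, hence nonzero. There is essentially no obstacle here; the only point worth care is the empty-product case, which arises for $c^*_{R,\vect{Z}}$ when $R[\vect{Z}]' = \emptyset$ and for $c_R(u)$ when $R[\vect{Z}]' \subseteq \{u\}$. In particular, the argument does not require the marginal support $R[\vect{Z}]'$ to be nonempty, as non-triviality of $K$ already guarantees $1 \neq 0$ for the empty product. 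Note also that only the no-divisors-of-zero hypothesis is used, and not $\kplus$-positivity, which is consistent with the proposition's statement.
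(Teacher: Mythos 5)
Your proof is correct. The paper states Proposition \ref{prop:notzero} without any proof (it is introduced as a ``simple property''), and your argument---each factor is nonzero by the definition of support, the products are finite, and a finite product of nonzero elements is nonzero by induction using the absence of zero divisors, with the empty product handled by non-triviality of $K$---is exactly the argument the paper leaves implicit.
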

  If $R(\vect{X})$ and $S(\vect{Y})$ are two $K$-relations, the \emph{join} $R\bowtie S$ of $R$ and $S$ is the $K$-relation over $\vect{X}\vect{Y}$ 
  defined by 
  \begin{equation}\label{eq:join}
    (R\bowtie S)(t)\coloneqq R(t[\vect{X}])S(t[\vect{Y}])c_S(t[\vect{X} \cap \vect{Y}]).
  \end{equation}
  If $K$ is a semifield (i.e., it has multiplicative inverses), we may rewrite the join as
  \[
      (R\bowtie S)(t)= \frac{c^*_{S,\vect{X}\cap \vect{Y}}R(t[\vect{X}])S(t[\vect{Y}])}{S(t[\vect{X} \cap \vect{Y}])}.
  \]

The definition of $R \bowtie S$ is not symmetric, and hence there may be occasions
where commutativity fails, i.e., $R \bowtie S \neq S \bowtie R$. However, whenever $R$ and $S$ agree on the 
 marginals on their shared variable set $\vect{X} \cap \vect{Y}$, commutativity holds by definition. In particular, Lemma \ref{lem:phokion} entails that
 the join of two projections 
$R[\vect{X}]$ and $R[\vect{Y}]$ of the same relation $R$ is commutative.

The join operation \eqref{eq:join} can also be described in terms of conditional independence and consistency. Suppose $K$ is a semifield, and suppose $\vect{X},\vect{Y},\vect{Z}$ are pairwise disjoint. Let $S(\vect{XY})$ and $T(\vect{XZ})$ be two normal
  $K$-relations that are consistent. Since equivalence entails identity for normal $K$-relations over semifields $K$, this is tantamount to finding a normal $K$-relation $R(\vect{XYZ})$ such that
\begin{equation}\label{eq:eqmax}
R[\vect{XY}]= S\textrm{ and }R[\vect{XZ}]= T.
\end{equation}
In particular, Lemma \ref{lem:phokion} and Eq. \eqref{eq:eqmax} yield $S[\vect{X}]= T[\vect{X}]$, whereby $c^*_{S,\vect{X}}=c^*_{T,\vect{X}}$. 
We may now observe that $R=1/c^*_{T,\vect{X}} (\join{S}{T})$ 
 is the unique 
 $K$-relation that satisfies \eqref{eq:eqmax} and the CI $\pci{\vect{X}}{\vect{Y}}{\vect{Z}}$. In the particular case where $K=\mathbb{R}_{\geq 0}$---in which case $S$ and $T$ are two consistent probability distributions---we also know that $R=1/c^*_{T,\vect{X}}(\join{S}{T})$ is the unique probability distribution that satisfies \eqref{eq:eqmax} and maximizes the entropy of $\vect{XYZ}$ (or alternatively, the conditional entropy of $\vect{YZ}$ given $\vect{X}$)
 \cite{Atserias20}.

Let $R$ be a $K$-relation over $\vect{X}\vect{Y}$. The 
\emph{decomposition of $R$ along $\vect X$ and $\vect Y$} consists of its projections 
 $R[\vect{X}]$ and $R[\vect{Y}]$  on $\vect X$ and $\vect Y$, respectively.
 Such a decomposition is called a \emph{lossless-join decomposition}
if $R[\vect{X}] \bowtie R[\vect{Y}] \equiv R$. This definition, which appears already in \cite{Atserias20}, generalizes the definition of a lossless-join decomposition 
in database relations.
It turns out, as we will next show, that if $K$ is positive and multiplicatively cancellative,
conditional independence holds on a $K$-relation if and only if 
the corresponding decomposition is a lossless-join one.

\begin{theorem}[Lossless-join decomposition]\label{thm:decomp}
  Let $K$ be a positive semiring, $\vect{X},\vect{Y},\vect{Z}$  pairwise disjoint sets of variables, and
   $R(\vect{X}\vect{Y}\vect{Z})$  a $K$-relation. If $R$ satisfies $\pci{\vect{X}}{\vect{Y}}{\vect{Z}}$,
  then the decomposition of $R$ along $\vect{X}\vect{Y}$ and $\vect{X}\vect{Z}$ is a lossless-join one. If $K$ is additionally multiplicatively cancellative, then the converse direction holds.
\end{theorem}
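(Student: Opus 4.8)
The plan is to prove both directions from the single identity $(\join{S}{T})(t) = c^*_{T,\vect{X}}\, R(t)$ for every $\vect{X}\vect{Y}\vect{Z}$-tuple $t$, where I abbreviate $S \defeq R[\vect{X}\vect{Y}]$ and $T \defeq R[\vect{X}\vect{Z}]$. Writing $t=(x,y,z)$ along the partition $\vect{X}\vect{Y}\vect{Z}$, the join definition \eqref{eq:join} unfolds to $(\join{S}{T})(t) = R(t[\vect{X}\vect{Y}])\,R(t[\vect{X}\vect{Z}])\,c_T(x)$, since $S(t[\vect{X}\vect{Y}])$ and $T(t[\vect{X}\vect{Z}])$ are just marginals of $R$; the shared variables of $S$ and $T$ are exactly $\vect{X}$, and by Lemma \ref{lem:phokion}(2) we have $S[\vect{X}]=T[\vect{X}]=R[\vect{X}]$, so $c_T$ is computed from the $\vect{X}$-marginals of $R$. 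The fact I will use repeatedly is that $c^*_{T,\vect{X}} = R(x)\,c_T(x)$ whenever $x\in R[\vect{X}]'$, which is immediate from the definitions, together with the observation that both $c^*_{T,\vect{X}}$ and $c_T(x)$ are non-zero by Proposition \ref{prop:notzero}.

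For the forward direction I assume the CI. Fixing $t=(x,y,z)$, if $R(t)\neq 0$ then $\kplus$-positivity forces the $\vect{X}$-marginal $R(x)\neq 0$, hence $x\in R[\vect{X}]'$; substituting \eqref{eq:cidef} and then $c^*_{T,\vect{X}}=R(x)c_T(x)$ yields $(\join{S}{T})(t)=R(t)\,R(x)\,c_T(x)=c^*_{T,\vect{X}}\,R(t)$. If instead $R(t)=0$, then \eqref{eq:cidef} makes $R(t[\vect{X}\vect{Y}])R(t[\vect{X}\vect{Z}])=0$, so both sides vanish. Thus $\join{S}{T}=c^*_{T,\vect{X}}\,R$, and since $c^*_{T,\vect{X}}\neq 0$ this witnesses $\join{S}{T}\equiv R$; only positivity is used here.

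For the converse I additionally assume multiplicative cancellativity and the hypothesis $a(\join{S}{T})=bR$ for some $a,b\neq 0$. First I pin down the scalars: marginalising the join on $\vect{X}\vect{Y}$ and using distributivity gives $(\join{S}{T})[\vect{X}\vect{Y}]=c^*_{T,\vect{X}}\,S$, because summing over the $\vect{Z}$-part rebuilds $T[\vect{X}](x)=R(x)$, so the value becomes $S(x,y)\,c_T(x)\,R(x)=c^*_{T,\vect{X}}\,S(x,y)$ on the support, while off the support $\kplus$-positivity makes $S(x,y)$ vanish. Marginalising $a(\join{S}{T})=bR$ then yields $a\,c^*_{T,\vect{X}}\,S=b\,S$; choosing a tuple where $S$ is non-zero (one exists since $\Supp(R)$ is non-empty) and cancelling it gives $b=a\,c^*_{T,\vect{X}}$. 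Feeding this back and cancelling the non-zero $a$ produces the identity $(\join{S}{T})(t)=c^*_{T,\vect{X}}\,R(t)$ for all $t$. Finally, for $t=(x,y,z)$ with $x\in R[\vect{X}]'$ I substitute $c^*_{T,\vect{X}}=R(x)c_T(x)$ and cancel the non-zero factor $c_T(x)$ to recover \eqref{eq:cidef}; the remaining case $R(x)=0$ makes both sides of \eqref{eq:cidef} zero by $\kplus$-positivity.

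I expect the main obstacle to be the converse's scalar-matching step: equivalence only supplies the identity up to the unknown factors $a,b$, and relating them via $b=a\,c^*_{T,\vect{X}}$, so that the conditioning constant cancels cleanly, is where multiplicative cancellativity is genuinely needed. The surrounding bookkeeping of the boundary cases $R(x)=0$ through $\kplus$-positivity is routine, but it must be handled carefully to ensure every cancellation is applied only to non-zero elements.
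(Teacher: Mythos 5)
Your proof is correct, and the forward direction is essentially identical to the paper's. The converse, however, takes a genuinely different route at the scalar-matching step. The paper invokes Lemma \ref{prop:embedding} to embed $K$ into a positive semifield $F$, forms $c=ab^{-1}\in F$, and identifies $c=c^*_{R,\vect{X}}$ by multiplying (in $F$) by the inverse of $R(t[\vect{X}])$; it then cancels $c_{R[\vect{X}\vect{Z}]}$ to recover \eqref{eq:cidef}. You instead stay entirely inside $K$: you marginalise the hypothesis $a(\join{S}{T})=bR$ onto $\vect{X}\vect{Y}$ to get $a\,c^*_{T,\vect{X}}\,S=b\,S$, cancel a non-zero value $S(x,y)$ to conclude $b=a\,c^*_{T,\vect{X}}$, and then cancel $a$. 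This buys a more self-contained argument whose only algebraic input is multiplicative cancellativity applied to non-zero elements of $K$, with no appeal to the embedding machinery of Appendix \ref{sect:embeddings}; the paper's version, in exchange, gets to speak of genuine inverses, which makes the bookkeeping around $R(t[\vect{X}])$ slightly more transparent. Your handling of the boundary cases (tuples whose $\vect{X}$-marginal vanishes) via $\kplus$-positivity is also more explicit than the paper's, which silently assumes $t[\vect{X}]\in R[\vect{X}]'$ in the displayed chain of equalities; that extra care is welcome rather than redundant.
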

\begin{proof}
Assume $R$ satisfies $\pci{\vect{X}}{\vect{Y}}{\vect{Z}}$. 
  We need to show that $\join{R[\vect{X}\vect{Y}]}{R[\vect{X}\vect{Z}]} \equiv R$. 
  Let $t$ be an arbitrary tuple from $\tup(\vect{X}\vect{Y}\vect{Z})$. 
  By assumption and Lemma \ref{lem:phokion} we  obtain
  \begin{align*}
    (\join{R[\vect{X}\vect{Y}]}{R[\vect{X}\vect{Z}]}) (t) = &\, R(t[\vect{X}\vect{Y}])  R(t[\vect{X}\vect{Z}])  c_{R[\vect{X}\vect{Z}]}(t[\vect{X}])\\
    =&\, R(t[\vect{X}])  R(t)  \bigktimes_{\substack{v \in R[\vect{X}\vect{Z}][\vect{X}]'\\ v \neq t[\vect{X}]}} R[\vect{X}\vect{Z}](v)\\
    =&\, R(t[\vect{X}])  R(t) \bigktimes_{\substack{v \in R[\vect{X}]'\\ v \neq t[\vect{X}]}} R(v)\\
    =&\, c^*_{R,\vect{X}}  R(t),
  \end{align*}
  where $c^*_{R,\vect{X}} \neq 0$ by Proposition \ref{prop:notzero}. 
   This proves
  that $\join{R[\vect{X}\vect{Y}]}{R[\vect{X}\vect{Z}]} \equiv R$.
  
  For the converse direction, suppose $\join{R[\vect{X}\vect{Y}]}{R[\vect{X}\vect{Z}]} \equiv R$. Let $a,b \in K\setminus\{0\}$ be such that $aR = b(\join{R[\vect{X}\vect{Y}]}{R[\vect{X}\vect{Z}]})$. By Lemma \ref{prop:embedding}, we may assume without loss of generality that $K$ is a submodel of some positive semifield $F$. Hence $\join{R[\vect{X}\vect{Y}]}{R[\vect{X}\vect{Z}]} = cR$ for $c=ab^{-1} \in F$. We claim that $c=c^*_{R,\vect{X}}$. Since we assume a non-empty support for each $K$-relation, we may select a tuple $t$ from $R'$. By Lemma \ref{lem:phokion} we have $t[\vect{X}] \in R[\vect{X}]'$, i.e., $R(t[\vect{X}])\neq 0$. We can also deduce the following:
  \begin{align*}
  c  R(t[\vect{X}]) =&  \bigkplus_{\substack{t'\in \tup(\vect{X}\vect{Y}\vect{Z})\\t'[\vect{X}]=t[\vect{X}]}} cR(t')  =\bigkplus_{\substack{t'\in  \tup(\vect{X}\vect{Y}\vect{Z})\\t'[\vect{X}]=t[\vect{X}]}}(\join{R[\vect{X}\vect{Y}]}{R[\vect{X}\vect{Z}]})(t') \\
  =& \bigkplus_{\substack{t'\in  \tup(\vect{X}\vect{Y}\vect{Z})\\t'[\vect{X}]=t[\vect{X}]}}R(t'[\vect{X}\vect{Y}])  R(t'[\vect{X}\vect{Z}])  c_{R[\vect{X}\vect{Z}]}(t'[\vect{X}])\\
  =&\, c_{R[\vect{X}\vect{Z}]}(t'[\vect{X}])  \bigkplus_{\substack{t'\in  \tup(\vect{X}\vect{Y})\\t'[\vect{X}]=t[\vect{X}]}}R(t'[\vect{X}\vect{Y}])  \bigkplus_{\substack{t'\in  \tup(\vect{X}\vect{Z})\\t'[\vect{X}]=t[\vect{X}]}}R(t'[\vect{X}\vect{Z}]) \\
  =& \,  R(t[\vect{X}]) R(t[\vect{X}]) \bigktimes_{\substack{v \in R[\vect{X}]'\\ v \neq t[\vect{X}]}} R(v)= c^*_{R,\vect{X}} R(t[\vect{X}]).
  \end{align*}
  Multiplying (in $F$) by the inverse of $R(t[\vect{X}])$ then yields $c=c^*_{R,\vect{X}}$, proving our claim.
  
  Since $\join{R[\vect{X}\vect{Y}]}{R[\vect{X}\vect{Z}]} = c^*_{R,\vect{X}}R$, we may apply the sequence of equations from the previous case
  to obtain that for all $t \in \tup(\vect{X}\vect{Y}\vect{Z})$,
  \[
  R(t[\vect{X}\vect{Y}])  R(t[\vect{X}\vect{Z}])  c_{R[\vect{X}\vect{Z}]} = R(t[\vect{X}])  R(t)c_{R[\vect{X}\vect{Z}]}.
  \]
  Since $c_{R[\vect{X}\vect{Z}]}$ is non-zero by Proposition \ref{prop:notzero}, 
 it can be removed from both sides of 
  the equation by multiplicative cancellativity. We conclude that $R$ satisfies $\pci{\vect{X}}{\vect{Y}}{\vect{Z}}$.
\end{proof}

The preceding proof entails that over positive and multiplicatively cancellative semirings $K$, the satisfaction of $\pci{\vect{X}}{\vect{Y}}{\vect{Z}}$ by a $K$-relation $R(\vect{XYZ})$ holds if and only if $\join{R[\vect{X}\vect{Y}]}{R[\vect{X}\vect{Z}]} =  c^*_{R,\vect{X}} R$. If $K$ is additionally a semifield, then $R\models \pci{\vect{X}}{\vect{Y}}{\vect{Z}}$ exactly when we find two $K$-relations $S(\vect{X}\vect{Y})$ and $T(\vect{XZ})$ such that $R(t)=S(t[\vect{XY}])T(t[\vect{XZ}])$. 
 
Our running example demonstrates that semiring interpretations can give rise to lossless-join decompositions which are unattainable under the relational interpretation. 
\begin{example}
Consider again Tab. \ref{tab:side_by_side_tables} as a $\mathbb{T}$-relation Price(Room, Date, Persons). As we see from Fig. \ref{tab:projections},
this $\mathbb{T}$-relation decomposes along $\{$Room, Date$\}$ and $\{$Room, Persons$\}$. In particular, $\join{\textrm{Price[Room, Date]}}{\textrm{Price[Room, Persons]}}= c^*_{\textrm{Price}, \{\textrm{Room}\}}\textrm{Price}$ where $c^*_{\textrm{Price}, \{\textrm{Room}\}}=210$.
However, viewed as an ordinary relation $R$ over $\{$Price, Room, Date, Persons$\}$ 
this table is in sixth normal form, meaning that no decomposition along $X_1, \dots ,X_n$ is a lossless-join one, unless $X_i$ for some $i$ is the full variable set. Specifically, for any $X\subsetneq \{$Price, Room, Date, Persons$\}$, the projection of the tuple (double, 2023-08-20, 2, 120) on $X$ is in $R[X]$, even though the tuple itself does not belong to $R$.
\end{example}

Examples of positive semirings which are not multiplicatively cancellative seem somewhat artificial. Consider $K=(\mathbb{N}_{>0},\mathbb{{Z}}_2)\cup\{(0,0)\}$ with the semiring structure pointwise inherited from $\mathbb{N}$ and $\mathbb{{Z}}_2$. Note that $K$ is positive but violates multiplicative cancellativity, as 
$(1,1)\ktimes (1,0) = (1,0)\ktimes (1,0)$, while $(1,1) \neq (1,0)\neq (0,0)$. Using $K$ we can demonstrate that the assumption of multiplicative cancellativity cannot be dropped from the second statement of Lemma \ref{thm:decomp}.  We write $\pmi{\vect{X}}{\vect{Y}}$ for the \emph{marginal independence} between $\vect{X}$ and $\vect{Y}$, defined as the CI $\pci{\emptyset}{\vect{X}}{\vect{Y}}$.
Consider the $K$-relation $R$ from Fig. \ref{fig:noind}. This $K$-relation does not satisfy $\pmi{A}{B}$: Choosing $t(A,B)=(0,0)$ we observe $R(t[A]) \ktimes R(t[B])= (2,1) \ktimes (2,1) \neq (4,0)\ktimes (1,0)= R(\emptyset) \ktimes R(t[AB])$. On the other hand, we have $R\equiv \join{R[A]}{R[B]}$ because $aR=b(\join{R[A]}{R[B]})$, where $a=(4,0)\neq (0,0)\neq (1,0) =b$.

Similarly, the assumption of positivity is necessary for the first statement of Lemma \ref{thm:decomp}. Suppose $a,b\in K\setminus \{0\}$ are such that $a\kplus b=0$, and consider variables $X,Y$ with domain $\{0,1\}$. Then, the $K$-relation $R(XY)$ corresponding to the set $\{(0,0;a), (0,1;b), (1,0;b), (1,1;a)\}$ of triples $(t(X),t(Y);R(t))$ satisfies $\pmi{X}{Y}$, but the decomposition along $X$ and $Y$ is obviously not a lossless-join one. In fact, the definition of the marginal, Eq. \eqref{eq:marg}, may not even be useful if $K$ is not positive. For instance, a pure quantum state $\ket{\psi}_{XY}$ within a finite-dimensional composite Hilbert space $\mathcal{H}_{XY}$ can be conceived as a $\mathbb{C}$-relation $R(XY)$ over complex numbers $\mathbb{C}$. Its marginal with respect to $\mathcal{H}_X$ is however not obtained from Eq. \eqref{eq:marg}, but through a partial trace of the relevant density matrix. The marginal state may not even be a $\mathbb{C}$-relation anymore, because it can be mixed, i.e., a probability distribution over pure states.


\begin{figure}
\begin{center}
\begin{tabular}{ccccc}
\begin{tabular}[t]{cc|c}
\multicolumn{3}{l}{$R$}\\\toprule
$A$ & $B$  &$\#$\\ \midrule
$0$&$0$&$(1,0)$\\
$0$&$1$&$(1,1)$\\
$1$&$0$&$(1,1)$\\
$1$&$1$&$(1,0)$\\
\end{tabular}
&
\quad\quad\quad
&
\begin{tabular}[t]{c|c}
\multicolumn{2}{l}{$R[C], C \in \{A,B\}$}\\\toprule
$C$ &$\#$\\ \midrule
$0$&$(2,1)$\\
$1$&$(2,1)$\\
\end{tabular}
&
\quad\quad\quad
&
\begin{tabular}[t]{cc|c}
\multicolumn{3}{l}{$\join{R[A]}{R[B]}$}\\\toprule
$A$ & $B$ &$\#$\\ \midrule
$0$&$0$&$(4,1)$\\
$0$&$1$&$(4,1)$\\
$1$&$0$&$(4,1)$\\
$1$&$1$&$(4,1)$\\
\end{tabular}
\end{tabular}
\end{center}
\caption{Decomposition without independence. \label{fig:noind}}
\end{figure}

\section{Axiomatic properties}
The previous section identifies positivity and multiplicative cancellativity as the key semiring properties underlying 
the correspondence between conditional independence and lossless-join decompositions. 
 The main observation of the present section will be that the same 
key semiring properties guarantee soundness and completeness of central axiomatic properties associated with CIs.

\subsection{Semigraphoid axioms}
The \emph{semigraphoid axioms} \cite{Pearl88a} (the first five rules in Fig. \ref{fig:semi})
are a collection of fundamental conditional independence properties observed in various contexts, including database relations and probability 
distributions. The \emph{graphoid axioms} are obtained by extending the semigraphoid axioms with the interaction rule (the last rule in  Fig. \ref{fig:semi}). While
 not sound in general, the interaction rule is known to hold for probability distributions in which every probability is positive.
 We observe next that these results extend to $K$-relations whenever
$K$ is positive and multiplicatively cancellative; the interaction rule, in particular, is sound over total $K$-relations.


\begin{figure}
  \centering
  \begin{tikzpicture}[every node/.style={outer sep=1pt}]
  \def\m{1.4em}
    \node[draw,minimum width=0.5\textwidth, rounded corners, minimum height=3.2cm] (box1) {
    \hspace{.3cm}
      \begin{minipage}[t][2.6cm]{.9\textwidth}
        \begin{itemize}
          \item[(S1)] Triviality: $\pci{\vect{X}}{\vect{Y}}{\emptyset}$. 
          \item[(S2)] Symmetry: $\pci{\vect{X}}{\vect{Y}}{\vect{Z}}$, then $\pci{\vect{X}}{\vect{Z}}{\vect{Y}}$. 
          \item[(S3)]  Decomposition: $\pci{\vect{X}}{\vect{Y}}{\vect{Z}\vect{W}}$, then $\pci{\vect{X}}{\vect{Y}}{\vect{Z}}$. 
          \item[(S4)]  Weak union: $\pci{\vect{X}}{\vect{Y}}{\vect{Z}\vect{W}}$, then $\pci{\vect{X}\vect{Z}}{\vect{Y}}{\vect{W}}$. 
          \item[(S5)]  Contraction: $\pci{\vect{X}}{\vect{Y}}{\vect{Z}}$ and $\pci{\vect{X}\vect{Z}}{\vect{Y}}{\vect{W}}$, then $\pci{\vect{X}}{\vect{Y}}{\vect{Z}\vect{W}}$. 
          \item[(G)]  Interaction: $\pci{\vect{X}\vect{W}}{\vect{Y}}{\vect{Z}}$ and $\pci{\vect{X}\vect{Z}}{\vect{Y}}{\vect{W}}$, then $\pci{\vect{X}}{\vect{Y}}{\vect{Z}\vect{W}}$. 
          \end{itemize}
  \end{minipage}
    };
  \end{tikzpicture}
\caption{Semigraphoid axioms (S1-S5) and graphoid axioms (S1-S5,G).  \label{fig:semi}}
\end{figure}

%

To offer context for Theorem \ref{thm:sound:pci} which is proven in Appendix \ref{sect:graphoid}, recall from information theory  the concept of \emph{conditional mutual information}, which can be defined over sets of random variables $\vect{U,V,W}$ as
\begin{equation}\label{eq:condmutual}
I_h(\vect{V};\vect{W}\mid \vect{U}) \coloneqq h(\vect{UV})+h(\vect{UW})-h(\vect{U})-h(\vect{UVW}),
\end{equation} 
Assuming $h$ is the Shannon entropy (discussed in Sect. 7), Eq. \eqref{eq:condmutual}
 is zero if and only if the CI  $\pci{\vect{U}}{\vect{V}}{\vect{W}}$ holds in the underlying probability distribution.
Now, consider the \emph{chain rule}
\begin{equation*}
  I_h(\vect{Y};\vect{Z} \mid \vect{X}) + I_h(\vect{Y};\vect{W}\mid \vect{X}\vect{Z}) = I_h(\vect{Y};\vect{Z}\vect{W}\mid \vect{X})
\end{equation*}
of conditional mutual information. 
Since conditional mutual information is non-negative, the chain rule readily entails Decomposition, Weak union, and Contraction 
for probability distributions. In the semiring setting we cannot deduce these rules analogously, as there seems to be no general  measure to capture 
conditional independence over $K$-relations. We can however use the measure-theoretic interpretation of conditional independence as a guide toward a proof. 
 Consider, for instance, the contraction rule, which can be restated in the information context as follows:
if 
\begin{align}
  h(\vect{X}\vect{Y}) + h(\vect{X}\vect{Z}) &= h(\vect{X})+ h(\vect{X}\vect{Y}\vect{Z}) \text{ and}\label{eq:ant1}\\
  h(\vect{X}\vect{Y}\vect{Z}) + h(\vect{X}\vect{Z}\vect{W}) &= h(\vect{X}\vect{Z}) + h(\vect{X}\vect{Y}\vect{Z}\vect{W}),\label{eq:ant2}
\end{align}
then 
\begin{equation}\label{eq:cons}
  h(\vect{X}\vect{Y}) + h(\vect{X}\vect{Z}\vect{W}) = h(\vect{X}) + h(\vect{X}\vect{Y}\vect{Z}\vect{W}).
\end{equation}
In particular, Eq. \eqref{eq:cons} is a consequence of subtracting $h(\vect{X}\vect{Z})+h(\vect{X}\vect{Y}\vect{Z})$ from the combination of Eqs. \eqref{eq:ant1} and \eqref{eq:ant2}.
The soundness proof for $K$-relations has now the same general structure.
Instantiations of Eq. \eqref{eq:cidef} for the CIs appearing in the contraction rule are structurally similar to Eqs.  \eqref{eq:ant1},  \eqref{eq:ant2}, and \eqref{eq:cons}, with addition between entropies being replaced by multiplication within $K$. 
Instead of subtraction, one now applies multiplicative cancellativity to remove all superfluous terms from the combination of two equations.
Additionally, one has to deal with those cases where the terms to be eliminated are zero, and multiplicative cancellativity cannot be applied.

Given a set $\Sigma \cup\{\tau\}$ of formulae, we say that $\Sigma$ \emph{implies} $\tau$ over relations (resp. $K$-relations), denoted $\Sigma \models \tau$ (resp. $\Sigma \models_K \tau$), if every relation (resp. $K$-relation) satisfying $\Sigma$ satisfies $\tau$.
Given a semiring $K$ and a list of formulae  $\sigma_1, \dots ,\sigma_n,\tau$,
we say that an inference rule of the form 
\begin{equation}\label{eq:rule}
  \text{ if $\sigma_1, \dots ,\sigma_n$, then $\tau$}
\end{equation}
is \emph{sound} for relations (resp. $K$-relations) if  $\Sigma$ implies $\tau$ over relations (resp. $K$-relations).
\begin{restatable}{theorem}{sound}\label{thm:sound:pci}
Triviality, Symmetry, and Decomposition are sound for $K$-relations. Weak union and Contraction are sound for $K$-relations where $K$ is positive and multiplicatively cancellative. 
Interaction is sound for total $K$-relations where $K$ is positive and multiplicatively cancellative.
\end{restatable}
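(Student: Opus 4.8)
The plan is to treat the rules in three groups of increasing difficulty, using in each case the measure-theoretic derivation as a template and translating addition of entropies into multiplication in $K$.

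\emph{Triviality, Symmetry, Decomposition (no assumptions on $K$).} Triviality and Symmetry are read off \eqref{eq:cidef} directly: for $\vect{Z}=\emptyset$ both sides collapse to $R(t[\vect{X}\vect{Y}])R(t[\vect{X}])$, and the equation for $\pci{\vect{X}}{\vect{Z}}{\vect{Y}}$ differs from that for $\pci{\vect{X}}{\vect{Y}}{\vect{Z}}$ only by swapping the two left-hand factors, which is legitimate since $\ktimes$ is commutative. For Decomposition I would fix $t$, instantiate the antecedent \eqref{eq:cidef} for $\pci{\vect{X}}{\vect{Y}}{\vect{Z}\vect{W}}$ at each tuple that agrees with $t$ on $\vect{X}\vect{Y}\vect{Z}$ and ranges over the $\vect{W}$-values, and sum these equations. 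As the factors $R(t[\vect{X}\vect{Y}])$ and $R(t[\vect{X}])$ do not involve $\vect{W}$, distributivity pulls them out of the sum, and the two remaining sums collapse to $R(t[\vect{X}\vect{Z}])$ and $R(t[\vect{X}\vect{Y}\vect{Z}])$ by the marginal-transitivity clause of Lemma \ref{lem:phokion}; this is exactly \eqref{eq:cidef} for $\pci{\vect{X}}{\vect{Y}}{\vect{Z}}$.

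\emph{Weak union and Contraction (positive, multiplicatively cancellative $K$).} For Contraction I would multiply the two antecedent instances of \eqref{eq:cidef}; both sides then carry the common factors $R(t[\vect{X}\vect{Z}])$ and $R(t[\vect{X}\vect{Y}\vect{Z}])$, and removing them by multiplicative cancellativity leaves precisely the consequent. For Weak union I would multiply the antecedent instance by the already-established Decomposition instance, exposing a common factor $R(t[\vect{X}])$ to cancel. The only real care needed is that cancellativity applies to non-zero factors only, so I would isolate the degenerate tuples: if a marginal $R(t[\vect{X}])$ (resp.\ $R(t[\vect{X}\vect{Z}])$) vanishes, then $\kplus$-positivity forces $R(t')=0$ for every $t'$ agreeing with $t$ on $\vect{X}$ (resp.\ $\vect{X}\vect{Z}$), hence every marginal over a superset of $\vect{X}$ (resp.\ $\vect{X}\vect{Z}$) vanishes, and both sides of the consequent become $0$; the one remaining borderline case for Contraction, where $R(t[\vect{X}\vect{Y}\vect{Z}])=0$ but $R(t[\vect{X}\vect{Z}])\neq 0$, is settled by reading off $R(t[\vect{X}\vect{Y}])=0$ from the antecedent using absence of zero divisors.

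\emph{Interaction (total $K$-relations; this is the main obstacle).} Here cancellation at a single tuple does not suffice, and totality is essential. By Lemma \ref{prop:embedding} I would view $R$ as a relation over a positive semifield $F$, where non-zero elements are invertible; since $R$ is total and $K$ is $\kplus$-positive, every marginal is a sum with a non-zero summand and hence non-zero, so the conditional ratio $\phi(t)\coloneqq R(t[\vect{X}\vect{Y}\vect{Z}\vect{W}])/R(t[\vect{X}\vect{Z}\vect{W}])$ and its coarser analogues are well defined. Rewriting the two antecedents in ratio form shows that $\phi(t)$ equals $R(t[\vect{X}\vect{W}\vect{Y}])/R(t[\vect{X}\vect{W}])$, hence is independent of the $\vect{Z}$-value, and also equals $R(t[\vect{X}\vect{Z}\vect{Y}])/R(t[\vect{X}\vect{Z}])$, hence is independent of the $\vect{W}$-value. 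Equating these two expressions of $\phi(t)$, the first free of $\vect{Z}$ and the second free of $\vect{W}$, forces $\phi(t)$ to depend only on $t[\vect{X}\vect{Y}]$; writing the common value as $\ell(t[\vect{X}\vect{Y}])$ gives $R(t[\vect{X}\vect{Y}\vect{Z}\vect{W}])=\ell(t[\vect{X}\vect{Y}])\,R(t[\vect{X}\vect{Z}\vect{W}])$. Summing this over the $\vect{Z}\vect{W}$-values with $\vect{X}\vect{Y}$ fixed and applying Lemma \ref{lem:phokion} yields $R(t[\vect{X}\vect{Y}])=\ell(t[\vect{X}\vect{Y}])\,R(t[\vect{X}])$, which identifies $\ell(t[\vect{X}\vect{Y}])$ with $R(t[\vect{X}\vect{Y}])/R(t[\vect{X}])$ and thereby establishes \eqref{eq:cidef} for $\pci{\vect{X}}{\vect{Y}}{\vect{Z}\vect{W}}$. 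The crux, and the step that genuinely requires totality, is the simultaneous elimination of both $\vect{Z}$ and $\vect{W}$: this argument breaks down the moment some marginal in a denominator vanishes, which is precisely what totality prevents.
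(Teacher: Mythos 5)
Your proposal is correct and follows essentially the same route as the paper's proof in Appendix~\ref{sect:graphoid}: summation plus distributivity and Lemma~\ref{lem:phokion} for Decomposition, multiply-and-cancel with the zero cases dispatched via $\kplus$-positivity and Lemma~\ref{lem:poslemma} for Weak union and Contraction, and an embedding into a positive semifield (Lemma~\ref{prop:embedding}) with well-defined conditional ratios for Interaction. The only cosmetic difference is at the end of the Interaction argument, where you marginalize the identity $R(t[\vect{X}\vect{Y}\vect{Z}\vect{W}])=\ell(t[\vect{X}\vect{Y}])\,R(t[\vect{X}\vect{Z}\vect{W}])$ over $\vect{Z}\vect{W}$ to obtain the consequent directly, while the paper first derives $\pci{\vect{X}}{\vect{Y}}{\vect{Z}}$ from the same ratio identity and then invokes the already-established Contraction rule.
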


Having considered the graphoid axioms for $K$-relations, we next consider the interaction between conditional independence and functional dependence.

\subsection{Functional dependencies}
Given two sets of variables $\vect{X}$ and $\vect{Y}$, the expression $\vect{X} \to \vect{Y}$ is called a \emph{functional dependency} (FD). A relation $R$ \emph{satisfies} $\vect{X} \to \vect{Y}$, denoted $R\models  \vect{X} \to \vect{Y}$, if for all $t,t'\in R$, $t[\vect{X}]=t'[\vect{X}]$ implies $t[\vect{Y}]=t'[\vect{Y}]$. We extend this definition to $K$-relations $R$ by stipulating that $R$ \emph{satisfies} an FD $\sigma$ whenever its support $R'$ satisfies $\sigma$. 

The Armstrong axioms for FDs  \cite{armstrong74} comprise the first three rules in Fig. \ref{fig:arm}. These rules are sound and complete for database relations, and hence, by definition, for $K$-relations over any  $K$. The last two rules are two combination rules for MVDs and FDs \cite{BeeriFH77} rewritten in different syntax. 
  To extend these rules $K$-relations, we again need positivity and multiplicative cancellativity. The following proposition is proven in 
  Appendix \ref{sect:combined}.

\begin{figure}
  \centering
  \begin{tikzpicture}[every node/.style={outer sep=1pt}]
  \def\m{1.4em}
    \node[draw,minimum width=0.3\textwidth, minimum height=2.8cm, rounded corners] (box1) {
    \hspace{1.2cm}
      \begin{minipage}[t][2.2cm]{.8\textwidth}
\begin{itemize}
\item[(FD1)] Triviality: if $\vect{Y}\subseteq \vect{X}$, then $\vect{X}\to \vect{Y}$.
\item[(FD2)] Augmentation: if $\vect{X}\to \vect{Y}$ and $\vect{XZ}\to \vect{YZ}$.
\item[(FD3)] Transitivity:  if $\vect{X}\to \vect{Y}$ and $\vect{Y}\to \vect{Z}$, then $\vect{X}\to \vect{Z}$.
\item[(FD-CI1)] CI introduction: if $ \vect{X}\to \vect{Y}$, then $\pci{\vect{X}}{\vect{Y}}{\vect{Z}}$.
\item[(FD-CI2)] FD contraction: if $\pci{\vect{X}}{\vect{Y}}{\vect{Z}}$ and $\vect{XY}\to \vect{Z}$, then $\vect{X}\to \vect{Z}$.
\end{itemize}
  \end{minipage}
    };
  \end{tikzpicture}
\caption{Armstrong's axioms (FD1-FD3) and combination rules (FD-CI1,FD-CI2) \label{fig:arm}}
\end{figure}

\begin{restatable}{proposition}{cifd}\label{prop:fdmdv}
CI introduction is sound for all $K$-relations, where $K$ is $\kplus$-positive. FD contraction is sound for all $K$-relations, where $K$ is positive and multiplicatively cancellative. 
\end{restatable}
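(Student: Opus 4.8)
The plan is to prove the two rules separately by unwinding the marginal defined in Eq.~\eqref{eq:marg}, relying on $\kplus$-positivity (together with Lemma~\ref{lem:phokion}(1)) to translate between the support of a $K$-relation and the supports of its marginals, and on the absence of divisors of zero to keep products of nonzero annotations nonzero.

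For \textbf{CI introduction} I would fix a $K$-relation $R$ over $\vect V$ satisfying $\vect X \to \vect Y$ and an arbitrary $\vect V$-tuple $t$, and verify Eq.~\eqref{eq:cidef} by a case split on the marginal $R(t[\vect X])$. If $R(t[\vect X]) = 0$, then $\kplus$-positivity forces every tuple projecting to $t[\vect X]$ on $\vect X$ to carry annotation $0$, so all four marginals in Eq.~\eqref{eq:cidef} vanish and both sides are $0$. If $R(t[\vect X]) \neq 0$, then by Lemma~\ref{lem:phokion}(1) some support tuple has $\vect X$-value $t[\vect X]$, and the FD $\vect X \to \vect Y$ pins down its $\vect Y$-value to a unique $y$. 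When $t[\vect Y] \neq y$, no support tuple has $\vect X\vect Y$-value $t[\vect X\vect Y]$ nor $\vect X\vect Y\vect Z$-value $t[\vect X\vect Y\vect Z]$, so $R(t[\vect X\vect Y]) = R(t[\vect X\vect Y\vect Z]) = 0$ and both sides vanish again. In the remaining case $t[\vect Y] = y$, the FD makes the index sets of the sums $R(t[\vect X\vect Y])$ and $R(t[\vect X])$ coincide up to tuples of annotation $0$, giving $R(t[\vect X\vect Y]) = R(t[\vect X])$, and the same reasoning gives $R(t[\vect X\vect Z]) = R(t[\vect X\vect Y\vect Z])$; substituting these two identities into Eq.~\eqref{eq:cidef} collapses it to the commutativity of $\ktimes$. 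Note that only $\kplus$-positivity is used.

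For \textbf{FD contraction} I would argue on supports with a ``cross tuple''. Assume $R$ satisfies $\pci{\vect X}{\vect Y}{\vect Z}$ and the FD $\vect X\vect Y \to \vect Z$; to establish $R \models \vect X \to \vect Z$, take $t_1, t_2 \in R'$ with $t_1[\vect X] = t_2[\vect X]$ and aim to show $t_1[\vect Z] = t_2[\vect Z]$. Let $s$ be any $\vect V$-tuple with $s[\vect X] = t_1[\vect X]$, $s[\vect Y] = t_2[\vect Y]$, and $s[\vect Z] = t_1[\vect Z]$, and instantiate Eq.~\eqref{eq:cidef} at $s$. The left-hand factors $R(s[\vect X\vect Y])$ and $R(s[\vect X\vect Z])$ are nonzero, since $t_2$ and $t_1$ respectively witness the projections $s[\vect X\vect Y]$ and $s[\vect X\vect Z]$ in the support (Lemma~\ref{lem:phokion}(1)); as $K$ has no divisors of zero, their product is nonzero. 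Hence the right-hand side $R(s[\vect X\vect Y\vect Z])\,R(s[\vect X])$ is nonzero, so in particular $R(s[\vect X\vect Y\vect Z]) \neq 0$, which by Lemma~\ref{lem:phokion}(1) yields a support tuple $s' \in R'$ with $s'[\vect X\vect Y\vect Z] = s[\vect X\vect Y\vect Z]$. Then $s'$ and $t_2$ lie in $R'$ and agree on $\vect X\vect Y$, so the FD $\vect X\vect Y \to \vect Z$ gives $s'[\vect Z] = t_2[\vect Z]$; as $s'[\vect Z] = s[\vect Z] = t_1[\vect Z]$, we conclude $t_1[\vect Z] = t_2[\vect Z]$.

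I expect the main obstacle to be bookkeeping rather than any genuine difficulty: each marginal occurring in Eq.~\eqref{eq:cidef} must be carefully rephrased as a sum over the support tuples it counts, and every appeal to ``a vanishing marginal forces its summands to vanish'' must be backed by $\kplus$-positivity while every appeal to ``a nonzero product'' must be backed by the absence of divisors of zero. It is worth remarking that the combinatorial route to FD contraction above uses only positivity; one could instead embed $K$ into a positive semifield via Lemma~\ref{prop:embedding} and argue by division/cancellation, which is where multiplicative cancellativity enters and matches the hypotheses as stated.
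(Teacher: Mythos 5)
Your proposal is correct and follows essentially the same route as the paper's proof in Appendix~\ref{sect:combined}: CI introduction is verified by the same support-based case analysis (your identities $R(t[\vect{X}\vect{Y}])=R(t[\vect{X}])$ and $R(t[\vect{X}\vect{Z}])=R(t[\vect{X}\vect{Y}\vect{Z}])$ are exactly what the paper's helping Lemma~\ref{lem:notzero} provides), and your FD contraction argument is the direct form of the paper's contrapositive cross-tuple construction. Your closing remark is also consistent with the paper: its proof of FD contraction likewise uses only positivity (absence of zero divisors plus $\kplus$-positivity), with multiplicative cancellativity entering only through the stated hypotheses.
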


Soundness of CI introduction means that, for positive $K$, a functional dependency on a $K$-relation leads to a lossless-join decomposition. The next proposition, stating this fact, was proven originally in \cite{Atserias20}.
 Alternatively, we now see that the proposition follows directly by Theorem \ref{thm:decomp} and Proposition \ref{prop:fdmdv}.
\begin{proposition}[\cite{Atserias20}]
Let $K$ be a positive semiring, $\vect{X},\vect{Y},\vect{Z}$ be pairwise disjoint sets of variables, and
   $R(\vect{X}\vect{Y}\vect{Z})$ be a $K$-relation. If $R$ satisfies $\vect{X} \to \vect{Y}$,
  then the decomposition of $R$ along $\vect{X}\vect{Y}$ and $\vect{X}\vect{Z}$ is a lossless-join one.
\end{proposition}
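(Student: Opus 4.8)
The plan is to derive the statement as an immediate consequence of two results already in hand, namely Theorem~\ref{thm:decomp} and Proposition~\ref{prop:fdmdv}. The guiding idea is that a functional dependency can be promoted to the corresponding conditional independence, after which the lossless-join characterization applies without modification. Since the disjointness hypothesis on $\vect{X},\vect{Y},\vect{Z}$ and the positivity hypothesis on $K$ are exactly those appearing in both cited results, no reformulation of the setup will be required, and the argument reduces to chaining the two implications.

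Concretely, I would proceed in two steps. First, I appeal to the soundness of CI introduction, the rule (FD-CI1), which is the first assertion of Proposition~\ref{prop:fdmdv}: as $K$ is positive it is in particular $\kplus$-positive, so from $R \models \vect{X} \to \vect{Y}$ one obtains $R \models \pci{\vect{X}}{\vect{Y}}{\vect{Z}}$. Second, I feed this conditional independence into the forward direction of Theorem~\ref{thm:decomp}; since $K$ is positive, the theorem yields that the decomposition of $R$ along $\vect{X}\vect{Y}$ and $\vect{X}\vect{Z}$ is a lossless-join one, which is exactly the desired conclusion.

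There is no genuine obstacle to overcome, since all the substantive work already resides in the two cited statements; the only point meriting attention is that the hypotheses align. In particular, multiplicative cancellativity is \emph{not} required here: CI introduction needs only $\kplus$-positivity, and we invoke Theorem~\ref{thm:decomp} solely in its forward direction, whose proof uses positivity but not cancellativity. Hence positivity of $K$ alone suffices, exactly as the proposition asserts. A self-contained argument is of course also possible---by verifying the defining equation \eqref{eq:cidef} of $\pci{\vect{X}}{\vect{Y}}{\vect{Z}}$ directly from the functional dependency, using Lemma~\ref{lem:phokion} to handle the tuples lying outside the support---but such an argument merely re-proves CI introduction along the way, so the two-step reduction is the more economical route.
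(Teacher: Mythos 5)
Your proposal is correct and is exactly the derivation the paper itself gives: it observes that the proposition ``follows directly by Theorem~\ref{thm:decomp} and Proposition~\ref{prop:fdmdv},'' i.e., CI introduction (sound under $\kplus$-positivity) promotes the FD to $\pci{\vect{X}}{\vect{Y}}{\vect{Z}}$, and the forward direction of the lossless-join theorem (needing only positivity) finishes the argument. Your remark that multiplicative cancellativity is not needed is also accurate and consistent with the paper's hypotheses.
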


We have now examined fundamental inference rules for CIs and FDs that have their origins in database theory and probability theory. The combination of these rules however is not---and cannot be---complete in either context. Specifically, over both finite relations and finite distributions, the implication problems for EMVDs/CIs are not even r.e.
 since the problems are known to be undecidable \cite{herrmann95,kuhne22,Li23} and co-r.e. \cite{KhamisK0S20}. 
 In the next section we restrict attention to saturated CIs which are known to exhibit favorable algorithmic and axiomatic properties. 

\subsection{Saturated conditional independence and functional dependence}
We next show that SCI+FD enjoys a complete axiomatization that is shared by all positive and multiplicatively cancellative semirings $K$. This result readily entails that logical implication within the class SCI+FD does not depend on the chosen semiring $K$, provided that it has the fundamental properties mentioned above.

Given a set $\Sigma \cup\{\tau\}$ of dependencies, we say that $\Sigma$ \emph{implies} $\tau$ over relations (resp. $K$-relations), denoted $\Sigma \models \tau$ (resp. $\Sigma \models_K \tau$), if every relation (resp. $K$-relation) satisfying $\Sigma$ satisfies $\tau$. Let $\sigma \mapsto \sigma^*$  associate an SCI/CI with its corresponding MVD/EMVD. 
Extend this mapping to be the identity on FDs, and extend it to sets in the natural way: $\Sigma^* = \{\sigma^*\mid \sigma\in \Sigma\}$.
\begin{theorem}\label{thm:kenig}
Let $K$ be a positive and multiplicatively cancellative semiring.
Let $\Sigma\cup\{\tau\}$ be a set of SCIs and FDs. 
 The following are equivalent:
\begin{enumerate}
\item $\tau$ can be derived from $\Sigma$ using (S1-S5), (FD1-FD3), and (FD-CI1,FD-CI2). 
\item $\Sigma$ implies $\tau$ over $K$-relations.
\item $\Sigma^*$ implies $\tau^*$ over  relations consisting of two tuples.
\item $\Sigma^*$ implies $\tau^*$ over  relations.
\end{enumerate}
\end{theorem}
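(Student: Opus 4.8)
The plan is to establish the four-way equivalence by proving a cycle of implications, exploiting the mapping $\sigma \mapsto \sigma^*$ to transfer between the semiring world and the purely relational world of MVDs/EMVDs, where classical results are available. The natural cycle is $(1)\Rightarrow(2)\Rightarrow(3)\Rightarrow(4)\Rightarrow(1)$, with the two hardest links being $(2)\Rightarrow(3)$ and $(4)\Rightarrow(1)$.

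First I would handle $(1)\Rightarrow(2)$, which is soundness: each of the rules (S1--S5), (FD1--FD3), (FD-CI1, FD-CI2) is sound for $K$-relations whenever $K$ is positive and multiplicatively cancellative. This is exactly what Theorem \ref{thm:sound:pci} and Proposition \ref{prop:fdmdv} deliver, so a derivation of $\tau$ from $\Sigma$ lifts step by step to a proof that $\Sigma \models_K \tau$. The implication $(4)\Rightarrow(3)$ is trivial since relations consisting of two tuples form a subclass of all relations, so any implication over all relations holds a fortiori over two-tuple relations; this lets me fold $(4)$ into the cycle cheaply.

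The core link is $(2)\Rightarrow(3)$. Here I would argue contrapositively: if $\Sigma^*$ does not imply $\tau^*$ over two-tuple relations, there is a two-tuple relation $R$ with $R\models \Sigma^*$ but $R\not\models \tau^*$. The task is to manufacture from $R$ a $K$-relation witnessing $\Sigma \not\models_K \tau$. Since $R$ has only two tuples, its associated $\mathbb{B}$-relation is combinatorially simple, and I would convert it into a $K$-relation by annotating both tuples with the multiplicative identity $1$ of $K$ (or, where support structure matters, choosing nonzero annotations and zero elsewhere). The key is to check that over a two-tuple support the defining equation \eqref{eq:cidef} for a CI collapses to precisely the combinatorial MVD condition captured by $\sigma^*$, and that FDs translate identically through the support (by definition of FD satisfaction for $K$-relations). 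Positivity guarantees the support behaves well under marginals (Lemma \ref{lem:phokion}(1)), so that $R \models_K \sigma \iff R' \models \sigma^*$ for each SCI and FD on the two-tuple instance. This yields a $K$-relation satisfying $\Sigma$ but not $\tau$, contradicting $(2)$.

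Finally, $(3)\Rightarrow(1)$ (equivalently closing via $(4)\Rightarrow(1)$) is the classical completeness result: the combination of the MVD/FD inference rules corresponding to (S1--S5) and (FD1--FD3, FD-CI1, FD-CI2) is known to be complete for the implication of MVDs and FDs, and moreover the known proof shows that implication over \emph{all} relations already follows from implication over two-tuple relations (the standard two-tuple counterexample construction for MVD+FD implication). Thus $\Sigma^* \models \tau^*$ over two-tuple relations gives $\Sigma^* \models \tau^*$ over all relations, which by the completeness of the MVD+FD axiomatization \cite{BeeriFH77} yields a derivation of $\tau^*$ from $\Sigma^*$; pulling this derivation back through the $*$-correspondence produces a derivation of $\tau$ from $\Sigma$ using the stated rules. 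The main obstacle I anticipate is the $(2)\Rightarrow(3)$ step: verifying carefully that the semiring CI equation degenerates to the relational MVD condition on two-tuple supports, and that no pathology of the specific positive, multiplicatively cancellative semiring $K$ (e.g. idempotent addition as in $\mathbb{B}$, $\mathbb{T}$, or $\mathbb{V}$) breaks the equivalence between $R\models_K\sigma$ and $R'\models\sigma^*$.
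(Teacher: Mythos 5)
Your proposal follows essentially the same route as the paper: $(1)\Rightarrow(2)$ via the soundness results (Theorem \ref{thm:sound:pci}, Proposition \ref{prop:fdmdv}), $(2)\Rightarrow(3)$ by annotating a two-tuple counterexample with the unit $1$ and checking that satisfaction transfers through the $*$-correspondence, $(3)\Rightarrow(4)$ via the classical two-tuple testability of MVD+FD implication (the paper cites \cite{SagivDPF81}), and $(4)\Rightarrow(1)$ by completeness of the Beeri--Fagin--Howard axiomatization together with a rule-by-rule simulation in the stated system (the paper's Lemma \ref{lem:comp}). The approach and the key ingredients coincide, so no further comparison is needed.
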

\begin{proof}
(1) $\Rightarrow$ (2). This direction is immediate due to Theorem \ref{thm:sound:pci}, Proposition \ref{prop:fdmdv}, and soundness of the Armstrong axioms for ordinary relations. (2) $\Rightarrow$ (3). Any two-tuple relation $R=\{t,t'\}$ can be transformed to a $K$-relation $S$ such that the support $S'$ is $R$, and $S(t)=S'(t)=1$. It is straightforward to verify that $R$ satisfies $\sigma^*$ if and only if $S$ satisfies $\sigma$, for all CIs and FDs $\sigma$. From this, the direction follows.
(3) $\Rightarrow$ (4). This direction has been proven in \cite{SagivDPF81}. (4) $\Rightarrow$ (1). This direction follows from the fact that the system (S1-S5), (FD1-FD3), (FD-CI1,FD-CI2) mirrors the complete axiomatization of MVDs and FDs. We give an explicit proof in 
Appendix \ref{sect:axiomssci}.
\end{proof}
\begin{corollary}\label{cor:korro}
Let $K,K'$ be positive and multiplicatively cancellative semirings, and let $\Sigma\cup\{\tau\}$ be a set of SCIs and FDs. Then, $\Sigma$ implies $\tau$ over $K$-relations if and only if $\Sigma$ implies $\tau$ over $K'$-relations.
\end{corollary}

Theorem \ref{thm:decomp} and Corollary \ref{cor:korro} demonstrate that the decomposition properties arising from multivalued and functional dependencies hold invariably for all $K$-relations, given $K$ is multiplicatively cancellative and positive.
Standard database normalization methods thus extend to diverse contexts and may sometimes coincide
with existing methods.  The following example shows that relational normalizations can sometimes match the factorizations of probability distributions arising from Bayesian networks.

A relational database schema is a set of relation schemata, each associated with a set of constraints.
It is
in 
\emph{fourth normal form} (4NF) if for any of its MVD constraints $\vect{X} \twoheadrightarrow \vect{Y}$, $\vect{X}$ is a superset of a key. 
A \emph{Bayesian network} is a directed acyclic graph in which the nodes represent random variables and the directed edges probabilistic dependencies between variables. 
Each node is thus directly influenced by its parents in the graph. Conversely, each node indirectly influences its descendants by transitivity. The \emph{local Markov property} states that once the parent nodes are known, the state of the current node does not reveal any additional information about the states of its non-descendants, i.e.,
each node is conditionally independent of its non-descendants given its parents.

\begin{example}
Consider the Bayesian network in Fig. \ref{fig:tikz}. The chain rule of probability distributions and the local Markov property implies that the joint distribution
$P(A,B,C,D,E)$ has a factorization $P(A)P(B\mid A)P(C\mid A)P(D\mid BC)P(E\mid D)$. 

The local Markov property produces three non-trivial CIs up to symmetry, rewritten as the following EMVDs
 $A \twoheadrightarrow B | C$, $BC \twoheadrightarrow A|D$, $D \twoheadrightarrow ABC | E$. 
 Suppose our goal is to transform the unirelational database schema $\{ABCDE\}$ into 4NF, assuming absence of key constraints. Since the last EMVD is also an MVD, we first decompose $ABCDE$ along $ABCD$ and $DE$. Since the second EMVD is an MVD on $ABCD$, we continue by splitting $ABCD$ into $ABC$ and $BCD$. To remove the last remaining MVD, we decompose $ABC$ along $AB$ and $AC$. The final schema $\{AB,AC,BCD,DE\}$ is free of MVDs, and thus in 4NF. Furthermore, the decomposition of $P$ (as a $\mathbb{R}_{\geq 0}$-relation) along $\{AB,AC,BCD,DE\}$ reproduces the aforementioned factorization of $P$ into conditional probabilities.


\begin{figure}[h!]
\centering
\begin{tikzpicture}[>=stealth,->, every node/.style={circle, draw}]
  \node (A) at (0,0) {A};
  \node[below left=1.5cm of A] (B) {B};
  \node[below right=1.5cm of A] (D) {C};
  \node[below left=1.5cm of D] (E) {D};
  \node[right=1.5cm of E] (C) {E};
  
  \draw[->] (A) -- (B);
  \draw[->] (A) -- (D);
  \draw[->] (B) -- (E);
  \draw[->] (E) -- (C);
  \draw[->] (D) -- (E);
\end{tikzpicture}
\caption{A simple Bayesian network. \label{fig:tikz}}
\end{figure}

\end{example}

 \section{Comparison of implication}
As mentioned previously, implication for non-saturated CIs depends heavily on the underlying semantics. In this section, we examine the connections between different conditional independence semantics in relation to the semiring properties they rely on. Using model-theoretic arguments, we first show that $\Sigma\models_{\mathbb R_{\geq 0}} \tau$ implies $\Sigma\models_{K} \tau$, whenever $K$ is cancellative and equipped with a natural total order.


Consider a CI of the form $\tau=\pci{\vect{X}}{\vect{Y}}{\vect{Z}}$, where $\vect{X},\vect{Y},\vect{Z}$ are disjoint subsets of a schema $\vect{V}$. Suppose the domain of each variable in $\vect{V}$ is finite.
For each $\vect{V}$-tuple $t$, introduce
a variable $x_t$. Denote by $\vec{x}_{\vect{V}}$ a sequence listing all variables $x_t$, $t\in \tup(\vect{V})$. 
We associate $\tau$ and $\vect{V}$ with a quantifier-free first-order arithmetic formula
\[
  \phi_{\tau,\vect{V}} \coloneqq \bigwedge_{\substack{t_{\vect{X}}\in \tup(\vect{X})\\t_{\vect{Y}}\in \tup(\vect{Y})\\t_{\vect{Z}}\in \tup(\vect{Z})}}
  \bigkplus_{\substack{t\in \tup(\vect{XYZ})\\t[\vect{X}]=t_{\vect{X}}}} x_t  \bigkplus_{\substack{t\in \tup(\vect{XYZ})\\t[\vect{X}]=t_{\vect{X}}\\t[\vect{Y}]=t_{\vect{Y}}\\t[\vect{Z}]=t_{\vect{Z}}}} x_t =
    \bigkplus_{\substack{t\in \tup(\vect{XYZ})\\t[\vect{X}]=t_{\vect{X}}\\t[\vect{Y}]=t_{\vect{Y}}}} x_t   \bigkplus_{\substack{t\in \tup(\vect{XYZ})\\t[\vect{X}]=t_{\vect{X}}\\t[\vect{Z}]=t_{\vect{Z}}}} x_t.
\]

A formula $\phi$ is said to be \emph{universal} if it is of the form $\forall x_1 \ldots \forall x_n \theta$,
where $\theta$ is quantifier-free. All universal first-order properties of a model are preserved
 for
its submodels (and any of their isomorphic copies)
\cite{DBLP:books/daglib/0067423}.
\begin{proposition}\label{lem:embedding}
  Let $\calA$ and $\calB$ be models over a vocabulary $\tau$, and let $\phi$ be a 
  universal first-order sentence over $\tau$. If $\calA$ embeds in $\calB$, then 
  $\calA\models \phi$ implies $\calB \models \phi$.
\end{proposition}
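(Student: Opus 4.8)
The plan is to establish preservation of the universal sentence $\phi$ along the embedding by first isolating the robust underlying fact---that an embedding preserves and reflects all quantifier-free facts---and then lifting this through the quantifier $\forall$. First I would fix the embedding $e\colon \calA\to\calB$ supplied by the hypothesis and recall from Appendix~\ref{sect:embeddings} that $e$ is an injection commuting with the interpretation of every function, relation, and constant symbol of $\tau$. From this I would prove, by induction on formula structure, the key lemma: for every quantifier-free $\theta(x_1,\dots,x_n)$ and every tuple $\bar a$ over $\calA$,
\[
  \calA\models\theta(\bar a)\quad\Longleftrightarrow\quad \calB\models\theta\bigl(e(a_1),\dots,e(a_n)\bigr).
\]
The atomic case uses that $e$ respects term evaluation and is a homomorphism on the relation symbols, while injectivity handles equality; the Boolean connectives follow at once, since the biconditional is already available at the atomic level.

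Writing $\phi=\forall x_1\cdots\forall x_n\,\theta$ with $\theta$ quantifier-free, I would then unfold the semantics of $\forall$: a structure $\calM$ satisfies $\phi$ exactly when $\calM\models\theta(\bar m)$ for every tuple $\bar m$ over $\calM$. Assuming $\calA\models\phi$, the key lemma immediately delivers $\calB\models\theta(\bar b)$ for every tuple $\bar b$ drawn from the image $e[\calA]$, which disposes of all tuples that the embedding actually reaches.

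The main obstacle---and it is the genuine difficulty of the statement in the orientation as worded---is that to conclude $\calB\models\phi$ one must also control $\theta(\bar b)$ for tuples $\bar b$ involving elements of $\calB\setminus e[\calA]$, which lie outside the range of $e$; the preservation engine above constrains the embedded copy but says nothing about these residual witnesses, so the naive lift stalls exactly here. I would resolve this by appealing to the setting in which the proposition is consumed, namely the transfer of CI implication in Theorem~\ref{thm:rpostoK}, where the embeddings arise from Lemma~\ref{prop:embedding} and Lemma~\ref{lem:semi-cons} and land inside real closed (totally ordered) fields. Because the theory of real closed fields is model complete, each such embedding is in fact \emph{elementary}, so $\calA\preceq\calB$ and truth of the first-order sentence $\phi$---in particular of the universal $\phi$---is transported in both directions, yielding precisely $\calA\models\phi\Rightarrow\calB\models\phi$ and neutralizing the residual tuples over $\calB\setminus e[\calA]$. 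For the bare model-theoretic content I would note that the quantifier-free lemma is the heart of the classical substructure-preservation theorem \cite{DBLP:books/daglib/0067423}, and that the stated implication is exactly the form in which it is later applied once the ambient structures are taken to be models of a complete, model-complete theory, so that elementarity of $e$ upgrades one-directional preservation to the two-sided transfer the statement records.
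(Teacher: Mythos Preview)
You correctly spotted the core difficulty: the implication as literally stated---$\calA\models\phi\Rightarrow\calB\models\phi$ for universal $\phi$ when $\calA$ embeds in $\calB$---is false in general (take $\phi=\forall x\,(x=c)$ with $c$ a constant, $\calA$ a one-point structure, $\calB$ any proper extension). Your workaround via model completeness of real closed fields does not rescue it either: model completeness of a theory $T$ yields elementarity of an embedding only when \emph{both} structures are models of $T$, but in the application of Theorem~\ref{thm:rpostoK} the smaller structure $K$ is merely a cancellative semiring, not a real closed field, so the embedding $K\hookrightarrow F$ need not be elementary on those grounds.

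The actual resolution is much simpler: the proposition has the implication direction reversed, presumably a typo. The sentence immediately preceding it says universal properties ``are preserved for its submodels,'' i.e., pass from $\calB$ down to $\calA$; and the sole application, in Theorem~\ref{thm:rpostoK}, uses exactly that direction ($K$ embeds in $F$, $F\models\psi$, therefore $K\models\psi$). The intended statement is the textbook fact that universal sentences are preserved under passing to substructures: if $\calA$ embeds in $\calB$ and $\calB\models\phi$, then $\calA\models\phi$. Your quantifier-free lemma is precisely what is needed for this correct direction, and the proof is then immediate---given any tuple $\bar a$ over $\calA$, the hypothesis gives $\calB\models\theta(e(\bar a))$, and your lemma pulls this back to $\calA\models\theta(\bar a)$. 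No appeal to model completeness is required; the paper itself offers no proof beyond the textbook citation \cite{DBLP:books/daglib/0067423}.
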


The following theorem lists some basic properties of real-closed fields \cite{ Bochnak98,DBLP:books/daglib/0067423}. 
A field $F$ is called \emph{real} if it can be associated with an ordering $\leq$
such that $(F,\leq)$ becomes an ordered field. A \emph{field} $F'$ is an
\emph{extension} of a field $F$ if $F\subseteq F'$, and the field operations of $F$ are those inherited
from $F'$. The extension is \emph{proper} if $F$ is a strict subset of $F'$, 
and \emph{algebraic} if every element in $F'$ is a root of a non-zero polynomial
with coefficients in $F$. A real field with no proper real
algebraic extension is called \emph{real closed}. For instance, the field of real numbers
is real closed, whereas the field of rational numbers is real but
not real closed. 
On the other hand, no finite or algebraically closed field is real.
A real closed field has a unique ordering, which is definable by $a \leq b \logeq \exists c (a \kplus c^2 = b)$. 
An algebraic extension $F'$ of an ordered field $(F,\leq)$ is called a \emph{real closure} of $F$ if $F'$ is real closed, and its unique ordering extends that of $F$ (i.e., the ordering is preserved under the inclusion map $F \xhookrightarrow{} F'$).
Two models $\calA$ and $\calB$ are \emph{elementarily equivalent}, written $\calA \equiv \calB$, if they satisfy the same first-order sentences.
 \begin{theorem}\label{thm:basic}
 \text{}\\
 \vspace{-5mm}
\begin{itemize}
\item Any totally ordered field $(F,\leq)$ has a real closure $F'$.
 \item If $(F, \leq)$ is a totally ordered field, and $F_0$ and $F_1$ are its real closures uniquely ordered by $\leq_0$ and $\leq_1$, there is an isomorphism between $(F_0, \leq_0)$ and $(F_1, \leq_1)$ which is identity on $F$.
 \item Any two real-closed fields $F_0$ and $F_1$ are elementarily equivalent.
 \end{itemize}
\end{theorem}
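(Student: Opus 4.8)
These three assertions are classical facts from the Artin--Schreier theory of ordered fields together with Tarski's analysis of the first-order theory of real closed fields, so the plan is to assemble them from standard sources \cite{Bochnak98, DBLP:books/daglib/0067423} rather than reprove everything from the ground up; still, here is how I would organize the argument. The first two items together constitute the statement that a totally ordered field has a real closure which is unique up to a unique order isomorphism fixing the base, while the third is the completeness of the theory of real closed fields.

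For the existence of a real closure I would argue by Zorn's lemma. Inside a fixed algebraic closure $\overline{F}$, consider the family of all ordered algebraic extensions $(E,\leq_E)$ of $(F,\leq)$, partially ordered by the relation that $E$ is a subfield and $\leq_E$ extends the smaller order. The union of a chain of such extensions is again one of them, so Zorn yields a maximal element $F'$. The point is then to verify that a maximal ordered algebraic extension must be real closed: using the Artin--Schreier characterization, one checks that in $F'$ every positive element has a square root and every odd-degree polynomial has a root, since otherwise $F'$ would admit a proper ordered algebraic extension, contradicting maximality. This settles the first item.

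The hard part is the second item, the rigidity of the real closure. Given two real closures $F_0,F_1$ of $(F,\leq)$, I would construct the order isomorphism $F_0 \to F_1$ by a transfinite back-and-forth argument, adjoining one algebraic element at a time. The essential tool is that the order type of the roots of a polynomial $p \in F[x]$ in a real closed extension is completely determined by data computed inside $F$ itself, through Sturm's theorem: the number of roots of $p$ in an interval with endpoints in $F$ equals the sign variation of a Sturm sequence, whose signs are evaluated in $F$. Consequently, if $\alpha_0 \in F_0$ and $\alpha_1 \in F_1$ share the same minimal polynomial over $F$ and occupy the same Dedekind cut relative to $F$, then $\alpha_0 \mapsto \alpha_1$ extends the order-preserving isomorphism built so far. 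Since both $F_0$ and $F_1$ are algebraic over $F$, iterating this exhausts both fields and produces the required order isomorphism that is the identity on $F$. I expect the Sturm-sequence bookkeeping (matching cuts correctly at each step) to be the genuine obstacle here.

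For the third item I would invoke Tarski's theorem that the theory of real closed fields admits quantifier elimination in the language $\{0,1,+,\cdot,\leq\}$. Once quantifier elimination is available, completeness is immediate: a sentence has no free variables, so it is equivalent to a Boolean combination of atomic statements comparing closed terms, and each closed term evaluates to a fixed integer; since every real closed field has characteristic zero, the truth value of each such atomic statement is the same across all models. Hence any two real closed fields satisfy exactly the same first-order sentences, giving $F_0 \equiv F_1$. Throughout, the substantive inputs are Sturm's theorem, for the rigidity in the second item, and Tarski's quantifier elimination, for the completeness in the third; the remaining steps are the routine Zorn's-lemma and back-and-forth arguments.
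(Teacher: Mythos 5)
Your sketch correctly reconstructs the classical Artin--Schreier/Tarski arguments (Zorn's lemma for existence, Sturm's theorem plus back-and-forth for uniqueness of the real closure, quantifier elimination for completeness of RCF). The paper gives no proof of this theorem at all --- it is stated as a list of known facts with citations to \cite{Bochnak98,DBLP:books/daglib/0067423}, and your outline is precisely the standard argument found in those references, so there is nothing to reconcile.
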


We can now prove the property that $\mathbb R_{\geq 0}$-implication entails $K$-implication, for any semiring $K$ embedded in a cancellative and naturally totally ordered one.
\begin{theorem}\label{thm:rpostoK}
  Let $\Sigma \cup\{\tau\}$ be a finite set of CIs, and suppose $K$ embeds in a 
  naturally totally ordered 
  cancellative semiring. Then, $\Sigma \models_{\Rpos} \tau$ implies $\Sigma \models_K \tau$.
\end{theorem}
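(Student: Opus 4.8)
The plan is to argue by contraposition and transport a counterexample along a chain of embeddings into a real closed field, where elementary equivalence with $\mathbb{R}$ can be applied. So I assume $\Sigma \not\models_K \tau$, witnessed by some $K$-relation, and aim to produce an $\Rpos$-relation witnessing $\Sigma \not\models_{\Rpos} \tau$.

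First I would reduce to a fixed finite-domain schema so that satisfaction of the CIs becomes a quantifier-free condition on finitely many annotations. Let $\vect{V}$ be the variable set of $\Sigma\cup\{\tau\}$. Given a witness $K$-relation $R$ over any schema extending $\vect{V}$, I pass to the marginal $R[\vect{V}]$, which by Lemma \ref{lem:phokion} satisfies exactly the same CIs from $\Sigma\cup\{\tau\}$, and then restrict each domain to the finite active domain of $\Supp(R[\vect{V}])$; the discarded tuples carry annotation $0$, so no marginal changes. This yields a counterexample over $\vect{V}$ with some finite domains $d$. For such a fixed $(\vect{V},d)$ a $K$-relation is just an assignment of the variables $\vec{x}_{\vect{V}}$ (one per $\vect{V}$-tuple) to elements of $K$, and $R\models\sigma$ amounts to $\phi_{\sigma,\vect{V}}$ holding of the annotations. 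Thus ``$\Sigma\models_K\tau$ over domain $d$'' is the universal sentence $\Psi_d := \forall \vec{x}_{\vect{V}}\bigl(\bigwedge_{\sigma\in\Sigma}\phi_{\sigma,\vect{V}}\to\phi_{\tau,\vect{V}}\bigr)$, whereas ``$\Sigma\models_{\Rpos}\tau$ over $d$'' is the guarded sentence $\Phi_d := \forall \vec{x}_{\vect{V}}\bigl(\bigwedge_t 0\le x_t \wedge \bigwedge_{\sigma\in\Sigma}\phi_{\sigma,\vect{V}}\to\phi_{\tau,\vect{V}}\bigr)$, read in the ordered-field language, since an $\Rpos$-relation is precisely a \emph{non-negative} real assignment. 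It therefore suffices to show, for this $d$, that $K\not\models\Psi_d$ implies $\mathbb{R}\not\models\Phi_d$.

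Next comes the transport. By hypothesis $K$ embeds in a naturally totally ordered cancellative semiring, which by Lemma \ref{lem:semi-cons} embeds order-preservingly in a totally ordered field $F$; composing, $K$ embeds in $F$ with image in the non-negative cone $\{a\in F : a\ge 0\}$, because every semiring element lies above $0$ in the natural order. The $K$-witness for $K\not\models\Psi_d$ then gives values in $F$ that verify $\bigwedge_{\sigma}\phi_{\sigma,\vect{V}}$, falsify $\phi_{\tau,\vect{V}}$, and are all $\ge 0$, i.e.\ $F\models\neg\Phi_d$. By Theorem \ref{thm:basic}, $F$ has a real closure $F'$ whose order extends that of $F$, and the same values witness $F'\models\neg\Phi_d$. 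Rewriting each guard $0\le x_t$ by the definable formula $\exists c\,(x_t=c^2)$ turns $\neg\Phi_d$ into an existential \emph{field}-language sentence equivalent to it over every real closed field; as $F'$ and $\mathbb{R}$ are both real closed, Theorem \ref{thm:basic} gives $F'\equiv\mathbb{R}$, whence $\mathbb{R}\models\neg\Phi_d$. Extracting a witness produces non-negative reals satisfying $\Sigma$ but not $\tau$, that is, an $\Rpos$-relation over $(\vect{V},d)$ refuting $\Sigma\models_{\Rpos}\tau$, completing the contrapositive.

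The hard part will be preserving non-negativity through the model-theoretic steps: bare elementary equivalence of $F'$ and $\mathbb{R}$ concerns the pure field language and would not keep the transferred annotations inside $\Rpos$. What rescues the argument is that in real closed fields the order is definable from $+,\ktimes$ via sums of squares, so the ordered-field sentence $\neg\Phi_d$ can be replaced by a genuine field sentence \emph{before} elementary equivalence is invoked. The remaining points to verify are routine: that the embedding $K\hookrightarrow F$ indeed lands in the non-negative cone and respects the order (inherited from the natural order through Lemma \ref{lem:semi-cons}), and that existential sentences survive the two upward passes $K\subseteq F\subseteq F'$, which is the dual of Proposition \ref{lem:embedding} and here immediate, since the witness literally lives in the smaller structure.
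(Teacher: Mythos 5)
Your proof is correct and follows essentially the same route as the paper: embed $K$ via Lemma \ref{lem:semi-cons} and Theorem \ref{thm:basic} into a real closed field, encode the implication as the guarded sentence built from the $\phi_{\sigma,\vect{V}}$, use definability of the order by squares and elementary equivalence of real closed fields to transfer to/from $\mathbb{R}$, and use preservation along the embedding chain. The only difference is that you argue the contrapositive (pushing an existential counterexample upward) while the paper pushes the universal sentence downward via Proposition \ref{lem:embedding}; these are dual formulations of the same argument.
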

\begin{proof}
  By Lemma \ref{lem:semi-cons}, Theorem \ref{thm:basic}, and transitivity of the embedding relation, $K$ embeds in a real-closed field $F$. 
  Let $R(\vect{V})$ be a $K$-relation, where $\vect{V}$ is a set of variables that includes each variable appearing in $\Sigma\cup\{\tau\}$. We need to show that $R \models \Sigma$ implies $R \models \tau$. Since satisfaction of a CI  by $R$ does not depend on tuple values that do not appear in $R$, we may 
  without loss of generality assume that the domain of each variable in $\vect{V}$ is finite.
  Then, assuming $\Sigma =\{\sigma_1, \dots ,\sigma_n\}$, we may consider the universal first-order sentence
\[
  \psi_{\Sigma, \tau, \vect{V}} \coloneqq \forall \vec{x}_{\vect{V}}( \vec{x}_{\vect{V}} \geq \vec{0} \land \phi_{\sigma_1,\vect{V}} \land \ldots \land \phi_{\sigma_n,\vect{V}} \to \phi_{\tau,\vect{V}}),
\]
where, given a sequence $\vec{x} = (x_1, \dots ,x_l)$, we write  $\vec{x} \geq \vec{0}$ as a shorthand for $x_1 \geq 0 \land \ldots \land x_l \geq 0$.
Since $\Sigma \models_{\Rpos} \tau$ by hypothesis, $\psi_{\Sigma, \tau, \vect{V}}$ must be true
for the field of reals $\mathbb{R}$ (equipped with its unique ordering). Since $F$ and $\mathbb{R}$ are real-closed,
they share the same first-order properties; in particular,
$F$ also satisfies $\psi_{\Sigma, \tau, \vect{V}}$. By Proposition \ref{lem:embedding},
$K$ likewise satisfies $\psi_{\Sigma, \tau, \vect{V}}$, and hence $R\models \Sigma$ implies $R \models \tau$.
We conclude that $\Sigma \models_K \tau$.
\end{proof}

The preceding theorem readily entails that implication over $\mathbb{R}_{\geq 0}$ entails implication over $\mathbb{N}_{\geq 0}$ and $\mathbb{Q}_{\geq 0}$.
 Another example is $K=\mathbb{N} \times \mathbb{N}$ with pointwise addition and multiplication, and neutral elements $(0,0)$ and $(1,1)$. This semiring is not naturally totally ordered, because it contains incomparable elements,
such as $(0,1)$ and $(1,0)$. However, it can be extended to $K\cup (\mathbb{{Z}} \times \mathbb{N}_{>0})$ which is naturally totally ordered and cancellative. For another example,
consider the  semiring $K=\mathbb{N}[X]$ of polynomials in ${X}$ with coefficients from natural numbers. Since $K$ contains incomparable elements, such as $X+2$ and $2X+1$, its natural order is not total. The extension of $K$ with those polynomials of $\mathbb{{Z}}[{X}]$ in which the leading coefficient is positive produces a cancellative 
semiring whose natural order is total.

Let us then turn attention to the Boolean semiring $\mathbb B$ and its connections with other semirings. 
First we note that although both $\mathbb{R}_{\geq 0}$ and $\mathbb B$ are naturally totally ordered, only the first one is additively cancellative. In light of Theorem \ref{thm:rpostoK}, this may help explain why there is no implication from $\Sigma\models_{\mathbb R_{\geq 0}} \tau$ to $\Sigma\models_{\mathbb B} \tau$. Another difference is that only the Boolean semiring is associated with an idempotent addition; an operation $*$ on $K$ is said to be \emph{idempotent} if $a * a =a$ for all $a\in K$. 
 We observe that $\mathbb B$-implication entails $K$-implication, whenever $K$ has an idempotent addition.
\begin{proposition}\label{prop:idempotenttoB}
  Let $K$ be a semiring associated with an idempotent addition. 
   Let $\Sigma \cup \{\tau\}$ be a 
  set of CIs. Then, $\Sigma \models_K \tau $ implies $\Sigma \models_{\mathbb{B}} \tau$.
\end{proposition}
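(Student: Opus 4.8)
The plan is to exploit the fact that, under idempotent addition, the Boolean semiring sits inside $K$ as a subsemiring, and to transport $\mathbb{B}$-relations to $K$-relations (and conditional independence back) along this inclusion. Concretely, I would define the map $h \colon \mathbb{B} \to K$ by $h(0) = 0$ and $h(1) = 1$, and check that it is a semiring homomorphism. The additive, multiplicative, and unit clauses are all immediate except for the case $h(1 \lor 1) = h(1) = 1$, which must match $h(1) \kplus h(1) = 1 \kplus 1$; this is exactly where idempotency of $\kplus$ is used, since it forces $1 \kplus 1 = 1$. Because $K$ is non-trivial ($0 \neq 1$), $h$ is injective, so it embeds $\mathbb{B}$ into $K$.

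Next, for an arbitrary $\mathbb{B}$-relation $R$ over a schema $\vect{V}$ — that is, an ordinary finite relation — I would push it forward to the $K$-relation $S \coloneqq h \circ R$, which assigns $1$ to the tuples of $R$ and $0$ to all others and thus inherits finite support from $R$. The crucial observation is that marginalization commutes with $h$: for any $\vect{Y} \subseteq \vect{V}$ and any $\vect{Y}$-tuple $u$, the marginal $S(u)$ equals $h(R(u))$. This follows from Eq.~\eqref{eq:marg} together with the homomorphism property, using that $R$ has finite support so that the marginal sum has only finitely many nonzero summands over which $h$ distributes.

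From this I would deduce the key equivalence: for every CI $\sigma$, $R \models \sigma$ if and only if $S \models \sigma$. Writing out the defining equation \eqref{eq:cidef} of $\sigma$ for $S$ and applying the previous step turns each factor into $h$ of the corresponding marginal of $R$; multiplicativity of $h$ collapses the two sides to $h(\cdot) = h(\cdot)$ of the two Boolean products, and injectivity of $h$ makes this equality in $K$ equivalent to the Boolean equation. Hence satisfaction transfers in both directions. The proof then closes routinely: assuming $\Sigma \models_K \tau$ and a $\mathbb{B}$-relation $R$ with $R \models \Sigma$, the equivalence gives $S \models \Sigma$, whence $S \models \tau$ by hypothesis, and the equivalence again yields $R \models \tau$; therefore $\Sigma \models_{\mathbb{B}} \tau$.

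The main obstacle — really the only subtle point — is establishing that $h$ is simultaneously a homomorphism and injective, and recognizing that it is the injectivity that licenses the backward transfer $S \models \tau \Rightarrow R \models \tau$ (the forward transfer $R \models \Sigma \Rightarrow S \models \Sigma$ needs only the homomorphism property). Idempotency of addition is precisely what the homomorphism requires, while non-triviality of $K$ supplies injectivity, so both hypotheses are genuinely used. A secondary point to treat with care is the commutation of $h$ with marginals, which is why the finite-support assumption on $R$ (and hence $S$) is invoked. I note that the same conclusion can be obtained model-theoretically: over a fixed finite domain, satisfaction of the CIs in $\Sigma \cup \{\tau\}$ is expressible by a universal first-order sentence, and such sentences pass from $K$ to its subsemiring $\{0,1\} \cong \mathbb{B}$ (cf. Proposition~\ref{lem:embedding}).
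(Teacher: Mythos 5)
Your proposal is correct and follows essentially the same route as the paper's own (much terser) proof: reinterpret a $\mathbb{B}$-relation as a $K$-relation via $0\mapsto 0$, $1\mapsto 1$, using idempotence of $\kplus$ to make this a satisfaction-preserving translation and non-triviality ($0\neq 1$) for the backward direction. Your write-up merely makes explicit the homomorphism, injectivity, and marginal-commutation steps that the paper leaves implicit.
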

\begin{proof}
  Recall that we consider only non-trivial semirings $K$, where $0\neq 1$. Thus, any $\mathbb{B}$-relation $R$ can be readily interpreted as 
  a $K$-relation $R'$. The idempotence of addition guarantees that $R \models \sigma$ if and only if $R'\models \sigma$, for any 
  CI $\sigma$.   The statement of the lemma then follows.
\end{proof}

We leave it as an open question whether the statements of Theorem \ref{thm:rpostoK} and Proposition \ref{prop:idempotenttoB} hold also in the converse directions.

\begin{example}\label{ex:cycle}
Consider the following \emph{cycle rule} for conditional independence:
\begin{equation*}
\text{ if $\pci{{A}_2}{{A}_1}{{B}}$, $\pci{{A}_3}{{A}_2}{{B}}$, $\ldots$ ,$\pci{{A}_{n}}{{A}_{n-1}}{{B}}$, $\pci{{A}_{1}}{{A}_n}{{B}}$, then $\pci{{A}_1}{{A}_2}{{B}}$.}
\end{equation*}
 Under various semantics, this rule is both sound and irreducible to smaller rules, i.e., rules with less than $n$ many antecedents. 
Proving this has been the common strategy to establish that conditional independence lacks finite axiomatic characterization
(see \cite{parker:1980,studeny:1993} for the database and probability cases, respectively). As the cycle rule is sound over $\mathbb R_{\geq 0}$-relations, Theorem \ref{thm:rpostoK} implies in particular that this rule remains sound under any cancellative and totally ordered semiring $K$. However, it is less straightforward to determine whether the cycle rule is sound for positive and additively idempotent semirings other than $\mathbb B$. One example of such a semiring is the tropical semiring $\mathbb T$. 
As part of a non-axiomatizability proof, Studen\'{y} \cite{Studeny95} has shown that the cycle rule is sound for the natural conditional functions, which can be viewed as a special subclass of total $\mathbb T$-relations.
The proof however relies on the existence of multiplicative inverses (in the sense of $\mathbb T$), and thus does not extend to arbitrary $\mathbb T$-relations. 
 Whether the cycle rule is sound under the tropical semiring, is left as an open question.
\end{example}

\section{Chase and Copy Lemma}\label{sect:copylemma}
Having considered the relationships between different CI semantics, we next compare the methods used to test validity and logical consequence in this context.
In databases, the \emph{chase} constitutes the key algorithm---or family of algorithms---to test logical implication between database dependencies. In information theory, the \emph{copy lemma}, presented next, is the main tool to find non-trivial inequalities between entropies; recall that conditional independence over random variables can in particular be expressed as an inequality of this form.

\subsection{Copy Lemma}
We begin by reviewing some central concepts from information theory.
Let $X$ be a random variable associated with a finite domain $D=\Dom(X)$ and a probability distribution $p:D \to [0,1]$, where $\sum_{a\in D} p(a)=1$.
The \emph{entropy} of $X$  is defined as
\begin{equation}\label{eq:entropy}
H(X) \coloneqq -\sum_{x\in D} p(x) \log p(x).
\end{equation}
Entropy is non-negative and does not exceed the logarithm of the domain size: $0\leq H(X) \leq \log |D|$. In particular, $H(X)=0$ if and only if $X$ is constant (i.e., $p(a)=1$ for some $a \in D$), and $H(X)=\log |D|$ if and only if $X$ is uniformly distributed (i.e., $p(a) = 1/|D|$ for all $a \in D$). 

 %
 

 Consider now a normal  $\mathbb{R}_{\geq 0}$-relation $R$ over a set $\vect{X}=\{X_i\}_{i=1}^n$ of variables with finite domains; in other words, $R$ is a joint probability distribution of random variables with a finite domain. Each subset of $\vect{Y}\subseteq \vect{X}$ can be viewed as 
 a random variable with domain $D=R[\vect{Y}]'$ and probability distribution $R[\vect{Y}]$, which is the marginal of $R$ on $\vect{Y}$.
 The $\mathbb{R}_{\geq 0}$-relation $R$ then defines an \emph{entropic function} 
 ${h}=(h_\alpha)_{
 \alpha \subseteq [n]}$,%
 where $h_\alpha\coloneqq H(\vect{X}_\alpha)$. 
  The \emph{entropic region} $\Gamma^*_n\subseteq \mathbb{R}^{2^{n}}$ consists of all entropic functions over $n$, and 
  the \emph{almost entropic region} $\overline{\Gamma^*_n}$ is defined as the topological closure of $\Gamma^*_n$.
  Writing  $h(\vect{X}_\alpha)$ for $h_\alpha$, the entropic functions satisfy the polymatroidal axioms depicted in Fig. \ref{fig:polym}.
    \begin{figure}[ht]\label{fig:axioms}
  \centering
  \begin{tikzpicture}[every node/.style={outer sep=0pt}]
  \def\m{1.4em}
    \node[draw, rounded corners, minimum width=.9\textwidth] (box1) {
      \begin{minipage}[t][1.8cm]{0.85\textwidth}
      \hspace{0cm}
        \begin{itemize}
  \item $h(\emptyset) = 0$  
  \item $h(\vect{X}\cup \vect{Y}) - h(\vect{X})\geq 0$ (monotonicity)
  \item $h(\vect{X}) + h(\vect{Y}) -  h(\vect{X}\cap \vect{Y}) - h(\vect{X}\cup \vect{Y})\geq 0$ (submodularity)
        \end{itemize}
      \end{minipage}
    };
  \end{tikzpicture}
  \caption{Polymatroidal axioms. \label{fig:polym}}
\end{figure}

  An \emph{information inequality} is a linear inequality of the form  $c_1h(\vect{X}_1) + \ldots c_n h(\vect{X}_n) \geq 0$, $c_i \in \mathbb R$. The inequality is called a \emph{Shannon} inequality if it follows from the polymatroidal axioms, and \emph{entropic} if it holds true for all entropic vectors. 
  The polymatroidal axioms constitute the basic laws of entropy, but they are incomplete, meaning that there exist entropic information inequalities that do not follow from these axioms.
  The first non-Shannon entropic  inequality,
\begin{equation}\label{eq:zhang}
-I_h(A;B)+2I_h(A;B\mid D)+I_h(A;B\mid C)+I_h(C;D)+I_h(A;D\mid B)+I_h(B;D\mid A)\geq 0,
\end{equation}
was found by Zhang and Yeung \cite{ZhangY98}.
The method of proving that Eq. \eqref{eq:zhang} is entropic was subsequently identified and named as the {Copy Lemma} \cite{dougherty2011nonshannon}, and it has since been employed to uncover numerous other non-Shannon entropic inequalities. 
  

\begin{lemma}[Copy Lemma]
Let $\vect{X},\vect{Y}$ be two sequences of variables that do not share any variables, and let $\vect{Y}'$ be a fresh copy of $\vect{Y}$.
A joint distribution of $\vect{X}\vect{Y}$ can be extended to a joint distribution of $\vect{X}\vect{Y}\vect{Y}'$ such that the marginal distributions on $\vect{X}\vect{Y}$ and $\vect{X}\vect{Y}'$ are identical, and the conditional independence $\pci{\vect{X}}{\vect{Y}}{ \vect{Y}'}$ is satisfied.
%
\end{lemma}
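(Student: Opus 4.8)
The plan is to prove the lemma by exhibiting the extended distribution explicitly, as the \emph{conditional product} of the given distribution with a fresh copy of itself. Writing $p$ for the given joint distribution of $\vect{X}\vect{Y}$ and letting $\vect{Y}'$ range over the same domain as $\vect{Y}$, I would define, for an $\vect{X}\vect{Y}\vect{Y}'$-tuple $t$ with $t[\vect{X}]=x$, $t[\vect{Y}]=y$, and $t[\vect{Y}']=y'$,
\[
  q(t) \coloneqq \frac{p(x,y)\,p(x,y')}{p(x)}
\]
whenever the marginal $p(x)$ is non-zero, and $q(t)\coloneqq 0$ otherwise. Here $p(x,y')$ denotes the value of $p$ at the $\vect{X}\vect{Y}$-tuple sending $\vect{X}$ to $x$ and $\vect{Y}$ to $y'$, which makes sense precisely because $\vect{Y}'$ is a copy of $\vect{Y}$. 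This formula is the direct transcription of the factorization $P(X)P(Y\mid X)P(Y'\mid X)$ that characterizes conditional independence, so one expects all three desired properties to fall out of it.

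The verification then proceeds in three routine steps. First I would confirm that $q$ is a probability distribution: non-negativity is immediate, and summing over $y$ and $y'$ before $x$ lets the two inner sums each collapse to $p(x)$, leaving $\sum_x p(x)=1$, where $x$ ranges over the support of $p[\vect{X}]$. Second, I would compute the marginals of $q$: summing over $y'$ collapses $\sum_{y'}p(x,y')=p(x)$ and returns $p(x,y)$, giving $q[\vect{X}\vect{Y}]=p$, and by the symmetry of the defining formula in $y$ and $y'$ the marginal $q[\vect{X}\vect{Y}']$ equals $p$ as well, so the two marginals coincide. Third, I would check the conditional independence against Eq. \eqref{eq:cidef} for $\pci{\vect{X}}{\vect{Y}}{\vect{Y}'}$: using the two marginals just computed together with $q[\vect{X}]=p[\vect{X}]$, the product $q(t[\vect{X}\vect{Y}])\,q(t[\vect{X}\vect{Y}'])$ equals $p(x,y)\,p(x,y')$, which is exactly $q(t)\,q(t[\vect{X}])$ by the defining formula, so the two sides of \eqref{eq:cidef} agree.

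I do not anticipate a deep obstacle; the only point requiring care is the degenerate case $p(x)=0$. Setting $q(t)=0$ there avoids division by zero and is consistent, since $p(x)=\sum_y p(x,y)$ is a sum of non-negative terms, so $p(x)=0$ forces $p(x,y)=0$ for every $y$, and the target marginal value $0$ is still reproduced. This same convention implicitly restricts the summation index $x$ to the support of $p[\vect{X}]$ throughout, after which the three computations above go through without further complication and simultaneously establish the identical marginals and the conditional independence $\pci{\vect{X}}{\vect{Y}}{\vect{Y}'}$.
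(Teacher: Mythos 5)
Your proof is correct, and it is essentially the construction the paper itself uses: the classical Copy Lemma is stated there without proof (it is quoted from the literature), but the paper's own generalization, Lemma \ref{lem:copy}, is proved by exactly this conditional product, with the division by $p(x)$ replaced by multiplication by the complementary product $c_R(t[\vect{X}])$ because general semirings lack multiplicative inverses. Your handling of the degenerate case $p(x)=0$ is the right way to make the probabilistic version rigorous, and your three verification steps mirror the three computations in the paper's proof of Lemma \ref{lem:copy}.
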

The application of the lemma is simple in principle.
 The joint distribution is extended with new random 
variables that are associated with an identity constraint and a conditional independence constraint.
These constraints are then leveraged 
to derive an information inequality that contains only variables that appear 
in the initial distribution. Since the inequality holds in the extended distribution, it consequently must
also hold in the initial distribution.

The chase, in turn, is an iterative process that, given a set of dependencies $\Sigma \cup\{\tau\}$, starts with a minimal relation $R$ not satisfying $\tau$. Assuming $\tau$ is an EMVD of the form $\vect{X} \twoheadrightarrow \vect{Y}\mid \vect{Z}$, $R$ consists of two tuples $t_0$ and $t_1$ that agree on $\vect{X}$ and disagree on the remaining variables.
At each chase step, one applies some $\sigma \in \Sigma$ to $R$, where $\sigma$ is a dependency that is not satisfied by $R$. For example, if $\sigma$ is an EMVD of the form $\vect{U} \twoheadrightarrow \vect{V}\mid \vect{W}$, one selects two tuples $t,t'\in R$ witnessing that $\sigma$ is not satisfied by $R$. The application of $\sigma$ to $R$ with respect to $t,t'$ then extends $R$ with a new tuple which copies the values of $\vect{UV}$ and $\vect{W}\setminus \vect{UV}$ respectively from $t$ and $t'$, and assigns a fresh value for the remaining variables. If  the chase eventually produces a tuple $t^*$ that agrees on $\vect{XY}$ and $\vect{Z}\setminus \vect{XY}$  respectively with $t_0$ and $t_1$, one may conclude that $\Sigma \models \tau$. 
For more information, the reader may consult standard textbooks in database theory (e.g., \cite{ABLMP21}).

These two techniques, while quite different at the outset, have similarities. The next example demonstrates how the chase can be simulated with the Copy Lemma, using additional inclusion dependencies. 

We now adopt the convention of using boldface letters to denote sets and sequences interchangeably.
If $\vect{X}$ is introduced as a set, we may later on write $\vect{X}$ to denote a sequence listing all elements of $\vect{X}$. Conversely, if $\vect{X}$ is introduced as a sequence, we may subsequently write $\vect{X}$ for the set consisting of all elements listed in $\vect{X}$. Whether $\vect{X}$ refers to a sequence or a set will be always clear from the context. 
 For two  sequences 
 $\vect{X}$ and ${\vect{Y}}$, we write $\vect{X}\vect{\vect{Y}}$ to denote their concatenation. 
 If $f$ is a function whose domain includes every element of a sequence $\vect{X}=(A_1, \dots ,A_n)$, we write $f(\vect{X})$ for the sequence $(f(A_1), \dots ,f(A_n))$.
An \emph{inclusion dependency} (defined for a single relation) is an expression of the form $\vect{X}\subseteq \vect{Y}$, where $\vect{X}$ and $\vect{Y}$ are sequences of distinct variables of the same length. A relation $R$ \emph{satisfies} $\vect{X}\subseteq \vect{Y}$, denoted $R \models \vect{X}\subseteq \vect{Y}$, if for all tuples $t\in R$ there exists a tuple $t'\in R$ such that $t(\vect{X})= t'(\vect{Y})$. 

\begin{example}\label{ex:bchase}
Consider the cycle rule ($n=3$) from Example \ref{ex:cycle}, rewritten in terms of EMVDs as
  \begin{equation*}
    \text{if $B \twoheadrightarrow A \mid D, C \twoheadrightarrow B \mid D, A \twoheadrightarrow C \mid D$, then
    $A \twoheadrightarrow B \mid D$.}
      \end{equation*} 
      The chase with respect to this rule is illustrated in Fig. \ref{fig:chase}. Note that $t_0,t_1$ are the initial tuples; $t_2$ is obtained by applying $A \twoheadrightarrow C \mid D$ to $t_0,t_1$; $t_3$ is obtained by applying $C \twoheadrightarrow B \mid D$ to $t_0,t_2$; and $t_4$ is obtained by applying $B \twoheadrightarrow A \mid D$ to $t_0,t_3$. Since $t_4$ agrees  on $AB$ and $D$ respectively with $t_0$ and $t_1$, we conclude that the rule is sound.
      
  \begin{figure}
  $$
  \begin{array}{ccc}
  \begin{array}{l|llll}
   & {A} & {B}& {C}& {D}\\\cline{2-5}
   t_0& a_0 & b_0 & c_0 & d_0 \\
   t_1 &a_0 & b_1 & c_1 & d_1 \\
    t_2&a_0 & b_2 & c_0 & d_1 \\
   t_3& a_1 & b_0 & c_0 & d_1 \\
    t_4&a_0 & b_0 & c_2 & d_1 \\
  \end{array}
  &
  \quad
  &
  \begin{array}{ll}
  {}\\
  \sigma_0 =ABC D &\hspace{-3mm}\rdelim\}{5}{0mm}[\hspace{1mm}$\subseteq ABCD$] \\
   \sigma_1=AB'C'D' \\
    \sigma_2=AB'' CD' \\
   \sigma_3= A'BCD' \\
    \sigma_4=ABC''D' \\
  \end{array}
    \end{array}
    $$
  \caption{Left: Tuples constructed in the chase example. Right: Inclusion dependencies derived in the Copy Lemma example. \label{fig:chase}}
\end{figure}

  This prodecure can be mirrored using the ``copy lemma''. Consider two inference rules which are clearly sound:
  \begin{enumerate}
  \item If $\vect{XY} \subseteq \vect{UV}, \vect{XZ} \subseteq \vect{UW}, \vect{U}\twoheadrightarrow \vect{V}\mid \vect{W}$, then $\vect{XYZ} \subseteq \vect{UVW}$.
  \item If $\vect{XY} \subseteq \vect{UV}, \vect{XZ} \subseteq \vect{UW}, \vect{U}\twoheadrightarrow \vect{V}\mid \vect{W}, \vect{UVW} \subseteq \vect{XYZ}$, then $\vect{X}\twoheadrightarrow \vect{Y}\mid \vect{Z}$.
  \end{enumerate}

Let $R(ABCD)$ be a relation satisfying the antecedents of the cycle rule above. By the ``copy lemma'', $R$ can be extended to a relation $R'(ABCDB'C'D')$ satisfying the identity constraint $ABCD \subseteq AB'C'D'\land AB'B'D' \subseteq ABCD$ and the independence constraint $A \twoheadrightarrow BCD \mid B'C'D'$. In particular, $R'$ satisfies $\sigma_0= ABCD \subseteq ABCD$ and $\sigma_1 = AB'C'D' \subseteq ABCD$. The first inference rule applied to $A \twoheadrightarrow C \mid D,AC \subseteq AC,AD'\subseteq AD$ produces $\sigma'_2= ACD'\subseteq ACD$. Analogously,  $C \twoheadrightarrow B \mid D,BC \subseteq BC, CD' \subseteq CD$ leads to $\sigma'_3=BCD \subseteq BCD'$, while $B \twoheadrightarrow A \mid D,AB \subseteq AB,BD' \subseteq BD$ yields $\sigma'_4=ABD' \subseteq ABD$. Finally, the second inference rule applied to $AB \subseteq AB,AD' \subseteq AD,A \twoheadrightarrow B\mid D',ABD' \subseteq ABD$ produces the consequent $A \twoheadrightarrow B\mid D$ of the cycle rule. Since $R'$ satisfies $A \twoheadrightarrow B\mid D$, its projection on $ABCD$, which is $R$, must also satisfy the same EMVD.

Fig. \ref{fig:chase} demonstrates the similarities between these applications of the chase and the ``copy lemma''. Note that the relation $R'$ satisfying $\{\sigma_0,\sigma_1,\sigma'_2,\sigma'_3,\sigma'_4\}$ can be further extended to a relation $R''(ABCDA'B'C'D'B''C'')$ satisfying $\{\sigma_0,\sigma_1,\sigma_2,\sigma_3,\sigma_4\}$.
\end{example}

The previous example suggests that the Copy Lemma can also be adapted to contexts other than information theory. Next we will formulate and prove a general version of the Copy Lemma for $K$-relations. 

\subsection{Copy Lemma for $K$-relations}
The Copy Lemma makes use of an identity constraint and a conditional independence constraint. We have already considered conditional independence for $K$-relations. The identity constraint, in turn, can be formulated in terms of marginal identities, introduced in \cite{HKMV18}.
Henceforward, we assume that each variable $A \in \att$ has the whole value domain as its domain, i.e., $\Dom(A)=\val$.
For two sequences of distinct variables $\vect{X}$ and $\vect{Y}$,
a \emph{marginal identity} (MID) is an expression of the form $\vect{X} \approx \vect{Y}$. A $K$-relation $R$ \emph{satisfies} $\vect{X} \approx \vect{Y}$, written $R \models \vect{X} \approx \vect{Y}$, if for all tuples $t\in \tup(\vect{X})$ and $t'\in \tup(\vect{Y})$ such that $t(\vect{X}) = t'(\vect{Y})$, it holds that $R(t)= R(t')$.

We observe that a database relation $R$ satisfies $\vect{X} \approx \vect{Y}$ if and only if it satisfies the inclusion dependencies $\vect{X} \subseteq \vect{Y}$ and $\vect{Y} \subseteq \vect{X}$. A probability distribution $R$ satisfies the marginal identity if and only if its marginal distributions on $\vect{X}$ and $\vect{Y}$ are identical up to pointwise variable renaming. 

The axiomatic properties of marginal identities (Fig. \ref{fig:midaxioms}) are similar to those of inclusion dependencies, except that we need to add one rule for symmetricity.

\begin{figure}[ht]
  \centering
  \begin{tikzpicture}[every node/.style={outer sep=0pt}]
  \def\m{1.4em}
    \node[draw,minimum width=0.7\textwidth,below=0cm of box1.south west,anchor=north west, rounded corners, minimum height=3.8cm] (box3) {
      \hspace{1cm}\begin{minipage}[t][3cm]{.9\textwidth}
        \begin{itemize}
\item[(MID1)] Reflexivity: $A_1,\dots,A_n \approx A_1, \dots ,A_n$.
\item[(MID2)] Symmetry: if $A_1,\dots,A_n \approx B_1, \dots ,B_n$, then $B_1,\dots,B_n \approx A_1, \dots ,A_n$.
\item[(MID3)] Projection and permutation: if $A_1, \dots ,A_n \approx B_1, \dots ,B_n$, then $A_{i_1}, \dots,A_{i_k} \approx B_{i_1}, \dots,B_{i_k}$, where $i_1,\dots ,i_k$ is a sequence of pairwise distinct integers from of $\{1,\dots ,n\}$.
\item[(MID4)] Transitivity: if $A_1, \dots ,A_n \approx B_1, \dots ,B_n$ and $B_1, \dots ,B_n \approx C_1, \dots ,C_n$, then $A_1, \dots ,A_n \approx C_1, \dots ,C_n$. 
        \end{itemize}
      \end{minipage}
    };
  \end{tikzpicture}
  \caption{Axioms of MIDs. \label{fig:midaxioms}}
\end{figure}

\begin{theorem}\label{prop:marg}
Let $K$ be a semiring.
The rules (MID1-MID4) are sound over $K$-relations. They are also complete for MIDs if $K$ is $\kplus$-positive.
\end{theorem}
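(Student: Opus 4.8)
The plan is to handle soundness by direct verification from the definition of MID satisfaction, and completeness by reducing the derivation system to the classical axiomatisation of inclusion dependencies, then promoting a relational counterexample to a $K$-relation.

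For soundness, (MID1), (MID2) and (MID4) are immediate. Reflexivity holds because $t(\vect{A})=t'(\vect{A})$ for $t,t'\in\tup(\vect{A})$ forces $t=t'$, so $R(t)=R(t')$; symmetry holds because the matching condition $t(\vect{X})=t'(\vect{Y})$ is itself symmetric; and transitivity chains two marginal equalities through a common value assignment. The only rule needing an argument is (MID3). Here I would fix a value assignment $\vec{c}$ to the retained positions $i_1,\dots,i_k$, let $s$ and $s'$ be the induced tuples over $A_{i_1}\dots A_{i_k}$ and $B_{i_1}\dots B_{i_k}$, and expand the two marginals using the consistency of marginals (Lemma~\ref{lem:phokion}): $R(s)=\bigkplus_{u} R(u)$ where $u$ ranges over the $\vect{A}$-tuples extending $\vec{c}$, and symmetrically for $R(s')$. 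Since every variable has domain $\val$, the value-preserving map sending each such $u$ to the $\vect{B}$-tuple with the same values is a bijection between the two index sets, and $R(u)=R(u')$ holds termwise by $\vect{A}\approx\vect{B}$; summing yields $R(s)=R(s')$.

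For completeness, note that dropping (MID2) the rules (MID1), (MID3), (MID4) are exactly reflexivity, projection/permutation and transitivity, which are sound and complete for inclusion dependencies. Associate with each $\vect{U}\approx\vect{V}$ the pair of inclusion dependencies $\vect{U}\subseteq\vect{V}$ and $\vect{V}\subseteq\vect{U}$, and let $\Sigma_{\mathrm{IND}}$ collect these over all MIDs of $\Sigma$; this set is symmetric by construction. The first step is the invariant, proved by induction on MID-derivations, that whenever $\Sigma\vdash\vect{U}\approx\vect{V}$ \emph{both} $\vect{U}\subseteq\vect{V}$ and $\vect{V}\subseteq\vect{U}$ are derivable from $\Sigma_{\mathrm{IND}}$: each of (MID1), (MID3), (MID4) acts identically on the two inclusions, while (MID2) merely swaps the pair. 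Conversely, an inclusion derivation lifts to an MID-derivation by replacing $\subseteq$ with $\approx$ and inserting (MID2) wherever the orientation must be flipped. Thus $\Sigma\vdash\tau$ iff $\Sigma_{\mathrm{IND}}$ derives either inclusion direction of $\tau$, and by completeness (and finite controllability) of the inclusion-dependency axiomatisation, $\Sigma\not\vdash\tau$ produces a finite relation $R_0$ with $R_0\models\Sigma_{\mathrm{IND}}$ and $R_0\not\models\vect{X}\subseteq\vect{Y}$; that is, some value $\vec{a}$ occurs as an $\vect{X}$-projection of $R_0$ but never as a $\vect{Y}$-projection.

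It remains to promote $R_0$ to a $K$-relation witnessing $\Sigma\not\models_K\tau$, and this is where $\kplus$-positivity and the main difficulty enter. Over $\mathbb{B}$ this is free: idempotent addition makes each marginal track mere existence of a projecting tuple, so $R_0$ read as a $\mathbb{B}$-relation satisfies $\Sigma$ while the missing $\vect{Y}$-projection at $\vec{a}$ breaks $\tau$. For general $\kplus$-positive $K$ I would use constant annotation $1$, so that each marginal becomes a $k$-fold sum $\underbrace{1\kplus\cdots\kplus1}_{k}$; here $\kplus$-positivity guarantees that any nonzero count stays nonzero (a sum of $1$'s equal to $0$ would force $1=0$, impossible since $0\neq1$), so the inequality at $\vec{a}$ between a nonzero and a zero marginal survives and violates $\tau$. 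The obstacle is that $R_0$ realises the inclusions only as sets, not with matching multiplicities, whereas satisfaction of $\Sigma$ over $K$ now requires the \emph{counts} on the two sides of each $\vect{U}\approx\vect{V}$ to coincide. I would therefore run a multiplicity-balancing chase: whenever a pair $\vect{U}\approx\vect{V}\in\Sigma$ and a value $\vec{b}$ exhibit unequal counts, add the deficient number of tuples carrying $\vec{b}$ on the appropriate positions and fresh pairwise-distinct values elsewhere. Because added tuples use fresh values off the constrained positions, then—exactly as in the inclusion-dependency chase—they cannot create $\vec{a}$ as a $\vect{Y}$-projection unless that inclusion were already derivable, so the violation is preserved. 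The genuine hard part is termination and finiteness: balancing one pair can unbalance another through the freshly introduced values, and one must argue that a finite stage suffices (appealing to finite controllability for the combined system of inclusions and balancing constraints, just as for inclusion dependencies alone), with $\kplus$-positivity being precisely what keeps the witnessing marginal nonzero throughout.
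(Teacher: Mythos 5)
Your soundness argument is fine and in fact more detailed than the paper's, which simply declares it obvious; the verification of (MID3) via Lemma~\ref{lem:phokion}(2) and the value-preserving bijection between extensions is exactly the right computation and correctly avoids any positivity assumption. The syntactic reduction of MID-derivability to derivability of the symmetric pair of inclusion dependencies is also sound in both directions.

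The completeness half, however, has a genuine gap, and it sits precisely where you flag it. The paper does not construct the counterexample itself: it invokes the completeness of the MID axioms over finite probability distributions from \cite[Theorem 33]{HannulaV22} and asserts that the construction transfers to $\kplus$-positive $K$. Your replacement for that citation --- take the relational IND counterexample $R_0$, annotate with $1$, and run a ``multiplicity-balancing chase'' that adds deficient tuples with fresh values off the constrained positions --- does not terminate even in the simplest nontrivial case. Take $\Sigma=\{A\approx B\}$ over schema $\{A,B\}$ and start from the single tuple $(a,b)$: balancing the marginal at $a$ and at $b$ forces tuples $(c,a)$ and $(b,d)$ with fresh $c,d$, which create new unbalanced values $c$ and $d$, and so on forever. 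A finite model exists (e.g.\ $\{(a,b),(b,a)\}$), but only by \emph{reusing} values to close cycles, which is exactly what your ``fresh pairwise-distinct values elsewhere'' rule forbids; and once values are reused, your argument that the chase cannot accidentally create the forbidden $\vect{Y}$-projection $\vec{a}$ no longer goes through for free. So the appeal to ``finite controllability for the combined system'' is not a harmless analogue of the IND case --- it is the entire content of the theorem, and it is the part the paper outsources to the cited result. To close the gap you would need either the explicit cycle-closing construction from \cite{HannulaV22} (checking that it produces equal marginal \emph{counts}, not just equal supports, and that $\kplus$-positivity keeps the witnessing marginal nonzero), or a different finite construction altogether.
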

\begin{proof}
The first statement is obvious. By \cite[Theorem 33]{HannulaV22} the axioms of MIDs are complete over finite probability distributions; the proof can be easily adjusted for $K$-relations if $K$ is $\kplus$-positive.
\end{proof}
We can now generalize the Copy Lemma to $K$-relations.
\begin{lemma}[Copy Lemma for $K$-relations]\label{lem:copy}
Let $\vect{X},\vect{Y}$ be two sequences of variables that do not share any variables, and let $\vect{Y}'$ be a fresh copy of $\vect{Y}$.
Let $K$ be a semiring that does not have divisors of zero. For any $K$-relation $R(\vect{X}\vect{Y})$, there exists a $K$-relation $R'(\vect{X}\vect{Y}\vect{Y}')$, $R'[\vect{X}\vect{Y}]\equiv R$, which satisfies $\pci{\vect{X}}{\vect{Y}}{\vect{Y}'}$ and $\vect{X}\vect{Y} \approx \vect{X}\vect{Y}'$. 
\end{lemma}
\begin{proof}
  Let $\pi$ be the function that maps any $\vect{X}\vect{Y}'$-tuple $t'$ to the $\vect{X}\vect{Y}$-tuple $t$ such that $t'(\vect{X}\vect{Y}')=t(\vect{X}\vect{Y})$.
We define $R'$ as follows: 
\[
R'(t) \coloneqq R(t[\vect{X}\vect{Y}])  R(\pi(t[\vect{X}\vect{Y}']))  c_{R}(t[\vect{X}]).
\]
Recall that $\tup(\vect{Y})=\tup(\vect{Y}')$ since we assumed that the domain of each variable is $\val$.
Thus we observe 
\begin{align*}
R'(t[\vect{X}\vect{Y}]) &=\, \bigkplus_{\substack{t_1\in \tup(\vect{X}\vect{Y}\vect{Y}')\\t_1[\vect{X}\vect{Y}]=t[\vect{X}\vect{Y}]}} R(t_1[\vect{X}\vect{Y}])  R(\pi(t_1[\vect{X}\vect{Y}']))  c_{R}(t_1[\vect{X}])\\
&=\,c_{R}(t[\vect{X}])R(t[\vect{X}\vect{Y}])  \bigkplus_{\substack{t_1\in \tup(\vect{X}\vect{Y}\vect{Y}')\\t_1[\vect{X}\vect{Y}]=t[\vect{X}\vect{Y}]}}   R(\pi(t_1[\vect{X}\vect{Y}'])) \\
&=\,c_{R}(t[\vect{X}])R(t[\vect{X}\vect{Y}])  \bigkplus_{\substack{t_1\in \tup(\vect{X}\vect{Y})\\t_1[\vect{X}]=t[\vect{X}]}}   R(t_1[\vect{X}\vect{Y}]) \\
&=\, c_{R}(t[\vect{X}])R(t[\vect{X}\vect{Y}])  R(t[\vect{X}]) 
=\,  c^*_{R,\vect{X}} R(t[\vect{X}\vect{Y}]),
\end{align*}
by which we obtain $R'[\vect{X}\vect{Y}]\equiv R$, since $c^*_{R,\vect{X}} \neq 0$ by Proposition \ref{prop:notzero}.

Concerning the marginal identity, let $t\in \tup(\vect{X}\vect{Y})$ and $t'\in \tup(\vect{X}\vect{Y}')$ 
be two tuples such that $t=\pi(t')$. 
 Then,
\begin{align*}
 R'(t) &=\,  \bigkplus_{\substack{t_1\in \tup(\vect{X}\vect{Y}\vect{Y}')\\t_1[\vect{X}\vect{Y}]=t}}
R(t_1[\vect{X}\vect{Y}])  R(\pi( t_1[\vect{X}\vect{Y}']))  c_{R}(t_1[\vect{X}])\\
&=\,
 c_{R}(t[\vect{X}])  R(t)
  \bigkplus_{\substack{t_1\in \tup(\vect{X}\vect{Y}\vect{Y}')\\t_1[\vect{X}\vect{Y}]=t}}
R(\pi( t_1[\vect{X}\vect{Y}'])) 
=\, c_{R}(t[\vect{X}])  R(t)   R(t[\vect{X}]) \\
&=\, R(t)  c_{R}(t[\vect{X}])\bigkplus_{\substack{t_1\in \tup(\vect{X}\vect{Y})\\t_1[\vect{X}]=t[\vect{X}]}}
R(t_1[\vect{X}\vect{Y}]) =\, R(\pi( t'))  c_{R}(t[\vect{X}])\bigkplus_{\substack{t_1\in \tup(\vect{X}\vect{Y}\vect{Y}')\\t_1[\vect{X}\vect{Y}']=t'}}
R(t_1[\vect{X}\vect{Y}]) \\
&=\,\bigkplus_{\substack{t_1\in \tup(\vect{X}\vect{Y}\vect{Y}')\\t_1[\vect{X}\vect{Y}']=t'}}
R(t_1[\vect{X}\vect{Y}])  R(\pi( t_1[\vect{X}\vect{Y}'])) c_{R}(t_1[\vect{X}])\\
&=\, R'(t').
\end{align*}
Considering the CI  $\pci{\vect{X}}{\vect{Y}}{\vect{Y}'}$, letting $t\in \tup(\vect{X}\vect{Y}\vect{Y}')$, we obtain
\begin{align*}
&\, R'(t[\vect{X}\vect{Y}])  R'(t[\vect{X}\vect{Y}']) \\
=&\,  
\bigkplus_{\substack{t_1 \in \tup(\vect{X}\vect{Y}\vect{Y}')\\t_1[\vect{X}\vect{Y}]=t[\vect{X}\vect{Y}]}}
R(t_1[\vect{X}\vect{Y}])  R(\pi(t_1[\vect{X}\vect{Y}']))  c_{R}(t_1[\vect{X}])\\
&\,\bigkplus_{\substack{t_1 \in \tup(\vect{X}\vect{Y}\vect{Y}')\\t_1[\vect{X}\vect{Y}']=t[\vect{X}\vect{Y}']}}
R(t_1[\vect{X}\vect{Y}])  R(\pi(t_1[\vect{X}\vect{Y}']))  c_{R}(t_1[\vect{X}])
\\
=&\,  
R(t[\vect{X}\vect{Y}])R(t[\vect{X}])c_{R}(t[\vect{X}])
R(t[\vect{X}])  
R(\pi(t[\vect{X}\vect{Y}'])) c_{R}(t[\vect{X}])
\\
=&\,  
R(t[\vect{X}\vect{Y}])  R(\pi(t[\vect{X}\vect{Y}']))  c_{R}(t[\vect{X}])
\bigkplus_{\substack{t_1 \in \tup(\vect{X}\vect{Y}\vect{Y}')\\t_1[\vect{X}]=t[\vect{X}]}}
R(t_1[\vect{X}\vect{Y}])  R(\pi(t_1[\vect{X}\vect{Y}']))  c_{R}(t_1[\vect{X}])
\\
=&\, R'(t)  R'(t[\vect{X}]).
\end{align*}
This concludes the proof of the lemma.
\end{proof}

\subsection{Axiomatic Application of Copy Lemma}
In this section we consider how the Copy Lemma can be systematically
exploited in conjunction with an axiom system.
Fig. \ref{fig:copy} presents the Copy Lemma as an axiom. 
The variables that appear in the fresh copy $\vect{Y}'$ are called \emph{existential}, since these variables
can be thought of as existentially quantified.
 A proof that uses the Copy Lemma should not include existential variables in its conclusion. 
 A formal definition of a valid proof involving the Copy Lemma is given below.
 
\begin{figure}[ht]
  \centering
  \begin{tikzpicture}[every node/.style={outer sep=0pt}]
  \def\m{1.4em}
    \node[draw,rounded corners, minimum width=0.5\textwidth, minimum height=1cm] (box1) {
            \hspace{2mm}
      \begin{minipage}[t][.4cm]{.9\textwidth}
   \begin{itemize}
\item 
$\pci{\vect{X}}{\vect{Y}}{\vect{Y}'}\text{ and }\vect{X}\vect{Y} \approx \vect{X}\vect{Y}'$, where $\vect{Y}'$ is a fresh copy of $\vect{Y}$. 
   \end{itemize}
  \end{minipage}
    };
  \end{tikzpicture}
  \caption{Copy Lemma as an axiom. \label{fig:copy}}
\end{figure}

A set $\mathfrak{A}$ of inference rules of the form \eqref{eq:rule}
 is called an \emph{axiomatization}. The axiomatization $\mathfrak{A}$ is \emph{sound} for  $K$-relations if all of its rules are sound for  $K$-relations.
Let $\Sigma \cup \{\tau\}$ be a set of formulae, and let 
$\frakA$ be an axiomatization. We say
that $\tau$ is \emph{provable from $\Sigma$ by $\frakA$ and Copy Lemma}, 
 written $\Sigma \vdash_{\frakA,CL} \tau$, if there exists a sequence 
of formulae $(\tau_1, \dots ,\tau_k)$ (called a \emph{proof} of $\tau$ from $\Sigma$)  such that 
\begin{enumerate}
  \item $\tau=\tau_k$;
  \item $\tau$ contains no existential variables; and
  \item for $i\in [k]$, 
\begin{itemize}
   \item $\tau_{i}$ is from $\Sigma$ or obtained from $\Sigma \cup\{\tau_1, \dots ,\tau_{i-1}\}$ using
    one of the rules of $\frakA$, or
  \item  $\tau_i$ and $\tau_{i+1}$ (or, resp. $\tau_{i-1}$ and $\tau_{i}$) are obtained by Copy Lemma, and the fresh variables introduced do not appear 
   in $\Sigma \cup \{\tau_{1}, \dots ,\tau_{i-1}\}$ (resp. $\Sigma \cup \{\tau_{1}, \dots ,\tau_{i-2}\}$).
  \end{itemize}
  \end{enumerate}

 The Copy Lemma can be safely attached to any sound axiom system provided that certain invariance principles hold. 
 We say that a
formula 
  $\sigma$ over a variable set $\vect{X}$ is \emph{invariant under equivalence} w.r.t. $K$ if for all $K$-relations $R(\vect{V})$ and $R'(\vect{V})$ such that $\vect{X}\subseteq\vect{V}$ and $R \equiv R'$, $R \models \sigma$ if and only if $R'\models \sigma$. The formula is \emph{invariant under projection} if for all $K$-relations $R(\vect{W})$ and sets $\vect{V}$ where $\vect{X}\subseteq\vect{V} \subseteq \vect{W}$, $R \models \sigma$ if and only if $R[\vect{V}]\models \sigma$. 
We say that an axiomatization $\mathfrak A$ is \emph{invariant under equivalence} (resp. \emph{projection}) w.r.t. $K$ if it contains only formulae that are invariant under equivalence (resp. projection) w.r.t. $K$.

\begin{theorem}\label{thm:sound}
Let $K$ be a semiring that does not have divisors of zero. 
Let $\frakA$ be an axiomatization that is sound for $K$-relations and invariant under equivalence and projection w.r.t. $K$.
For a set of formulae $\Sigma \cup \{\tau\}$,
 $\Sigma \vdash_{\frakA,CL} \tau$ implies $\Sigma \models_K \tau$.
\end{theorem}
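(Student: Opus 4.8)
The plan is to argue by induction along the proof $(\tau_1,\dots,\tau_k)$ witnessing $\Sigma\vdash_{\frakA,CL}\tau$, building a chain of $K$-relations that realizes the proof semantically. Fix a $K$-relation $R$ over the set $\vect{V}$ of non-existential variables with $R\models\Sigma$; the goal is $R\models_K\tau$, i.e.\ $R\models\tau$. Let $\vect{V}_i$ be the set of variables occurring in $\Sigma\cup\{\tau_1,\dots,\tau_i\}$, so $\vect{V}_0=\vect{V}$. I would construct $K$-relations $S_i$ over $\vect{V}_i$, with $S_0=R$, maintaining two invariants: (i) $S_i[\vect{V}_{i-1}]\equiv S_{i-1}$, and (ii) $S_i\models\tau_j$ for every $j\le i$. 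Here the standing hypothesis that $K$ has no divisors of zero is exactly what makes $\equiv$ transitive (Lemma \ref{lem:equiv}), so that (i) composes, via the projection-compatibility of $\equiv$, to $S_i[\vect{V}]\equiv R$ for all $i$. In particular, since $R\models\Sigma$ and every $\sigma\in\Sigma$ has its variables in $\vect{V}$, invariance under projection and equivalence guarantee $S_i\models\sigma$ for all $\sigma\in\Sigma$ and all $i$.

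The three kinds of proof step are then handled as follows. If $\tau_i\in\Sigma$, set $S_i=S_{i-1}$ (no new variables); then $S_i\models\tau_i$ by the observation just made. If $\tau_i$ is obtained by a rule of $\frakA$ from formulae in $\Sigma\cup\{\tau_1,\dots,\tau_{i-1}\}$, again set $S_i=S_{i-1}$; each antecedent already holds in $S_{i-1}$ (by invariant (ii) or by the $\Sigma$-observation), so soundness of $\frakA$ for $K$-relations yields $S_i\models\tau_i$. The interesting case is a Copy Lemma step producing $\tau_i=\pci{\vect{X}}{\vect{Y}}{\vect{Y}'}$ and $\tau_{i+1}=\vect{X}\vect{Y}\approx\vect{X}\vect{Y}'$ with $\vect{Y}'$ fresh: here I apply Lemma \ref{lem:copy} to $S_{i-1}$, copying $\vect{Y}$ conditioned on the remaining current variables, to obtain a relation over $\vect{V}_{i-1}\vect{Y}'$ whose projection to $\vect{V}_{i-1}$ is $\equiv S_{i-1}$ and which satisfies both $\tau_i$ and $\tau_{i+1}$ (the marginal identity for the possibly smaller $\vect{X}$ following by the projection rule MID3). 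Crucially, the old formulae survive this enlargement: for $j<i$ the variables of $\tau_j$ lie in $\vect{V}_{i-1}$, so invariance under projection reduces $S_{i+1}\models\tau_j$ to $S_{i+1}[\vect{V}_{i-1}]\models\tau_j$, and invariance under equivalence reduces this to $S_{i-1}\models\tau_j$, which holds by induction. Thus both invariants persist.

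Finally, $S_k\models\tau$ by invariant (ii); since $\tau$ contains no existential variables, its variables lie in $\vect{V}$, so invariance under projection gives $S_k\models\tau\iff S_k[\vect{V}]\models\tau$, and $S_k[\vect{V}]\equiv R$ with invariance under equivalence yields $R\models\tau$, as required. The main obstacle—and the precise reason the two invariance hypotheses are assumed—is the maintenance of invariant (ii) across Copy Lemma extensions: each enlarged witness agrees with its predecessor only up to a projection and up to normalization, so without \emph{invariance under projection} and \emph{invariance under equivalence} one could not guarantee that formulae established early in the proof (and later consumed as antecedents of $\frakA$-rules) remain true in the enlarged relation, which would break the soundness of those later rule applications. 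A secondary but essential point is that chaining the projections back to $R$ uses transitivity of $\equiv$, which relies on $K$ having no divisors of zero (Lemma \ref{lem:equiv}), exactly the hypothesis under which Lemma \ref{lem:copy} and Proposition \ref{prop:notzero} are available.
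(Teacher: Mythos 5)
Your overall architecture coincides with the paper's: induct along the proof sequence $(\tau_1,\dots,\tau_k)$, maintain a chain of $K$-relations $S_i$ over $\vect{V}_i$ with $S_i[\vect{V}_{i-1}]\equiv S_{i-1}$ and $S_i\models\tau_j$ for $j\le i$, chain the projections back to $R$ via transitivity of $\equiv$ (Lemma \ref{lem:equiv}, which is where the no-zero-divisors hypothesis enters), and use invariance under equivalence and projection to transport satisfaction across the extensions and back down at the end. All of that is correct and is exactly what the paper does.

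The Copy Lemma case, however, contains a genuine error. You apply Lemma \ref{lem:copy} by ``copying $\vect{Y}$ conditioned on the remaining current variables,'' i.e.\ with conditioning block $\vect{X}\vect{W}$, where $\vect{W}=\vect{V}_{i-1}\setminus\vect{X}\vect{Y}$, so that the resulting relation over $\vect{V}_{i-1}\vect{Y}'$ satisfies $\pci{\vect{X}\vect{W}}{\vect{Y}}{\vect{Y}'}$ and $\vect{X}\vect{W}\vect{Y}\approx\vect{X}\vect{W}\vect{Y}'$. The marginal identity $\vect{X}\vect{Y}\approx\vect{X}\vect{Y}'$ does follow by MID3, but the required $\tau_i=\pci{\vect{X}}{\vect{Y}}{\vect{Y}'}$ does \emph{not} follow from $\pci{\vect{X}\vect{W}}{\vect{Y}}{\vect{Y}'}$: shrinking the conditioning set is not a sound inference, and the relation actually built by Lemma \ref{lem:copy} can violate $\tau_i$. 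Concretely, take $\vect{X}=\emptyset$ and let $W,Y$ be perfectly correlated over $\mathbb{R}_{\geq 0}$ (support $\{(0,0),(1,1)\}$, all annotations $1$); copying $Y$ given $W$ yields a relation supported on $\{(0,0,0),(1,1,1)\}$, so $Y'$ coincides with $Y$ on the support and $\pmi{Y}{Y'}$ fails, since $R'(Y{=}0)\,R'(Y'{=}1)=1$ while $R'(\emptyset)\,R'(Y{=}0,Y'{=}1)=0$. The repair is the paper's opposite split: apply Lemma \ref{lem:copy} with conditioning block $\vect{X}$ and copied block $\vect{Y}\vect{W}$, obtaining $\pci{\vect{X}}{\vect{Y}\vect{W}}{\vect{Y}'\vect{W}'}$ and $\vect{X}\vect{Y}\vect{W}\approx\vect{X}\vect{Y}'\vect{W}'$; then $\tau_i$ follows by Symmetry and Decomposition, which Theorem \ref{thm:sound:pci} guarantees are sound for \emph{all} $K$-relations, and $\tau_{i+1}$ by MID3 (Theorem \ref{prop:marg}). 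The auxiliary copy $\vect{W}'$ is discarded by projecting to $\vect{V}_{i+1}$, which your invariance hypotheses already license. With this one replacement your argument goes through verbatim.
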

\begin{proof}
Let $\tau_1, \dots ,\tau_k$ be a proof of $\tau$ from $\Sigma$.
Let $\vect{V}_0$ be the set of non-existential variables appearing in $\Sigma \cup\{\tau_1, \dots ,\tau_k\}$.
For $i \geq 1$, let $\vect{V}_i$, be the extension of $\vect{V}_0$ with all existential variables that appear in 
$\{\tau_1, \dots ,\tau_i\}$. 
Let $R_0(\vect{V}_0)$ be a $K$-relation satisfying $\Sigma$. 
We need to show that $R_0$ satisfies $\tau$. For this,
by invariance under equivalence and projection, it suffices to prove inductively that
 for each $i \in [k]$ there is a 
$K$-relation $R_i(\vect{V}_i)$, $R_i[\vect{V}_0]\equiv R_0$,
satisfying $\Sigma \cup \{\tau_1, \dots ,\tau_i\}$.

The base step is clear. For the inductive step, suppose $R_{i}(\vect{V}_{i})$, $R_i[\vect{V}_0]\equiv R_0$, is a $K$-relation satisfying $\Sigma \cup \{\tau_1, \dots ,\tau_{i}\}$.
Assume first that $\tau_{i+1}$ is from $\Sigma$ or obtained from $\Sigma\cup\{\tau_1, \dots ,\tau_{i}\}$ using a rule from $\frakA$. Then $\vect{V}_{i}=\vect{V}_{i+1}$, and by selecting $R_{i+1}(\vect{V}_{i+1})$ 
as $R_{i}(\vect{V}_{i})$ the induction claim follows readily by induction hypothesis and soundness of $\frakA$. 

Suppose then $\tau_{i+1}$ is obtained
by Copy Lemma. Without loss of generality we may assume that 
$\tau_{i+1}$ and $\tau_{i+2}$ are the results of applying the lemma.
Suppose these formulae are of the form $\pci{\vect{X}}{\vect{Y}}{\vect{Y}'}$ and $\vect{X}\vect{Y} \approx \vect{X}\vect{Y}'$.
In particular, we have $\vect{X}\vect{Y} \subseteq \vect{V}_{i}$ and $\vect{Y}' \cap \vect{V}_{i}=\emptyset$, where $\vect{Y}'$ is the fresh copy of $\vect{Y}$. Let $\vect{W} = \vect{V}_{i} \setminus \vect{X}\vect{Y}$.
By Lemma \ref{lem:copy},
there exists a relation $R_{i+1}(\vect{V}_{i+1})$, $R_{i+1}[\vect{V}_{i}]\equiv R_{i}$, which 
satisfies  $\pci{\vect{X}}{\vect{Y}\vect{W}}{\vect{Y}'\vect{W}'}$ and $\vect{X}\vect{Y}\vect{W} \approx \vect{X}\vect{Y}'\vect{W}'$. Consequently, by Theorems \ref{thm:sound:pci} and \ref{prop:marg}, $R_{i+1}$ 
must also satisfy 
 $\tau_{i+1}$ and 
$\tau_{i+2}$. Thus, by invariance under equivalence and projection, $R_{i+1}$ satisfies $\Sigma \cup \{\tau_1, \dots ,\tau_{i+1}\}$. Moreover, applying induction hypothesis together with Lemmata \ref{lem:phokion} and \ref{lem:equiv} we  obtain that 
$R_{i+1}[\vect{V}_0] = R_{i+1}[\vect{V}_{i}][\vect{V}_0] \equiv R_{i}[\vect{V}_0] \equiv R_0$, and thus $R_{i+1}[\vect{V}_0] \equiv R_0$  because by Lemma \ref{lem:equiv} the equivalence relation is transitive for any $K$ that lacks zero divisors.
This concludes the proof.
%
\end{proof}
 
Using the previous theorem one can expand Example \ref{ex:bchase} to a proof system that includes the Copy Lemma,
and is sound and complete for EMVDs over (possibly infinite) database relations (cf. \cite{HannulaK16} where a principle similar to the Copy Lemma is applied).
In the past, an analogous method involving fresh variables has also been employed to axiomatize functional and inclusion dependencies
 \cite{mitchell83}.

Theorem \ref{thm:sound} can also be applied to find entropic information inequalities.
Consider the Copy Lemma (Fig. \ref{fig:copy}), the polymatroidal axioms (Fig. \ref{fig:polym}), and the axioms of MIDs (Fig. \ref{fig:midaxioms}). Consider also 
non-negative combinations of information inequalities together with a rule  concerning interaction with MIDs (Fig. \ref{fig:infaxioms}). 
Clearly, these rules form an axiomatization that is sound and invariant under equivalence and projection for $\mathbb R_{\geq 0}$-relations, assuming $h$ is interpreted as the Shannon entropy associated with the probability distribution $\frac{1}{|R[\emptyset]|} R$.
In particular, in this context the CI  $\pci{\vect{X}}{\vect{Y}}{\vect{Z}}$ is equivalent to the information inequality
$-I_h(\vect{Y};\vect{Z}\mid \vect{X}) \geq 0$, where $I_h(\vect{Y};\vect{Z}\mid \vect{X})$ is the conditional mutual information defined in \eqref{eq:condmutual}.
The following example presents a purely syntactic derivation of the Zhang-Yeung inequality \eqref{eq:zhang}
within this proof system. We now drop  the subscript $h$ from conditional mutual information to avoid unnecessary repetition.

\begin{figure}[ht]
  \centering
  \begin{tikzpicture}[every node/.style={outer sep=0pt}]
  \def\m{1.4em}
          \node[draw,rounded corners, minimum height=3cm, minimum width=0.6\textwidth,below=0cm of box1.south west,anchor=north west] (box4) {
      \begin{minipage}[t][2.3cm]{1\textwidth}
        \begin{itemize}
        \item If $A_1,\dots,A_n \approx B_1, \dots ,B_n$, then $h(A_1\dots A_n) - h(B_1\dots B_n) \geq 0$ (interaction).
\item If $c_1 h(\vect{X}_1) + \dots +c_nh(\vect{X}_n) \geq 0$ and $d > 0$, then $d(c_1 h(\vect{X}_1) + \dots + c_nh(\vect{X}_n)) \geq 0$ (positive scalar multiplication).
\item If $c_1 h(\vect{X}_1) + \dots +c_nh(\vect{X}_n) \geq 0$ and $d_1 h(\vect{X}_1) + \dots +d_nh(\vect{X}_n) \geq 0$, \\
then $(c_1+d_1) h(\vect{X}_1) + \dots +(c_n+d_n)h(\vect{X}_n) \geq 0$ (addition).
        \end{itemize}
      \end{minipage}
    };
  \end{tikzpicture}
  \caption{Interaction and positive combinations. \label{fig:infaxioms}}
\end{figure}

\begin{example}
  The following formal proof of the Zhang-Yeung inequality is based on the presentations in \cite{laszlo22,Suciu23}.
    \begin{itemize}
        \item[(1)] $-I(CD; D' \mid AB) \geq 0$ 
    \item[(2)] $ABD \approx ABD'$ 
    \makebox(0,0){\put(31,2.2\normalbaselineskip){%
               $\left.\rule{0pt}{1.1\normalbaselineskip}\right\}$ (Copy Lemma, $D'$ existential)}}
    \item[(3)]  $I(C; D\mid D') + I(C; D'\mid B ) + I(D; D'\mid A)
    + I(D; D'\mid BC ) + I(C; D'\mid B ) + I(C; D'\mid A)
    + I(C; D'\mid ABD ) + I(A; B \mid CD') + I(A; B \mid DD') + I(A; D'\mid BCD ) + I(B ; D'\mid CD) \geq 0$ (addition and polymatroidal axioms)
    \item[(4)]  $-I(A;B)+I(A;B\mid C)+I(A;B\mid D)+I(C;D)+ I(A;B\mid D')+I(D';A\mid B)+I(D';B\mid A)+3I(CD;D'\mid AB) \geq 0$ (step 3 rewritten)
    \item[(5)] $-3I(CD; D' \mid AB) \geq 0$ (positive scalar multiplication and step 1)
    \item[(6)] $-I(A;B)+I(A;B\mid C)+I(A;B\mid D)+I(C;D)+ I(A;B\mid D')+I(D';A\mid B)+I(D';B\mid A) \geq 0$ (addition, and steps 3 and 5)
    \item[(7)] $ABD' \approx ABD$ (symmetry and step 2)
    \item[(8)] $AD \approx AD'$ (projection and permutation, and step 2)
    \item[(9)] $BD \approx BD'$ (projection and permutation, and step 2)  
    \item[(10)] $h(ABD') - h(ABD) \geq 0$ (interaction and step 7)
    \item[(11)] $h(AD) - h(AD') \geq 0$ (interaction and step 8)
    \item[(12)] $h(AD) - h(AD') \geq 0$ (interaction and step 9)
    \item[(13)] $3h(ABD') - 3h(ABD) + 2h(AD) - 2h(AD') + 2h(AD) - 2h(AD')\geq 0$ (addition, positive scalar multiplication, and steps 10-12)
    \item[(14)] $-I(A;B)+I(A;B\mid C)+I(A;B\mid D)+I(C;D)+ I(A;B\mid D)+I(D;A\mid B)+I(D;B\mid A) \geq 0$ (addition, and steps 6 and 13)  
  \end{itemize} 
  Since the existential variable $D'$ is not present at the last step, the sequence forms a proof of the Zhang-Yeung inequality. 
  This concludes the example.
\end{example}

Matu\^{s} proved that the almost entropic region $\aent{n}$ is not 
polyhedral by showing that it is bordered by 
a sequence of inequalities ($k\geq 1$)
\begin{equation}\label{eq:matus}
  k \big(-I(A;B) +I(C;D) + I(A;B\mid C)\big) + I(A;D\mid B) + \frac{k(k+1)}{2}I(B;D\mid A) + \frac{k(k+3)}{2}I(A;B\mid D) \geq 0
\end{equation}
 displaying quadratic behaviour \cite{Matus07}. 
The 
Zhang-Yeung inequality \eqref{eq:zhang} in particular is obtained from Eq. \eqref{eq:matus} by plugging in $k=1$. Analogously to the previous example, one can show that Matu\^{s}' inductive proof of the inequalities \eqref{eq:matus} corresponds to a formal proof that combines nested application of Copy Lemma (Fig. \ref{fig:copy}) together with the rules of Figs. \ref{fig:midaxioms} and \ref{fig:infaxioms}.\footnote{
  \cite[Theorem 2]{Matus07}, which states a sequence of inequalities over five variables,
  can  be easily replicated within this proof system. From this,
  \eqref{eq:matus} (i.e., \cite[Corollary 2]{Matus07}) follows by a variable substitution; this step can simulated by taking one extra copy in the beginning of the proof.
}

There have also been proposals to extend and modify the Copy Lemma in order to find 
more non-Shannon inequalities. Nonetheless, the Copy Lemma in its classical form still appears strong enough to be able to prove all known non-Shannon entropic inequalities \cite{GurpinarR19}. Whether or not the Copy Lemma can really prove all entropic inequalities is also related to the so-called \emph{information inequality problem}, which  is to determine whether a given information inequality is entropic. This problem is known to be co-r.e. \cite{KhamisK0S20}, but its decidability is an open question.  Consequently, if the Copy Lemma can be extended to an axiom system able to prove all entropic inequalities, this would entail that
 the information inequality problem is decidable.

\section{Conclusion}
We have studied axiomatic and decomposition properties of conditional independence over $K$-relations. For positive and multiplicatively cancellative $K$, 
we showed that (i) conditional independence corresponds to lossless-join decompositions, (ii) the semigraphoid axioms of conditional independence are sound, and (iii) saturated conditional independence and functional dependence have a sound and complete axiom system, mirroring the sound and complete axiom system of MVDs and FDs.
To demonstrate possible applications, we provided an example data table that admits a lossless-join decomposition only when one of its variables is reinterpreted as a semiring annotation. Finally, we  considered a model-theoretic approach to study the relationships between different CI semantics, and showed that the Copy Lemma, used in information theory to derive non-Shannon inequalities, can be extended to $K$-relations whenever $K$ is positive.

The questions of the axiomatic characterization \cite{parker:1980,studeny:1993,Studeny95} and decidability \cite{herrmann95, Li23} of the CI implication problem have been answered in the negative in different frameworks. 
Having identified positivity and multiplicative cancellativity as the fundamental semiring properties for the notion of conditional independence, we may now ask whether these negative results extend to any $K$ with these characteristics. 
Furthermore, after demonstrating that the Copy Lemma can emulate the chase algorithm in database theory, it would also be worthwhile to investigate potential applications of this lemma in other semiring contexts.
\bibliography{biblio}

\begin{thebibliography}{10}

\bibitem{ABLMP21}
Marcelo Arenas, Pablo Barcel\'o, Leonid Libkin, Wim Martens, and Andreas
  Pieris.
\newblock {\em Database Theory}.
\newblock Open source at \url{https://github.com/pdm-book/community}, 2022.

\bibitem{armstrong74}
William~W. Armstrong.
\newblock {Dependency Structures of Data Base Relationships.}
\newblock In {\em Proc. of IFIP World Computer Congress}, pages 580--583, 1974.

\bibitem{Atserias20}
Albert Atserias and Phokion~G. Kolaitis.
\newblock Consistency, acyclicity, and positive semirings.
\newblock In Alessandra Palmigiano and Mehrnoosh Sadrzadeh, editors, {\em
  Samson Abramsky on Logic and Structure in Computer Science and Beyond}, pages
  623--668, Cham, 2023. Springer International Publishing.

\bibitem{BarlagHKPV23}
Timon Barlag, Miika Hannula, Juha Kontinen, Nina Pardal, and Jonni Virtema.
\newblock Unified foundations of team semantics via semirings.
\newblock In {\em {KR}}, pages 75--85, 2023.

\bibitem{BeeriFH77}
Catriel Beeri, Ronald Fagin, and John~H. Howard.
\newblock A complete axiomatization for functional and multivalued dependencies
  in database relations.
\newblock In {\em {SIGMOD} Conference}, pages 47--61. {ACM}, 1977.

\bibitem{Bochnak98}
J.~Bochnak, M.~Coste, and M-F. Roy.
\newblock {\em Real Algebraic Geometry}.
\newblock Springer, 1998.

\bibitem{DBLP:books/daglib/0067423}
Chen~C. Chang and H.~Jerome Keisler.
\newblock {\em Model theory, Third Edition}, volume~73 of {\em Studies in logic
  and the foundations of mathematics}.
\newblock North-Holland, 1992.

\bibitem{ChuMRCS18}
Shumo Chu, Brendan Murphy, Jared Roesch, Alvin Cheung, and Dan Suciu.
\newblock Axiomatic foundations and algorithms for deciding semantic
  equivalences of {SQL} queries.
\newblock {\em Proc. {VLDB} Endow.}, 11(11):1482--1495, 2018.

\bibitem{laszlo22}
Laszlo Csirmaz.
\newblock Information inequalities.
\newblock Presentation at the Dagstuhl Seminar 22301: Algorithmic Aspects of
  Information Theory, 2022.

\bibitem{dempster}
A.~P. Dempster.
\newblock A generalization of bayesian inference.
\newblock {\em Journal of the Royal Statistical Society. Series B
  (Methodological)}, 30(2):205--247, 1968.

\bibitem{dougherty2011nonshannon}
Randall Dougherty, Chris Freiling, and Kenneth Zeger.
\newblock Non-shannon information inequalities in four random variables, 2011.
\newblock \href {https://arxiv.org/abs/1104.3602} {\path{arXiv:1104.3602}}.

\bibitem{HKMV18}
Arnaud Durand, Miika Hannula, Juha Kontinen, Arne Meier, and Jonni Virtema.
\newblock Probabilistic team semantics.
\newblock In {\em Foundations of Information and Knowledge Systems - 10th
  International Symposium, FoIKS 2018, Budapest, Hungary, May 14-18, 2018,
  Proceedings}, pages 186--206, 2018.
\newblock \href {https://doi.org/10.1007/978-3-319-90050-6\_11}
  {\path{doi:10.1007/978-3-319-90050-6\_11}}.

\bibitem{GallianiV22}
Pietro Galliani and Jouko V{\"{a}}{\"{a}}n{\"{a}}nen.
\newblock Diversity, dependence and independence.
\newblock {\em Ann. Math. Artif. Intell.}, 90(2-3):211--233, 2022.

\bibitem{GeigerVP90}
Dan Geiger, Thomas Verma, and Judea Pearl.
\newblock Identifying independence in bayesian networks.
\newblock {\em Networks}, 20(5):507--534, 1990.

\bibitem{GreenKT07}
Todd~J. Green, Gregory Karvounarakis, and Val Tannen.
\newblock Provenance semirings.
\newblock In {\em {PODS}}, pages 31--40. {ACM}, 2007.

\bibitem{GurpinarR19}
Emirhan G{\"{u}}rpinar and Andrei~E. Romashchenko.
\newblock How to use undiscovered information inequalities: Direct applications
  of the copy lemma.
\newblock In {\em {ISIT}}, pages 1377--1381. {IEEE}, 2019.

\bibitem{GyssensNG14}
Marc Gyssens, Mathias Niepert, and Dirk~Van Gucht.
\newblock On the completeness of the semigraphoid axioms for deriving arbitrary
  from saturated conditional independence statements.
\newblock {\em Inf. Process. Lett.}, 114(11):628--633, 2014.

\bibitem{HannulaK16}
Miika Hannula and Juha Kontinen.
\newblock A finite axiomatization of conditional independence and inclusion
  dependencies.
\newblock {\em Inf. Comput.}, 249:121--137, 2016.
\newblock \href {https://doi.org/10.1016/j.ic.2016.04.001}
  {\path{doi:10.1016/j.ic.2016.04.001}}.

\bibitem{HannulaV22}
Miika Hannula and Jonni Virtema.
\newblock Tractability frontiers in probabilistic team semantics and
  existential second-order logic over the reals.
\newblock {\em Ann. Pure Appl. Log.}, 173(10):103108, 2022.

\bibitem{herrmann95}
Christian Herrmann.
\newblock On the undecidability of implications between embedded multivalued
  database dependencies.
\newblock {\em Information and Computation}, 122(2):221 -- 235, 1995.

\bibitem{parker:1980}
Douglas Stott~Parker Jr. and Kamran Parsaye{-}Ghomi.
\newblock Inferences involving embedded multivalued dependencies and transitive
  dependencies.
\newblock In {\em {SIGMOD} Conference}, pages 52--57. {ACM} Press, 1980.

\bibitem{KenigS22}
Batya Kenig and Dan Suciu.
\newblock Integrity constraints revisited: From exact to approximate
  implication.
\newblock {\em Log. Methods Comput. Sci.}, 18(1), 2022.

\bibitem{KhamisK0S20}
Mahmoud~Abo Khamis, Phokion~G. Kolaitis, Hung~Q. Ngo, and Dan Suciu.
\newblock Decision problems in information theory.
\newblock In {\em {ICALP}}, volume 168 of {\em LIPIcs}, pages 106:1--106:20,
  2020.

\bibitem{kuhne22}
Lukas K{\"{u}}hne and Geva Yashfe.
\newblock On entropic and almost multilinear representability of matroids.
\newblock {\em CoRR}, abs/2206.03465, 2022.

\bibitem{Lee87}
Tony~T. Lee.
\newblock An information-theoretic analysis of relational databases - part {I:}
  data dependencies and information metric.
\newblock {\em {IEEE} Trans. Software Eng.}, 13(10):1049--1061, 1987.

\bibitem{Li23}
Cheuk~Ting Li.
\newblock Undecidability of network coding, conditional information
  inequalities, and conditional independence implication.
\newblock {\em {IEEE} Trans. Inf. Theory}, 69(6):3493--3510, 2023.

\bibitem{Matus07}
Frantisek Mat{\'{u}}s.
\newblock Infinitely many information inequalities.
\newblock In {\em {ISIT}}, pages 41--44. {IEEE}, 2007.

\bibitem{mitchell83}
John~C. Mitchell.
\newblock The implication problem for functional and inclusion dependencies.
\newblock {\em Information and Control}, 56(3):154--173, 1983.
\newblock \href {https://doi.org/DOI: 10.1016/S0019-9958(83)80002-3}
  {\path{doi:DOI: 10.1016/S0019-9958(83)80002-3}}.

\bibitem{Pearl88a}
J.~Pearl.
\newblock {\em Probabilistic Reasoning in Intelligent Systems: Networks of
  Plausible Inference}.
\newblock Morgan Kaufmann, San Mateo, CA, 1988.

\bibitem{SagivDPF81}
Yehoshua Sagiv, Claude Delobel, Douglas Stott~Parker Jr., and Ronald Fagin.
\newblock An equivalence between relational database dependencies and a
  fragment of propositional logic.
\newblock {\em J. {ACM}}, 28(3):435--453, 1981.

\bibitem{shafer1976mathematical}
Glenn Shafer.
\newblock {\em A Mathematical Theory of Evidence}.
\newblock Princeton University Press, Princeton, 1976.

\bibitem{Spohn1988}
Wolfgang Spohn.
\newblock Ordinal conditional functions. a dynamic theory of epistemic states.
\newblock In W.~L. Harper and B.~Skyrms, editors, {\em Causation in Decision,
  Belief Change, and Statistics, vol. II}. Kluwer Academic Publishers, 1988.

\bibitem{Studeny89}
Milan Studen\'{y}.
\newblock Multiinformation and the problem of characterization of conditional
  independence relations.
\newblock {\em Problems of Control and Information Theory}, 18(1):3--16, 1989.

\bibitem{studeny:1993}
Milan Studen\'{y}.
\newblock Conditional independence relations have no finite complete
  characterization.
\newblock {\em Transactions of the 11th Prague Conference on Information
  Theory}, pages 377--396, 1992.

\bibitem{Studeny93}
Milan Studen{\'{y}}.
\newblock Formal properties of conditional independence in different calculi of
  {AI}.
\newblock In {\em {ECSQARU}}, volume 747 of {\em Lecture Notes in Computer
  Science}, pages 341--348. Springer, 1993.

\bibitem{Studeny95}
Milan Studen{\'{y}}.
\newblock Conditional independence and natural conditional functions.
\newblock {\em Int. J. Approx. Reason.}, 12(1):43--68, 1995.

\bibitem{Suciu23}
Dan Suciu.
\newblock Applications of information inequalities to database theory problems.
\newblock In {\em {LICS}}, pages 1--30, 2023.

\bibitem{ZADEH19783}
L.A Zadeh.
\newblock Fuzzy sets as a basis for a theory of possibility.
\newblock {\em Fuzzy Sets and Systems}, 1(1):3--28, 1978.

\bibitem{ZhangY98}
Zhen Zhang and Raymond~W. Yeung.
\newblock On characterization of entropy function via information inequalities.
\newblock {\em {IEEE} Trans. Inf. Theory}, 44(4):1440--1452, 1998.

\end{thebibliography}

\appendix

\section{Embeddings}\label{sect:embeddings}

We commence this section with two simple propositions.
 \begin{proposition}\label{prop:natordplus}
Any naturally ordered semiring is $\kplus$-positive.
\end{proposition}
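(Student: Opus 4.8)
The plan is to unpack the two definitions and observe that $\kplus$-positivity falls out almost immediately from the antisymmetry that is built into the hypothesis ``naturally ordered''. Recall that $K$ is naturally ordered precisely when the preorder $\leq_K$, defined by $a \leq_K b \logeq \exists c : a \kplus c = b$, is antisymmetric, and that $\kplus$-positivity is the statement that $a \kplus b = 0$ forces $a = b = 0$.

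First I would record that $0$ is the least element with respect to $\leq_K$: for every $x \in K$, the witness $c = x$ gives $0 \kplus x = x$, so $0 \leq_K x$ always holds. Next, starting from the assumption $a \kplus b = 0$, I would read this equation directly as an instance of the defining condition for the natural order. Taking the witness $c = b$ in $a \kplus c = 0$ shows $a \leq_K 0$; and since $\kplus$ is commutative we also have $b \kplus a = 0$, so taking $c = a$ shows $b \leq_K 0$.

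Finally, I would combine $0 \leq_K a$ with $a \leq_K 0$ and invoke antisymmetry of $\leq_K$ (which holds because $K$ is naturally ordered) to conclude $a = 0$; the symmetric argument yields $b = 0$. Hence $a \kplus b = 0$ implies $a = b = 0$, which is exactly $\kplus$-positivity.

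There is no serious obstacle here: the only points requiring care are ensuring that antisymmetry is precisely the content of the ``naturally ordered'' hypothesis, and that commutativity of addition is available so that the argument for $b$ mirrors the one for $a$. Both are granted by the standing assumptions on $K$.
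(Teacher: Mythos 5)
Your proof is correct and follows essentially the same route as the paper's: from $a \kplus b = 0$ you deduce $a \leq_K 0$ and $b \leq_K 0$, observe that $0 \leq_K c$ for all $c$, and conclude by antisymmetry. The extra detail you supply (naming the explicit witnesses and invoking commutativity for the $b$ case) is just a more careful spelling-out of the same argument.
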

\begin{proof}
Suppose $\leq$ is the natural order of $(K,\kplus,\ktimes,0,1)$. Let $a,b\in K$ be such that $a+b =0$. Then by definition of natural order, $a\leq 0$ and $b\leq 0$.
Clearly $0\leq c$ for all $c \in K$. Hence we conclude by antisymmetry that $a=0$ and $b=0$.  
\end{proof}

\begin{proposition}\label{prop:multdiv}
Any multiplicatively cancellative semiring has no divisors of zero.
\end{proposition}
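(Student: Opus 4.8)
The plan is to argue by contradiction, exploiting the semiring axiom $x \ktimes 0 = 0$ to manufacture a cancellable product. Suppose $K$ is multiplicatively cancellative but nonetheless has divisors of zero; then by definition there are $a,b \in K$ with $a \neq 0$, $b \neq 0$, and $ab = 0$.

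Next I would rewrite $ab = 0$ using the axiom $a \ktimes 0 = 0$, obtaining $ab = a \ktimes 0$. Now both sides are products with the common left factor $a$, and $a \neq 0$, so multiplicative cancellativity (in the instance $c = 0$) forces $b = 0$, contradicting $b \neq 0$. This contradiction establishes that $K$ has no divisors of zero.

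The argument is essentially immediate, so I do not anticipate any genuine obstacle. The only points to keep in mind are that the definition of ``no divisors of zero'' quantifies over \emph{nonzero} $a,b$, and that the relevant application of the cancellation law is the one in which the second factor on one side is the zero element. A direct phrasing works equally well: for any $a \neq 0$, the equation $ab = 0 = a \ktimes 0$ yields $b = 0$ by cancellativity, so $ab = 0$ with $a \neq 0$ already forces $b = 0$, which is precisely the failure of the divisor-of-zero condition.
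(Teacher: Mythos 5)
Your proof is correct and is essentially the same one-line argument as in the paper: the paper writes $ab = 0 = 0\ktimes b$ and cancels the nonzero factor $b$ to get $a=0$, while you write $ab = a\ktimes 0$ and cancel the nonzero factor $a$ to get $b=0$; by commutativity these are the same argument. Your version even matches the paper's stated form of the cancellation law more literally.
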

\begin{proof}
If $ab=0$ and $b\neq 0$, then multiplicative cancellativity and $ab=0b$ implies $a=0$.
\end{proof}

Let $\tau$ be a first-order vocabulary consisting of function and relation symbols (constant symbols can be viewed as $0$-ary function symbols). We write 
write $\ar(\alpha)$ for the \emph{arity} of a symbol $\alpha \in \tau$. 
Given a $\tau$-structure $\calM$ and an element $\alpha$ from $\tau$, we write $\alpha^{\calM}$ for the interpretation of $\alpha$ in $\calM$.
Let $\calA$ and $\calB$ be two $\tau$-structures with domains $A$ and $B$.
We call $\calA$ a \emph{submodel} of $\cal{B}$, written $\calA \subseteq \calB$, if 
$A \subseteq B$, and the interpretation of every function symbol and relation symbol in $\tau$ is inherited from $\calB$; i.e., for each $\alpha \in \tau$, $\alpha^{\calA}$ is the restriction $\alpha^{\calB}\upharpoonright A^k$ of $\alpha^{\calB}$ to $A^k$.  We say that $\calA$ and $\calB$ are \emph{isomorphic}, written $\calA \cong \calB$, if there exists a
bijection (called an \emph{isomorphism} between $\calA$ and $\calB$) $\pi:A \to B$ such that 
\begin{itemize}
  \item $\pi(f^{\calA}(a_1, \dots ,a_{\ar(f)}))=f^{\calB}(\pi(a_1), \dots ,\pi(a_{\ar(f)}))$, for all  function symbols $f\in \tau$ and elements $a_1, \dots ,a_k\in A$, and
  \item $(a_1, \dots , a_{\ar(R)})\in R^{\calA} \iff (\pi(a_1), \dots , \pi(a_{\ar(R)}))\in R^{\calB}$, for all  relation symbols $R\in \tau$ and elements $a_1, \dots ,a_k\in A$.
\end{itemize}
We say that $\calA$ \emph{embeds in} $\calB$, written $\calA \preccurlyeq \calB$, if $\calA$ and some submodel of $\calB$ are isomorphic.

The embedding results that follow are proven using the concept of a congruence. We write $[n]$ for the set of integers $\{1, \dots, n\}$.
A \emph{congruence} on a structure $S=(A,(R_i)_{i\in [n]}, (f_j)_{j\in [m]})$ is an equivalence relation $\sim$ that preserves all relations $R_i$ and functions $f_j$, i.e.,  
  \begin{enumerate}
  \item for $i \in [n]$ and $a_1 \sim b_1, \ldots ,a_{\ar(R_i)}\sim b_{\ar(R_i)}$, $(a_1, \ldots, ,a_{\ar(R_i)}) \in R_i$ implies\\ $(b_1, \ldots, ,b_{\ar(R_i)}) \in R_i$; and
  \item for $j \in [m]$ and $a_1 \sim b_1, \ldots, a_{\ar(f_i)}\sim b_{\ar(f_i)}$, $f_j(a_1, \ldots, ,a_{\ar(f_i)}) \sim f_j(b_1, \ldots, ,b_{\ar(f_i)})$.
  \end{enumerate}
  The structure $S /\sim \,\coloneqq (A/\sim, (R_i/\sim)_{i\in [n]}, (f_j/\sim)_{j\in [m]})$ is called the \emph{quotient} of $S$ w.r.t. $\sim$.

Lemma \ref{prop:embedding} now expands a semiring with multiplicative inverses analogously to the construction of positive rationals as fractions of natural numbers, and Lemma \ref{prop:embedding2} expands a semifield  with additive inverses analogously to the construction of rationals as pairs of positive rationals.
 Lemma \ref{lem:semi-cons}  is then obtained as a consequence of Propositions \ref{prop:natordplus} and \ref{prop:multdiv}, and Lemmata \ref{prop:embedding} and \ref{prop:embedding2}.

  \possemi*
\begin{proof}
  Let $(K, \kplus, \ktimes, 0, 1)$ be a positive multiplicatively cancellative semiring. Recall that $K$ has a preorder defined as $a \leq b \colon \iff \exists c(a\kplus c=b)$. 
Define a structure 
  $S = (K \times K\setminus \{0\}, \ktimes_S, 0_S, 1_S, \leq_S)$ by 
  \begin{itemize}
    \item $(a,b) \kplus_S (c,d) \coloneqq (ad \kplus bc, bd)$,
    \item $(a,b) \ktimes_S (c,d) \coloneqq (ac, bd)$,
    \item $0_S \coloneqq (0,1)$,
    \item $1_S \coloneqq (1,1)$,
  \end{itemize}
  Note that that $S$ is not a semiring, since multiplication does not distribute over addition. 
  Define an equivalence relation $\sim$ on $K \times K\setminus\{0\}$ such that $(a,b)\sim (c,d)$ if $ad = bc$. It is straightforward to show, without using multiplicative cancellativity, that
  $\sim$ is a congruence on $S$.   
  We claim that the quotient of $S$ w.r.t. $\sim$ is a positive semifield. 
  
  It is straightforward to verify that  $S/\sim$ is a semifield. To show that $S/\sim$ is positive, we consider first $\kplus$-positivity. Suppose $(a,b)\kplus_S (c,d) \sim 0_S$, in which case we have $ad \kplus bc=0$. Since $K$ is positive and $b\neq 0 \neq d$, it must be the case that $a = 0 = c$, i.e., $(a,b)\sim 0_S \sim (c,d)$. Let us then consider divisibility of zero. Suppose $(a,b)\ktimes_S (c,d) \sim 0_S$, in which case we have $ad=0$. Since $K$ does not have divisors of zero, one of $a$ or $c$ must be zero, i.e., $(a,b)\sim 0_S$ or  $ (c,d)\sim 0_S$.  
  
  We note that $a \mapsto [(a,1)]$, where $[(a,1)]$ is the equivalence class of $(a,1)$,  is clearly an injection from $K$ to $S/\sim$. Hence $K$ embeds in $S/\sim$. This concludes the proof of the first statement of the lemma.
   
 Next we show that $S/\sim$ is additively cancellative whenever $K$ is also additively cancellative. For this, we need to show that $(a,b) \sim (a',b')$ follows from $(a,b) \kplus_S (c,d) \sim (a',b') \kplus_S (c,d)$. Indeed, by definition the latter can be rewritten as $ab'dd\kplus bb'cd = a'bdd \kplus bb'cd$, which in turn yields $(a,b) \sim (a',b')$ by additive and multiplicative cancellativity. This concludes the proof of the second statement.
  
  Finally, we need to show that $S/\sim$ is naturally totally ordered whenever $K$ is naturally totally ordered. To do so, we first define an ordering relation
   \[(a,b) \leq_S (c,d) \colon\iff a  d \leq b  d,\]
   where $\leq$ is the natural order of $K$. Observe that $\sim$ is a congruence also on the extension of $S$ with $\leq_S$.
   For this, note that $(a,b) \leq_S (c,d)$ and $(a,b)\sim (a',b')$ implies $ad \leq bc$ and $ab'=a'b$. Since vacuously $b'\geq 0$, we obtain $adb' \leq bcb'$ and hence $a'db \leq bcb'$. By multiplicative cancellativity we conclude that $a'd \leq b'c$, i.e., $(a',b') \leq_S (c,d)$. A symmetric argument  shows that $(a,b) \leq_S (c,d)$ and $(c,d)\sim (c',d')$ implies $(a,b) \leq (c',d')$.

 We now claim that $\leq_S /\sim $ is defined by Eq. \eqref{eq:nat}.
  Recall that $K$ is multiplicatively cancellative, and $\ktimes$ is monotone under $\leq$.
 Hence, if $a,b\in K$ and $d \in K\setminus\{0\}$, then  $(a,d) \leq_S (b,d)$ if and only if $a \leq b$. Since $\leq$ is the natural order of $K$, this is holds exactly when there exists $c\in K$ such that $a+c =b$. Again, by positivity and multiplicative cancellativity of $K$ this is equivalent to $(a+c)d^2=bd^2$, which is $(a,d)\kplus_S (c,d) = (b,d)$ rewritten.  From this, the claim follows.

Note that $\leq_S /\sim $ is antisymmetric. Suppose $(a,b) \leq_S (c,d) \leq_S (a,b)$. Then $ad\leq bc \leq ad$, meaning that $ad=bc$ by antisymmetry of $\leq$. This entails that $(a,b) \sim (c,d)$.

We conclude that $S /\sim $ is naturally ordered by $\leq_S /\sim $.
It is clear that $\leq_S/\sim$ is a total order, because $\leq$ is total on $K$. Therefore, $S/\sim$ is naturally totally ordered whenever $K$ is naturally totally ordered. This concludes the proof of the third statement.
\end{proof}

  \begin{lemma}\label{prop:embedding2}
  Any naturally totally ordered additively cancellative semifield embeds in a totally ordered field.
\end{lemma}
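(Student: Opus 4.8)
The plan is to mimic the passage from $\mathbb{Q}_{\geq 0}$ to $\mathbb{Q}$ by formal differences, i.e.\ to build the ring of differences of $K$. Write $(K,\kplus,\ktimes,0,1)$ for the given semifield, and recall it carries the natural order $\leq$ defined by $a \leq b \iff \exists c(a \kplus c = b)$, which we are assuming is total. On $K \times K$ define the relation $(a,b) \sim (c,d) \iff a \kplus d = b \kplus c$, where $(a,b)$ is to be read as the formal difference ``$a-b$''. I would first check that $\sim$ is an equivalence relation: reflexivity and symmetry are immediate, and transitivity follows from additive cancellativity. Next I would equip $K \times K$ with the structure
\[(a,b) \kplus_S (c,d) \coloneqq (a\kplus c,\, b \kplus d), \qquad (a,b)\ktimes_S (c,d) \coloneqq (ac \kplus bd,\, ad \kplus bc),\]
together with $0_S \coloneqq (0,0)$ and $1_S \coloneqq (1,0)$, and verify that $\sim$ is a congruence. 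Additive congruence is routine; multiplicative congruence reduces, after expanding with distributivity, to an identity that is closed by cancelling a common summand via additive cancellativity. I would then set $F \coloneqq (K\times K)/\sim$ and confirm it is a commutative ring in which the additive inverse of $[(a,b)]$ is $[(b,a)]$, and where $1_F = [(1,0)] \neq [(0,0)] = 0_F$ since $K$ is non-trivial.

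The next step is to promote $F$ to a field. Here totality of $\leq$ enters: given a nonzero class $[(a,b)]$ (so $a \neq b$), totality lets me assume, up to the symmetry $[(a,b)] = -[(b,a)]$, that $b \leq a$, whence $b \kplus e = a$ for some $e \neq 0$ (nonzero because $a \neq b$) and therefore $[(a,b)] = [(e,0)]$. Since $K$ is a semifield, $e$ has an inverse $e^{-1}$, and $[(e,0)]\ktimes_S[(e^{-1},0)] = [(1,0)] = 1_F$; the case $a \leq b$ is handled by negation. Thus every nonzero element is invertible, so $F$ is a field, and the map $a \mapsto [(a,0)]$ is an injective semiring homomorphism embedding $K$ into $F$.

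It remains to order $F$. I would define $[(a,b)] \leq_F [(c,d)] \iff a \kplus d \leq b \kplus c$, reading this as $a-b \leq c-d$. Well-definedness follows from the fact that addition in $K$ is both monotone and order-reflecting, the latter because additive cancellativity turns $x \kplus w \leq y \kplus w$ into $x \leq y$. Totality, reflexivity, and transitivity of $\leq_F$ transfer from $\leq$ using the same two properties, antisymmetry uses antisymmetry of $\leq$, and the embedding $a \mapsto [(a,0)]$ is order-preserving, so $\leq_F$ extends $\leq$. Additive monotony of $\leq_F$ is immediate from additive monotony of $\leq$. For monotony of non-negative multiplication, note that $[(a,b)] \geq 0_F$ means $b \leq a$ and $[(c,d)] \geq 0_F$ means $d \leq c$; writing $a = b \kplus p$ and $c = d \kplus q$ and expanding yields $ac \kplus bd = (ad \kplus bc) \kplus pq$, so $ad \kplus bc \leq ac \kplus bd$, i.e.\ the product is $\geq 0_F$. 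Hence $F$ is a totally ordered field into which $K$ embeds.

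I expect the main obstacle to be bookkeeping rather than a single conceptual difficulty: the delicate points are checking that $\sim$ is a multiplicative congruence and that $\leq_F$ is well defined, both of which hinge on the interplay of distributivity with additive cancellativity---precisely the hypothesis that distinguishes this case from the merely multiplicatively cancellative setting of Lemma~\ref{prop:embedding}. The care lies in organizing the cancellations correctly. A secondary point to handle cleanly is the reduction of an arbitrary nonzero class to the form $[(e,0)]$, where totality of the natural order is what lets me invoke the semifield inverse.
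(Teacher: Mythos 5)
Your proposal is correct and follows essentially the same route as the paper: the same ring-of-differences construction on $K\times K$ with $(a,b)\ktimes(c,d)=(ac\kplus bd,\,ad\kplus bc)$, the same congruence $a\kplus d=b\kplus c$, the same order $a\kplus d\leq b\kplus c$, the same use of totality to reduce a nonzero class to $[(e,0)]$ or $[(0,e)]$ before invoking the semifield inverse, and the same $ef\geq 0$ computation for monotony of non-negative multiplication. The only negligible difference is that the multiplicative-congruence check actually closes by direct substitution of $a\kplus b'=a'\kplus b$ rather than by an additional cancellation.
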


\begin{proof}
  Let $S=(S, \kplus, \ktimes, 0, 1,\leq)$ naturally totally ordered additively cancellative semifield. 
  Define a structure 
  $F = (S\times S, \kplus_F, \ktimes_F, 0_F, 1_F, \leq_F)$ by 
  \begin{itemize}
    \item $(a,b) \kplus_F (c,d) \coloneqq (a\kplus c, b \kplus d)$,
    \item $(a,b) \ktimes_F (c,d) \coloneqq (ac \kplus bd, ad \kplus bc)$,
    \item $0_F \coloneqq (0,0)$,
    \item $1_F \coloneqq (1,0)$,
    \item $(a,b) \leq_F (c,d) \colon\iff a \kplus d \leq b \kplus c$.
  \end{itemize}
  Define an equivalence relation $\sim$ on $S\times S$ such that $(a,b) \sim (c,d)$ if and only if
$a \kplus d = b \kplus c$. Then, $\sim$ is a congruence on $F$. We prove the cases
for multiplication  and inequality, leaving addition for the reader.

For multiplication, it suffices to prove that $(a,b) \kplus_F (c,d) \sim (a',b') \kplus_F (c,d)$ assuming $(a,b)\sim (a',b')$.
Indeed, we have
\begin{align*}
  (a,b) \kplus_F (c,d) \sim (a',b') \kplus_F (c,d)
  &\iff 
  (ac \kplus bd , ad \kplus bc) \sim (a'c \kplus b'd , a'd \kplus b'c)\\
  &\iff (a\kplus b')c \kplus (a'\kplus b)d = (a' \kplus b)c \kplus (a\kplus b')d,
\end{align*}  
which holds by the assumption.

For inequality, it suffices to show that $(a,b) \leq_F (c,d)$ if and only if $(a',b')\leq_F (c,d)$, assuming $(a,b)\sim (a',b')$.
Using assumption,  additive cancellativity, and additive monotony, we obtain
\begin{align*}
  (a,b) \leq_F (c,d) &\iff a \kplus d \leq b \kplus c \iff a \kplus d \kplus b' \leq b \kplus c \kplus b' \\
  &\iff a' \kplus d \kplus b \leq b \kplus c \kplus b'
  \iff a' \kplus d \leq c \kplus b'\iff (a',b') \leq_F (c,d).
\end{align*}

We conclude that $\sim$ is a congruence on $F$. It is now easy to verify that the quotient of $F$ w.r.t. $\sim$ is a totally ordered field.
We consider only multiplicative inverses and monotony of non-negative multiplication, leaving the 
remaining properties for the reader.

For multiplicative inverses, using the natural total order, for any $(a,b)$ from  $S\times S$
one finds a pair of $S\times S$ of the form $(c,0)$ (or $(0,c)$, resp.) such that $(a,b) \sim (c,0)$ ($(a,b) \sim (0,c)$, resp.).
Then, taking the multiplicative inverse $c^{-1}$ of $c$, we observe $(c,0) \ktimes_F (c^{-1},0) = (0,c) \ktimes_F (0,c^{-1}) = (1,0)$.

For monotony of non-negative multiplication, 
 assume $(a,b) \geq_F (0,0)$ and $(c,d)\geq_F (0,0)$, that is,
$a \geq b$ and $c \geq d$. Since $\leq$ is the natural order of $S$, we find
$e,f\in S$ such that $a = b \kplus e$ and $c = d\kplus f$. We have
\begin{align*}
  (a,b) \ktimes_F (c,d) \geq_F (0,0) &\iff (ac \kplus bd, ad \kplus bc) \geq_F (0,0)\\
  &\iff (b\kplus e)(d\kplus f) \kplus bd \geq (b \kplus e)d \kplus b(d\kplus f)\\
  &\iff ef \geq 0,
\end{align*}
which holds true because $\geq$ is the natural order of $S$. 

Clearly, $a \mapsto [(a,0)]$ is an injection from $S$ to $F/\sim$. This concludes the proof.
\end{proof}

\section{Basic properties of $K$-relations}\label{sect:keq}

\phokion*
\begin{proof}
For sake of completeness, we repeat from \cite{Atserias20} the proof of the first statement. The inclusion $R[\vect{Y}]' \subseteq R'[\vect{Y}]$ holds clearly for all semirings. For the converse direction, if $t$ belongs to $R'[\vect{Y}]$, then we find $t'\in R'$ such that $t'[\vect{Y}]=t$, in which case by $\kplus$-positivity, $t$ must belong to $R[\vect{Y}]'$.

The second statement follows by 
\[
R[\vect{Y}][\vect{Z}](u) =\bigkplus_{\substack{v \in \tup(\vect{Y})\\ v[\vect{Z}]=u}} R[\vect{Y}](v) 
= \bigkplus_{\substack{v \in \tup(\vect{Y})\\ v[\vect{Z}]=u}} \bigkplus_{\substack{w \in \tup(\vect{X})\\ w[\vect{Y}]=v}} R(w)
= \bigkplus_{\substack{w \in \tup(\vect{X})\\ w[\vect{Z}]=u}} R(w) =  R[\vect{Z}](u),
\]
obtained by applying Eq. \eqref{eq:marg} repeatedly.
\end{proof}
In particular, by using Eq. \eqref{eq:marg} as the definition of the marginal, it becomes evident that the second statement holds without the assumption of $K$ being positive.

\simple*
\begin{proof}
  \begin{enumerate}
    \item  Let $a,b\in K\setminus\{0\}$ be such that $aR = bR'$. 
    Then, for any $\vect{W}$-tuple $t$ it holds that
    \[
      aR(t) = \bigkplus_{\substack{t'\in \tup(\vect{V})\\t'[\vect{W}]=t}}aR(t') = \bigkplus_{\substack{t'\in \tup(\vect{V})\\t'[\vect{W}]=t}}bR'(t') = b R'(t). 
    \]
    \item Let $a,b,c,d\in K\setminus\{0\}$ be such that $aR = bR'$ and $cR' = dR''$. Then it is easy to see that
    $acR = cdR''$. Since $K$ has no divisors of zero, $ac, cd \in K\setminus \{0\}$. Thus $R\equiv R''$.
  \end{enumerate}
\end{proof}

\section{Graphoid axioms}\label{sect:graphoid}

We will use the following helping lemma in the proof of Theorem \ref{thm:sound:pci}.
\begin{restatable}{lemma}{poslemma}\label{lem:poslemma}
Let $R(\vect{X})$ be a $K$-relation, where $K$ is $\kplus$-positive. Let $t$ be a tuple of $R$, and let $\vect{Y},\vect{Z}$ be variable sets such that $\vect{Z}\subseteq \vect{Y}\subseteq \vect{X}$.
Then,  $t[\vect{Y}] \in R[\vect{Y}]'$ implies $t[\vect{Z}] \in R[\vect{Z}]'$. 
\end{restatable}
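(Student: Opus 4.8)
The plan is to reduce everything to the support-projection identity of Lemma~\ref{lem:phokion}. Recall that $t[\vect{Y}]\in R[\vect{Y}]'$ means precisely that $R[\vect{Y}](t[\vect{Y}])\neq 0$, where by definition
\[
R[\vect{Y}](t[\vect{Y}]) = \bigkplus_{\substack{t'\in \tup(\vect{X})\\ t'[\vect{Y}]=t[\vect{Y}]}} R(t').
\]
Since $K$ is $\kplus$-positive, a finite $\kplus$-sum vanishes only when every summand vanishes; hence a nonzero marginal value forces at least one witnessing tuple $t'\in\tup(\vect{X})$ with $t'[\vect{Y}]=t[\vect{Y}]$ and $R(t')\neq 0$, i.e.\ $t'\in R'$. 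This is exactly the content of the first part of Lemma~\ref{lem:phokion}, which gives $R[\vect{Y}]'=R'[\vect{Y}]$ under $\kplus$-positivity, so I would simply invoke it to obtain such a $t'$.

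With the witness $t'$ in hand, the remaining step is purely combinatorial: because $\vect{Z}\subseteq \vect{Y}$, projecting the equality $t'[\vect{Y}]=t[\vect{Y}]$ further onto $\vect{Z}$ yields $t'[\vect{Z}]=t[\vect{Z}]$. Thus $t[\vect{Z}]=t'[\vect{Z}]\in R'[\vect{Z}]$, and applying Lemma~\ref{lem:phokion}(1) a second time (this time to the index set $\vect{Z}$) gives $R'[\vect{Z}]=R[\vect{Z}]'$, whence $t[\vect{Z}]\in R[\vect{Z}]'$, as required.

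An equivalent route, which makes the role of $\kplus$-positivity fully explicit, is to bypass Lemma~\ref{lem:phokion}(1) and argue directly through the marginal. Using the transitivity of marginals, Lemma~\ref{lem:phokion}(2), we have $R[\vect{Z}]=R[\vect{Y}][\vect{Z}]$, so
\[
R[\vect{Z}](t[\vect{Z}]) = \bigkplus_{\substack{v\in \tup(\vect{Y})\\ v[\vect{Z}]=t[\vect{Z}]}} R[\vect{Y}](v).
\]
The tuple $t[\vect{Y}]$ occurs among the indices $v$ of this sum (since $t[\vect{Y}][\vect{Z}]=t[\vect{Z}]$), and its contribution $R[\vect{Y}](t[\vect{Y}])$ is nonzero by hypothesis; $\kplus$-positivity then forces the whole sum to be nonzero, giving $t[\vect{Z}]\in R[\vect{Z}]'$.

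I do not anticipate a genuine obstacle here: the statement is essentially the observation that support is preserved downward along projections whenever no additive cancellation can occur. The only points requiring care are that $\kplus$-positivity (rather than full positivity) is all that is used, and that the relevant sums are finite, so that the ``vanishing sum implies vanishing summands'' principle applies by a trivial induction on the number of summands; both of these are already packaged into Lemma~\ref{lem:phokion}.
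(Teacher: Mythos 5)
Your proposal is correct, and your second route is essentially verbatim the paper's own proof: it applies Lemma~\ref{lem:phokion}(2) to write $R(t[\vect{Z}])$ as a $\kplus$-sum over $\vect{Y}$-tuples containing the nonzero summand $R(t[\vect{Y}])$, and concludes by $\kplus$-positivity. Your first route, via the support identity $R[\vect{Y}]'=R'[\vect{Y}]$ of Lemma~\ref{lem:phokion}(1) applied twice, is a trivially equivalent repackaging of the same observation.
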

\begin{proof}
By Lemma \ref{lem:phokion}, $R[\vect{Y}][\vect{Z}]=R[\vect{Z}]$.
Using Eq. \eqref{eq:marg} we have
\[
R(t[\vect{Z}]) = \bigkplus_{\substack{u\in \tup(\vect{Y})\\u[\vect{Z}]=t[\vect{Z}]}} R(u) = R(t[\vect{Y}] ) \kplus \bigkplus_{\substack{u\in \tup(\vect{Y})\\u[\vect{Z}]=t[\vect{Z}]\\u[\vect{Y}]\neq t[\vect{Y}]}} R(u).
\]
Since by assumption $R(t[\vect{Y}])\neq 0$, we obtain by $\kplus$-positivity of $K$ that $R(t[\vect{Z}]) \neq 0$, i.e., $t[\vect{Z}] \in R[\vect{Z}]'$.
\end{proof}

\sound*
\begin{proof}
Triviality and Symmetry are clearly sound for all $K$-teams. We thus consider only Decomposition, Weak union, and Contraction.
Fix a $K$-relation $R(\vect{V})$ over some variable set $\vect{V}$ that contains $\vect{X}\vect{Y}\vect{Z}\vect{W}$.
\begin{itemize}
\item Decomposition: Suppose $R$ satisfies $\pci{\vect{X}}{\vect{Y}}{\vect{Z}\vect{W}}$. Then, for all tuples $t\in \tup(\vect{V})$ it holds that
\begin{equation}\label{eq:decom}
  R(t[\vect{X}\vect{Y}])   R(t[\vect{X}\vect{Z}\vect{W}]) =   R(t[\vect{X}])  R(t[\vect{X}\vect{Y}\vect{Z}\vect{W}])
\end{equation}
This implies 
\begin{align*}
& R(t[\vect{X}\vect{Y}])  R(t[\vect{X}\vect{Z}])   =  R(t[\vect{X}\vect{Y}])   \bigkplus_{\substack{t'\in \tup(\vect{X}\vect{Z}\vect{W})\\t'[\vect{X}\vect{Z}] =t[\vect{X}\vect{Z}]}} R(t'[\vect{X}\vect{Z}\vect{W}]) \\
=& \, \bigkplus_{\substack{t'\in \tup(\vect{X}\vect{Y}\vect{Z}\vect{W})\\t'[\vect{X}\vect{Y}\vect{Z}] =t[\vect{X}\vect{Y}\vect{Z}]}}   R(t'[\vect{X}\vect{Y}]) R(t'[\vect{X}\vect{Z}\vect{W}])
=    \bigkplus_{\substack{t'\in \tup(\vect{X}\vect{Y}\vect{Z}\vect{W})\\t'[\vect{X}\vect{Y}\vect{Z}] =t[\vect{X}\vect{Y}\vect{Z}]}} R(t'[\vect{X}]) R(t'[\vect{X}\vect{Y}\vect{Z}\vect{W}]) \\
=& \, R(t[\vect{X}])\bigkplus_{\substack{t'\in \tup(\vect{X}\vect{Y}\vect{Z}\vect{W})\\t'[\vect{X}\vect{Y}\vect{Z}] =t[\vect{X}\vect{Y}\vect{Z}]}} R(t'[\vect{X}\vect{Y}\vect{Z}\vect{W}]) 
=   R(t[\vect{X}])  R(t[\vect{X}\vect{Y}\vect{Z}]).
\end{align*}
Having showed
\begin{equation}\label{eq:conc}
R(t[\vect{X}\vect{Y}])  R(t[\vect{X}\vect{Z}]) = R(t[\vect{X}])  R(t[\vect{X}\vect{Y}\vect{Z}]),
\end{equation}
 we conclude that $R$ satisfies $\pci{\vect{X}}{\vect{Y}}{\vect{Z}}$.

\item Weak union: Suppose again $R$ satisfies $\pci{\vect{X}}{\vect{Y}}{\vect{Z}\vect{W}}$, in which case Eq. \eqref{eq:decom} holds for all tuples $t\in \tup(\vect{V})$. Multiplying both sides by $R[t(\vect{X}\vect{Z})] R[t(\vect{X}\vect{Y}\vect{Z})]$ yields
\begin{align*}
  &R(t[\vect{X}\vect{Y}])   R(t[\vect{X}\vect{Z}\vect{W}])  R[t(\vect{X}\vect{Z})] R[t(\vect{X}\vect{Y}\vect{Z})]\\
    =\,& R(t[\vect{X}])  R(t[\vect{X}\vect{Y}\vect{Z}\vect{W}]) R[t(\vect{X}\vect{Z})] R[t(\vect{X}\vect{Y}\vect{Z})].
\end{align*}
If $R(t[\vect{X}\vect{Y}])   R(t[\vect{X}\vect{Z}])\neq 0$, we may apply Eq. \eqref{eq:conc}, which is implied by Eq. \eqref{eq:decom}, and multiplicative cancellativity to obtain 
\begin{equation}\label{eq:obtain}
R(t[\vect{X}\vect{Z}\vect{W}])   R(t[\vect{X}\vect{Y}\vect{Z}]) =  R[t(\vect{X}\vect{Z})] R[t(\vect{X}\vect{Y}\vect{Z}\vect{W})].
\end{equation}
Suppose then $R(t[\vect{X}\vect{Y}])   R(t[\vect{X}\vect{Z}])= 0$. Since $K$ lacks zero divisors, either $R(t[\vect{X}\vect{Y}])=0$ or $R(t[\vect{X}\vect{Z}])=0$.
By positivity and and Lemma \ref{lem:poslemma}, it follows that  $R(t[\vect{X}\vect{Y}\vect{Z}])=0$  and $R(t[\vect{X}\vect{Y}\vect{Z}\vect{W}])=0$. 
 In particular, both sides of Eq. \eqref{eq:obtain} vanish. Hence we may conclude that $R$ satisfies $\pci{\vect{X}\vect{Z}}{\vect{Y}}{\vect{W}}$.

\item Contraction: Suppose $R$ satisfies $\pci{\vect{X}}{\vect{Y}}{\vect{Z}}$ and $\pci{\vect{X}\vect{Z}}{\vect{Y}}{\vect{W}}$, in which case we have Eq. \eqref{eq:obtain}  as well as
\begin{equation}\label{eq:contr-app}
R(t[\vect{X}\vect{Y}])  R(t[\vect{X}\vect{Z}]) = R(t[\vect{X}])  R(t[\vect{X}\vect{Y}\vect{Z}]),
\end{equation}
for all $t\in \tup(\vect{V})$.
Multiplying both left-hand sides and right-hand sides of Eqs. \eqref{eq:obtain} and \eqref{eq:contr-app} with one another yields
\begin{align*}
&R(t[\vect{X}\vect{Z}\vect{W}])   R(t[\vect{X}\vect{Y}\vect{Z}])  R(t[\vect{X}\vect{Y}])  R(t[\vect{X}\vect{Z}])  \\
=\,& R(t[\vect{X}\vect{Z}]) R(t[\vect{X}\vect{Y}\vect{Z}\vect{W}])  R(t[\vect{X}])  R(t[\vect{X}\vect{Y}\vect{Z}]).
\end{align*}
If $R(t[\vect{X}\vect{Z}])  R(t[\vect{X}\vect{Y}\vect{Z}]) \neq 0$, we obtain  Eq. \eqref{eq:decom} by multiplicative cancellativity. On the other hand, assuming $R(t[\vect{X}\vect{Z}])  R(t[\vect{X}\vect{Y}\vect{Z}]) = 0$ we have  three cases:
\begin{enumerate}
\item $R(t[\vect{X}\vect{Z}]) = R(t[\vect{X}\vect{Y}\vect{Z}])=0$. Then $R(t[\vect{X}\vect{Z}\vect{W}]) = R(t[\vect{X}\vect{Y}\vect{Z}\vect{W}])=0$ by positivity of $R$ and Lemma \ref{lem:poslemma}, wherefore both sides of   Eq. \eqref{eq:decom} vanish. 
\item $R(t[\vect{X}\vect{Z}]) = 0$ and  $R(t[\vect{X}\vect{Y}\vect{Z}])\neq 0$. Then $R(t[\vect{X}]) = 0$ by positivity of $K$ and Eq. \eqref{eq:contr-app}. Again, both sides of   Eq. \eqref{eq:decom} vanish.
\item $R(t[\vect{X}\vect{Z}]) \neq  0$ and  $R(t[\vect{X}\vect{Y}\vect{Z}])= 0$. This time $R(t[\vect{X}\vect{Y}]) = 0$ by positivity of $K$ and Eq. \eqref{eq:contr-app}, and once more we obtain  Eq. \eqref{eq:decom}.
\end{enumerate}
Since   Eq. \eqref{eq:decom} always holds, we conclude that  $R$ satisfies $\pci{\vect{X}}{\vect{Y}}{\vect{W}\vect{Z}}$.

\item Interaction: Suppose $R$ satisfies $\pci{\vect{X}\vect{W}}{\vect{Y}}{\vect{Z}}$ and $\pci{\vect{X}\vect{Z}}{\vect{Y}}{\vect{W}}$. 
Then, we have
\begin{align}
  R(t[\vect{X}\vect{Z}\vect{W}])  R(t[\vect{X}\vect{Y}\vect{W}]) &= R(t[\vect{X}\vect{W}])  R(t[\vect{X}\vect{Y}\vect{Z}\vect{W}]),\label{eq:rewritable}\\
  R(t[\vect{X}\vect{Y}\vect{Z}])  R(t[\vect{X}\vect{Z}\vect{W}]) &= R(t[\vect{X}\vect{Z}])  R(t[\vect{X}\vect{Y}\vect{Z}\vect{W}]),\label{eq:rewritable2}
\end{align}
for all tuples $t\in \tup(\vect{V})$. By Proposition \ref{prop:embedding}, $K$ embeds in some semifield $F$. 
Given $a \in K\setminus\{0\}$, let us write $a^{-1}$ for its multiplicative inverse in $F$.
By positivity of $K$, totality of $R$, and Lemma \ref{lem:poslemma}, we observe  $R(t[\vect{U}])\neq 0$ for all $\vect{U}\subseteq \vect{V}$ and $t \in \tup(\vect{V})$.
Thus, the expression $R(t[\vect{U}']\mid t([\vect{U}]) \coloneqq R(t[\vect{U}\vect{U}'])  R(t[\vect{U}])^{-1}$
is well defined for all $\vect{U},\vect{U}'\subseteq \vect{V}$ and $t \in \tup(\vect{V})$. We may now rewrite Eqs. \eqref{eq:rewritable} and \eqref{eq:rewritable2}
as
\[
  R(t[\vect{X}\vect{Y}\vect{W}]\mid t[\vect{X}\vect{W}])=R(t[\vect{X}\vect{Y}\vect{Z}\vect{W}]\mid t[\vect{X}\vect{Z}\vect{W}])=R(t[\vect{X}\vect{Y}\vect{Z}]\mid t[\vect{X}\vect{Z}])
\]
from which we obtain
\[
  R(t[\vect{X}\vect{Y}\vect{W}])  R(t[\vect{X}\vect{Z}])=R(t[\vect{X}\vect{Y}\vect{Z}]) R(t[\vect{X}\vect{W}])
\]
for all $t \in \tup(\vect{V})$. Thus, for arbitrary $t \in \tup(\vect{V})$, 
\begin{align*}
  R(t[\vect{X}\vect{Y}])  R(t[\vect{X}\vect{Z}])=&R(t[\vect{X}\vect{Z}])\bigkplus_{\substack{t'\in \tup(\vect{XYW})\\t'[\vect{XY}]=t[\vect{XY}]}} R(t'[\vect{XY}\vect{W}])  \\
  =&\bigkplus_{\substack{t'\in \tup(\vect{XYZW})\\t'[\vect{XYZ}]=t[\vect{XYZ}]}}R(t'[\vect{X}\vect{Z}]) R(t'[\vect{XY}\vect{W}])\\
  =&\bigkplus_{\substack{t'\in \tup(\vect{XYZW})\\t'[\vect{XYZ}]=t[\vect{XYZ}]}}R(t'[\vect{X}\vect{Y}\vect{Z}]) R(t'[\vect{X}\vect{W}])\\ 
    =&R(t[\vect{X}\vect{Y}\vect{Z}])\bigkplus_{\substack{t'\in \tup(\vect{XW})\\t'[\vect{X}]=t[\vect{X}]}} R(t'[\vect{X}\vect{W}])\\ 
  =&R(t[\vect{X}\vect{Y}\vect{Z}]) R(t[\vect{X}]),
\end{align*}
by which we conclude that $R$ satisfies $\pci{\vect{X}}{\vect{Y}}{\vect{Z}}$. Hence, $R$ also satisfies $\pci{\vect{X}}{\vect{Y}}{\vect{Z}\vect{W}}$
by soundness of Contraction.
\end{itemize}
\end{proof}

\section{Combined rules}\label{sect:combined}

We first show the following helping lemma.

\begin{lemma}\label{lem:notzero}
Let $R(\vect{X})$ be a $K$-relation, and let $\vect{Y},\vect{Z}\subseteq \vect{X}$. If
$R\models \vect{Y} \to \vect{Z}$, then $R(t[\vect{Y}\vect{Z}])\neq 0$ implies $R(t[\vect{Y}])=R(t[\vect{Y}\vect{Z}])$.
\end{lemma}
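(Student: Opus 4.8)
The plan is to unfold both marginals using their defining sum \eqref{eq:marg} and to argue that, once the zero contributions are discarded, the two index sets over which we sum actually coincide. First I would note the easy inclusion: since $\vect{Y} \subseteq \vect{Y}\vect{Z}$, every $\vect{X}$-tuple $s$ with $s[\vect{Y}\vect{Z}] = t[\vect{Y}\vect{Z}]$ automatically satisfies $s[\vect{Y}] = t[\vect{Y}]$, so the tuples summed in $R(t[\vect{Y}\vect{Z}])$ form a subset of those summed in $R(t[\vect{Y}])$. The real content of the lemma is the reverse containment after restricting to the support $R'$.

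The crucial ingredient is the hypothesis $R(t[\vect{Y}\vect{Z}]) \neq 0$, which I would use to manufacture a witness. Because $0$ is the neutral element of $\kplus$ and only finitely many tuples lie in $R'$, a nonzero value of the sum $\bigkplus_{t' : t'[\vect{Y}\vect{Z}] = t[\vect{Y}\vect{Z}]} R(t')$ forces some summand to be nonzero; that is, there is a tuple $t^* \in R'$ with $t^*[\vect{Y}\vect{Z}] = t[\vect{Y}\vect{Z}]$, and in particular $t^*[\vect{Y}] = t[\vect{Y}]$ and $t^*[\vect{Z}] = t[\vect{Z}]$. Now take any $s \in R'$ contributing to $R(t[\vect{Y}])$, i.e.\ with $s[\vect{Y}] = t[\vect{Y}]$. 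Then $s[\vect{Y}] = t^*[\vect{Y}]$, and since the functional dependency $\vect{Y} \to \vect{Z}$ holds on $R'$ (this being the definition of FD satisfaction for $K$-relations), applying it to the pair $s, t^* \in R'$ gives $s[\vect{Z}] = t^*[\vect{Z}] = t[\vect{Z}]$, whence $s[\vect{Y}\vect{Z}] = t[\vect{Y}\vect{Z}]$. Thus the support tuples contributing to $R(t[\vect{Y}])$ are precisely those contributing to $R(t[\vect{Y}\vect{Z}])$.

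To conclude I would drop the zero terms from both marginals, which is harmless as $0$ is the additive identity, writing
\[
R(t[\vect{Y}]) = \bigkplus_{\substack{s \in R' \\ s[\vect{Y}] = t[\vect{Y}]}} R(s), \qquad R(t[\vect{Y}\vect{Z}]) = \bigkplus_{\substack{s \in R' \\ s[\vect{Y}\vect{Z}] = t[\vect{Y}\vect{Z}]}} R(s),
\]
and then invoke the equality of the two index sets established above to read off $R(t[\vect{Y}]) = R(t[\vect{Y}\vect{Z}])$.

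I expect the argument to be bookkeeping rather than technically hard: notably, no positivity or cancellativity is required, only that $0$ is neutral for $\kplus$ together with the standing finiteness of the support. The single point deserving care is the indispensable role of the witness $t^*$: without the assumption $R(t[\vect{Y}\vect{Z}]) \neq 0$ the statement is false, since the FD constrains only pairs of support tuples that agree on $\vect{Y}$ and therefore gives no control over the $\vect{Z}$-value realized at $t$ when no support tuple sits above $t[\vect{Y}\vect{Z}]$.
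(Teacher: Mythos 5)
Your proof is correct and follows essentially the same route as the paper's: extract a support witness $t^*$ above $t[\vect{Y}\vect{Z}]$ from the nonzero marginal, use the FD on the support to force every support tuple over $t[\vect{Y}]$ to agree with $t[\vect{Y}\vect{Z}]$, and conclude that the two marginal sums coincide after discarding zero summands. The only cosmetic difference is that the paper organizes the final computation through the intermediate marginal $R[\vect{Y}\vect{Z}]$ via Lemma~\ref{lem:phokion}, whereas you compare index sets of $\vect{X}$-tuples directly; your observation that neither positivity nor cancellativity is needed also matches the paper's argument.
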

\begin{proof}
Applying Eq. \eqref{eq:marg}  and $R(t[\vect{Y}\vect{Z}])\neq 0$, we find $t'\in  \tup(\vect{X})$ such that $t'[\vect{Y}\vect{Z}]=t[\vect{Y}\vect{Z}]$ and $R(t')\neq 0$. 
On the other hand, since $R'\models \vect{Y} \to \vect{Z}$, for all $t''\in \tup(\vect{X})$ we have $R(t'')= 0$ whenever $t''[\vect{Y}]=t'[\vect{Y}]$ and
$t''[\vect{Z}]\neq t'[\vect{Z}]$. Thus, applying Eq. \eqref{eq:marg} again, for all $s\in \tup(\vect{Y}\vect{Z})$ we have $R(s)= 0$ whenever $s[\vect{Y}]=t[\vect{Y}]$ and
$s[\vect{Z}]\neq t[\vect{Z}]$.

Then, since $R[\vect{Y}]= R[\vect{Y}\vect{Z}][\vect{Y}]$ by Lemma \ref{lem:phokion}, we may conclude that
\[
R(t[\vect{Y}]) = R[\vect{Y}\vect{Z}](t[\vect{Y}])=\bigkplus_{\substack{t'\in \tup(\vect{Y}\vect{Z})\\t'[\vect{Y}]=t[\vect{Y}]}} R(t')
= R(t[\vect{Y}\vect{Z}]) + \overbrace{\bigkplus_{\substack{s\in \tup(\vect{Y}\vect{Z})\\s[\vect{Y}]=t[\vect{Y}]\\s[\vect{Z}]\neq t[\vect{Z}]}} R(s)}^{0} = R(t[\vect{Y}\vect{Z}]).
\]
\end{proof}

\cifd*
\begin{proof}
Let $K$ be a semiring, and let $R(\vect{V})$ be a $K$-relation, where $\vect{X},\vect{Y},\vect{Z}\subseteq \vect{V}$ are disjoint sets of variables.
Considering the first statement, assume that $R$ satisfies $\vect{X} \to \vect{Y}$.  
We need to show that $R$ satisfies $\pci{\vect{X}}{\vect{Y}}{\vect{Z}}$, i.e.,
\begin{equation}\label{eq:fdci}
R(t[\vect{X}\vect{Y}])  R(t[\vect{X}\vect{Z}]) = R(t)  R(t[\vect{X}])
\end{equation}
 for all $t \in \tup(\vect{X}\vect{Y}\vect{Z})$. Assume first that $R(t[\vect{X}\vect{Y}]) \neq 0 \neq R(t[\vect{X}\vect{Z}])$.
 By Eq. \eqref{eq:marg}, there must be $t',t''\in \tup(\vect{X}\vect{Y}\vect{Z})$ such that $R(t')\neq 0\neq R(t'')$, $t'[\vect{X}\vect{Y}]=t[\vect{X}\vect{Y}]$, and $t''[\vect{X}\vect{Z}]=t[\vect{X}\vect{Z}]$. Since $R$ satisfies $\vect{X} \to \vect{Y}$, it must also be the case that $t'[\vect{Y}]=t''[\vect{Y}]$, wherefore we obtain $t=t''$. In particular, $R(t)\neq 0$.
  
Since $\vect{X} \to \vect{Y}$ entails $\vect{X}\vect{Z} \to \vect{Y}\vect{Z}$ by the augmentation rule, we can now apply 
Lemma \ref{lem:notzero} to obtain $R(t)=R(t[\vect{X}\vect{Z}]) $ and $R(t[\vect{X}\vect{Y}])= R(t[\vect{X}])$. 
This proves
Eq. \eqref{eq:fdci}.

Let us then assume that $R(t[\vect{X}\vect{Y}]) = 0$ or $R(t[\vect{X}\vect{Z}])=0$, i.e., $t[\vect{X}\vect{Y}] \notin R[\vect{X}\vect{Y}]'$ or $t[\vect{X}\vect{Z}] \notin R[\vect{X}\vect{Z}]'$. By Lemma \ref{lem:poslemma}, this leads to $t \notin R'$, i.e., $R(t)=0$. Consequently, Eq. \eqref{eq:fdci} holds true since both its left-hand side and right-hand side are zero.
We conclude that  CI introduction is sound for all $K$-relations, where $K$ is $\kplus$-positive.

Let us then consider the second statement. 
To show the contraposition, assume that $R$ does not satisfy $\vect{X} \to \vect{Z}$. Then we find two tuples $t$ and $t'$ from $R'$ such that
$t[\vect{X}]=t'[\vect{X}]$ and $t[\vect{Z}]\neq t'[\vect{Z}]$. 
By Lemma \ref{lem:poslemma} we obtain $t[\vect{X}\vect{Y}] \in R[\vect{X}\vect{Y}]'$ and $t'[\vect{X}\vect{Z}] \in R[\vect{X}\vect{Z}]'$, i.e., $R(t[\vect{X}\vect{Y}] )\neq 0$ and $R(t'[\vect{X}\vect{Z}] )\neq 0$.
Fix now a $\vect{X}\vect{Y}\vect{Z}$-tuple $t''$ that agrees with $t$ on $\vect{X}\vect{Y}$ and with $t'$ on $\vect{Z}$ (and consequently on $\vect{X}$ as well). Assume now that $R$ satisfies $\pci{\vect{X}}{\vect{Y}}{\vect{Z}}$, in which case
\[
R(t''[\vect{X}\vect{Y}])R(t''[\vect{X}\vect{Z}]) = R(t'')R(t''[\vect{X}]).
\]
Since $K$ does not have zero divisors, the left-hand side of this equation is not zero, meaning that $R(t'')$ is not zero. By Eq. \eqref{eq:marg} we can extend $t''$ to a $\vect{V}$-tuple $t^*$ such that $R(t^*)\neq 0$. This means that the support $R'$ contains two tuples, $t$ and $t^*$, which agree on $\vect{X}\vect{Y}$ but disagree on $\vect{Z}$. In particular, $R$ does not satisfy the FD $\vect{X}\vect{Y}\to \vect{Z}$. This shows the contraposition of the second statement.
\end{proof}

\section{Axioms of SCI+FD}\label{sect:axiomssci}

First we define explicitly the operation that connects expressions from probability theory with their counterparts in database theory.
 For an MVD $\sigma = \vect{X} \twoheadrightarrow \vect{Y}$, the \emph{corresponding SCI} is of the form
$\pci{\vect{X}}{\vect{Y}\setminus \vect{X}}{\vect{V}\setminus \vect{X}\vect{Y}}$. Conversely, for an SCI $\pci{\vect{X}}{\vect{Y}}{\vect{Z}}$, the \emph{corresponding MVD} is of the form $\vect{X} \twoheadrightarrow \vect{Y}$. Analogously, for an EMVD $\sigma = \vect{X} \twoheadrightarrow \vect{Y}\mid \vect{Z}$, the \emph{corresponding CI} is the SCI that corresponds to the MVD $\vect{X} \twoheadrightarrow \vect{Y}$ on the schema $\vect{X}\vect{Y}\vect{Z}$. For an CI, the \emph{corresponding EMVD} is defined in the obvious way. 
We then let $\sigma \mapsto \sigma^*$ be the operation that sends $\sigma$ to its corresponding statement $\sigma^*$, and extend this operation to be the identity on FDs. The operation is also extended to sets by $\Sigma^* \coloneqq \{\sigma^* \mid \sigma \in \Sigma\}$. 
 A relation $R$ satisfies an MVD/EMVD/FD $\sigma$ if and only if its characteristic $\mathbb{B}$-relation satisfies $\sigma^*$.

  \begin{figure}[h!]
  \centering
  \begin{tikzpicture}[every node/.style={outer sep=0pt}]
  \def\m{1.4em}
    \node[draw,minimum width=0.5\textwidth,rounded corners, minimum height=3.2cm] (box1) {
    \hspace{15mm}
      \begin{minipage}[t][2.7cm]{.87\textwidth}
 \begin{itemize}
 \item[MVD0] Complementation: If $\vect{XYZ}=\vect{V}$, $\vect{Y}\cap\vect{Z}\subseteq\vect{X}$, and $\vect{X}\twoheadrightarrow \vect{Y}$, then $\vect{X}\twoheadrightarrow \vect{Z}$. 
 \item[MVD1] Reflexivity: If $\vect{Y}\subseteq \vect{X}$, then $\vect{X}\twoheadrightarrow \vect{Y}$. 
  \item[MVD2]  Augmentation: If $\vect{Z}\subseteq \vect{W}$ and $\vect{X}\twoheadrightarrow \vect{Y}$, then $\vect{XW}\twoheadrightarrow \vect{Y}\vect{Z}$. 
 \item[MVD3] Transitivity: If $\vect{X}\twoheadrightarrow \vect{Y}$ and $\vect{Y}\twoheadrightarrow \vect{Z}$, then $\vect{X}\twoheadrightarrow \vect{Z}\setminus \vect{Y}$. 
 \item[MVD-FD1] If $\vect{X}\twoheadrightarrow \vect{Y}$, then $\vect{X}\to \vect{Y}$.
  \item[MVD-FD2] If $\vect{Z}'\subseteq \vect{Z}$, $\vect{Y}\cap\vect{Z}=\emptyset$, $\vect{X}\twoheadrightarrow \vect{Z}$, and $\vect{Y}\to \vect{Z}'$, then $\vect{X}\to \vect{Z}'$.
 \end{itemize}
  \end{minipage}
    };
  \end{tikzpicture}
\caption{Axioms for MVDs and FDs. \label{fig:mvdax}}
\end{figure}

 \begin{theorem}[\cite{BeeriFH77}]\label{thm:beeri}
 The axioms of MVDs and FDs (Fig. \ref{fig:mvdax}) are sound and complete.
 \end{theorem}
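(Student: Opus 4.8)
Since Theorem~\ref{thm:beeri} is the classical result of Beeri, Fagin, and Howard, the natural plan is to establish \emph{soundness} by direct inspection and \emph{completeness} via the \emph{dependency basis} technique.

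For soundness, I would verify each of the rules MVD0--MVD3 and MVD-FD1, MVD-FD2 against the tuple-exchange semantics of MVDs and the standard semantics of FDs. Most cases are routine: Reflexivity and Augmentation follow immediately from the definitions, Complementation is a consequence of the symmetry of the witnessing tuple $t''$ in $\vect{Y}$ and $\vect{V}\setminus\vect{XY}$, and Transitivity together with the mixed rules are obtained by chaining two successive tuple exchanges and interleaving functional-dependency reasoning. The point to be careful about is that Complementation and Transitivity for MVDs are stated only under the side conditions on the attribute sets (e.g.\ $\vect{Y}\cap\vect{Z}\subseteq\vect{X}$, or the removal of $\vect{Y}$ in MVD3), and each verification must respect exactly those conditions.

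The substance of the proof is completeness: I would show that if $\tau$ is not derivable from $\Sigma$, then some relation satisfies $\Sigma$ but violates $\tau$. The key device is the dependency basis of a left-hand side $\vect{X}$. First I would prove that the family of sets $\vect{W}$ with $\Sigma\vdash \vect{X}\twoheadrightarrow\vect{W}$ is closed under union, intersection, and complementation relative to $\vect{V}\setminus\vect{X}$ (using MVD0--MVD3), so that it is generated by a unique partition of $\vect{V}\setminus\vect{X}$ into atomic blocks, the dependency basis $\mathrm{DB}_\Sigma(\vect{X})$. I would then characterise derivability syntactically: $\Sigma\vdash\vect{X}\twoheadrightarrow\vect{Y}$ exactly when $\vect{Y}\setminus\vect{X}$ is a union of blocks, and $\Sigma\vdash\vect{X}\to\vect{Y}$ exactly when every attribute of $\vect{Y}\setminus\vect{X}$ lies in a singleton block that is moreover functionally determined by $\vect{X}$, this latter information being tracked using the mixed rules MVD-FD1 and MVD-FD2.

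Given a non-derivable $\tau$, I would construct a counterexample relation whose tuples agree on an attribute set $\vect{W}$ precisely when $\vect{W}$ is saturated with respect to the derivable dependencies; concretely one indexes tuples by assignments of values to the blocks of the relevant dependency bases and arranges that functionally determined blocks are held constant. The main obstacle---and the most technical part---is verifying that this relation satisfies \emph{every} dependency in $\Sigma$ while failing the chosen $\tau$: one must check, block by block, that each MVD and FD of $\Sigma$ is respected by the block-indexed tuple assignment, which is exactly where the closure properties of the dependency basis do the real work. Soundness then yields the converse direction, completing the equivalence between $\Sigma\models\tau$ and $\Sigma\vdash\tau$.
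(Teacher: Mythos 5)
This theorem is imported by the paper directly from \cite{BeeriFH77}; the paper supplies no proof of its own, so there is nothing internal to compare your argument against. What you have written is, in outline, the classical proof from that literature: soundness by case-checking the tuple-exchange semantics, and completeness via the dependency basis of a left-hand side $\vect{X}$, i.e.\ the observation that $\{\vect{W} \mid \Sigma\vdash \vect{X}\twoheadrightarrow\vect{W}\}$ is a Boolean subalgebra of subsets of $\vect{V}\setminus\vect{X}$ generated by a partition into blocks, followed by the construction of a block-indexed counterexample relation for any non-derivable $\tau$. That is the right route, and it is consistent with the way the surrounding text uses the theorem (in particular with the two-tuple reduction of \cite{SagivDPF81} invoked in Theorem~\ref{thm:kenig}).

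Two places where your sketch leaves the genuinely hard work undone, and which you should be explicit about if this were to be written out. First, the interaction of FDs with the dependency basis: you need the attribute closure $\vect{X}^{+}$ under the derivable FDs, and the correct statement is that every attribute of $\vect{X}^{+}\setminus\vect{X}$ sits in a singleton block of the dependency basis (via MVD-FD1), while MVD-FD2 is what lets you pull functional determination back from a block to $\vect{X}$; your phrasing ``a singleton block that is moreover functionally determined by $\vect{X}$'' conflates these two directions. Second, the verification that the $2^{k}$-tuple counterexample relation (two values per non-closure block, constant on $\vect{X}^{+}$) satisfies \emph{every} dependency of $\Sigma$, not just those with left-hand side $\vect{X}$, is the technical core of Beeri--Fagin--Howard's argument and requires a lemma relating the dependency basis of $\vect{X}$ to dependencies whose left-hand sides are other attribute sets; simply saying the closure properties ``do the real work'' does not yet discharge this. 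With those two points filled in, the proposal is the standard, correct proof.
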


Since $(\sigma^*)^*=\sigma$ for any CI/FD, the following lemma entails the direction (4) $\Rightarrow$ (1) of Theorem \ref{thm:kenig}.
\begin{lemma}\label{lem:comp}
Let $\Sigma\cup\{\tau\}$ be a set of MVDs and FDs over a variable set $\vect{V}$. 
If $\Sigma$ implies $\tau$ over relations, then $\tau^*$ is can be derived from $\Sigma^*$ using (S1-S5), (FD1-FD3), and (FD-CI1,FD-CI2). 
\end{lemma}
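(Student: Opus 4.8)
The plan is to reduce the claim to the completeness of the Beeri--Fagin--Howard axiomatization (Theorem~\ref{thm:beeri}) and then transport the resulting derivation across the $*$-translation. Since $\Sigma\cup\{\tau\}$ consists of MVDs and FDs over $\vect{V}$ and $\Sigma$ implies $\tau$ over relations, Theorem~\ref{thm:beeri} supplies a derivation $\delta_1,\dots,\delta_m=\tau$ in which every $\delta_i$ either lies in $\Sigma$ or follows from earlier entries by one of the rules of the sound and complete system of Fig.~\ref{fig:mvdax} (its FD fragment being Armstrong's axioms). The goal then becomes to prove, by induction on $i$, that $\delta_i^*$ is derivable from $\Sigma^*$ using (S1--S5), (FD1--FD3), and (FD-CI1,FD-CI2). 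The base case is immediate, since $\delta_i\in\Sigma$ gives $\delta_i^*\in\Sigma^*$.

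For the inductive step the essential work is to verify, rule by rule, that the $*$-image of each inference rule in Fig.~\ref{fig:mvdax} is a \emph{derived} rule of the CI+FD system, so that a single application on the MVD side expands into a short CI-side derivation from the already-derivable premises. The FD-only rules translate to (FD1--FD3) verbatim, as $*$ is the identity on FDs. MVD reflexivity collapses, under $*$, to $\pci{\vect{X}}{\emptyset}{\vect{V}\setminus\vect{X}}$, obtained from Triviality (S1) and Symmetry (S2). MVD complementation sends its premise and conclusion to the two CIs $\pci{\vect{X}}{\vect{Y}\setminus\vect{X}}{\vect{Z}\setminus\vect{X}}$ and $\pci{\vect{X}}{\vect{Z}\setminus\vect{X}}{\vect{Y}\setminus\vect{X}}$ (using $\vect{X}\vect{Y}\vect{Z}=\vect{V}$ and $\vect{Y}\cap\vect{Z}\subseteq\vect{X}$ to simplify the complements), which differ only by Symmetry (S2). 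MVD augmentation is simulated by Weak union (S4) to enlarge the conditioning block together with Decomposition (S3) to trim the independent block. The mixed rule sending an FD to its MVD is exactly CI introduction (FD-CI1), applied after passing from $\vect{X}\to\vect{Y}$ to $\vect{X}\to\vect{Y}\setminus\vect{X}$ via (FD1--FD3); its companion MVD-FD2 is realized by first specializing $(\vect{X}\twoheadrightarrow\vect{Z})^*$ through Decomposition (S3) and Symmetry (S2) and then invoking FD contraction (FD-CI2) together with FD augmentation and union from (FD1--FD3).

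The main obstacle is MVD transitivity (MVD3): the CI calculus has no direct transitivity rule, so the consequent $(\vect{X}\twoheadrightarrow\vect{Z}\setminus\vect{Y})^*$ must be synthesized from $(\vect{X}\twoheadrightarrow\vect{Y})^*$ and $(\vect{Y}\twoheadrightarrow\vect{Z})^*$ using Contraction (S5) as the workhorse, preceded by Symmetry (S2) and Weak union (S4) to bring the two CIs into a common conditioning context and followed by Decomposition (S3) to project onto $\vect{Z}\setminus\vect{Y}$. I expect this case to absorb most of the effort, and throughout all cases the recurring technical nuisance will be the bookkeeping of the set differences and overlaps (such as $\vect{Y}\setminus\vect{X}$ and $\vect{V}\setminus\vect{X}\vect{Y}$) that $*$ introduces, since the CI rules presuppose pairwise disjoint components whereas the MVD attribute sets need not be disjoint. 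Once all rule-images are discharged, the induction yields that $\tau^*$ is derivable from $\Sigma^*$ in the CI+FD system, which is the assertion of the lemma and, via $(\sigma^*)^*=\sigma$, the direction $(4)\Rightarrow(1)$ of Theorem~\ref{thm:kenig}.
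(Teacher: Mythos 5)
Your proposal is correct and follows essentially the same route as the paper: invoke the completeness of the MVD+FD axiomatization (Theorem~\ref{thm:beeri}) to obtain a derivation of $\tau$ from $\Sigma$, then induct on its length, simulating each rule of Fig.~\ref{fig:mvdax} in the system (S1--S5), (FD1--FD3), (FD-CI1,FD-CI2) -- with MVD transitivity handled, as you anticipate, by Symmetry, Weak union, Decomposition and repeated Contraction. The only difference is one of detail: the paper writes out the six-step derivation for MVD3 and the explicit set-difference bookkeeping that you defer, but your identification of the tools for each case matches the paper's proof.
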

%
 %
\begin{proof}
 Assume now that $\tau$ is provable from $\Sigma$ by the axioms for MVDs and FDs. We claim that $\tau^*$ is provable from $\Sigma^*$ by using (S1-S5), (FD1-FD3), and (FD-CI1,FD-CI2). 
 The proof is by induction on the length of the proof. 
 If $\tau$ is obtained from $\sigma_1, \dots ,\sigma_n$ by applying one the rules for MVDs and FDs, the induction hypothesis is that $\sigma^*_1, \dots ,\sigma^*_n$ are provable from $\Sigma$ by (S1-S5), (FD1-FD3), and (FD-CI1,FD-CI2).
  The case where $\tau$ is obtained by one of the Armstrong axioms is clear by induction hypothesis. The remaining cases are considered below. (The semigraphoid axioms are known to be sound and complete for SCIs interpreted as MVDs on database relations \cite{GyssensNG14}. Thus, the cases for the axioms (MVD0-MVD3) follow readily. We however consider all the remaining cases for the sake of completeness.)
  \begin{itemize}
  \item MVD0 (Complementation): Suppose $\tau=\vect{X}\twoheadrightarrow \vect{Z}$ is deduced from $\sigma=\vect{X}\twoheadrightarrow \vect{Y}$, where $\vect{XYZ}=\vect{V}$, $\vect{Y}\cap\vect{Z}\subseteq\vect{X}$. Then $\tau^*=\pci{\vect{X}}{\vect{Z}\setminus \vect{X}}{\vect{V}\setminus \vect{XZ}}$ and $\sigma^*=\pci{\vect{X}}{\vect{Y}\setminus \vect{X}}{\vect{V}\setminus \vect{XY}}$. From $\sigma^*$ we obtain $\pci{\vect{X}}{\vect{Y}\setminus \vect{X}}{\vect{Z}\setminus \vect{X}}$ by decomposition. Then, $\tau^*$ follows by symmetry, and since $\vect{V}\setminus \vect{XZ}=\vect{Y}\setminus \vect{X}$ by hypothesis.
 \item MVD1 (Reflexivity): If $\tau$ is of the form $\vect{X}\twoheadrightarrow \vect{Y}$, where $\vect{Y}\subseteq \vect{X}$, then $\tau^*$ can be deduced by the symmetry and triviality rules of the semigraphoid axioms.
  \item MVD2 (Augmentation): Suppose $\tau=\vect{XW}\twoheadrightarrow \vect{Y}\vect{Z}$ is obtained from $\vect{Z}\subseteq \vect{W}$ and $\vect{X}\twoheadrightarrow \vect{Y}$. Since $\tau^* = (\vect{Z}\subseteq \vect{W})^*$, the claim follows by induction hypothesis.
 \item MVD3 (Transitivity): Suppose $\tau=\vect{X}\twoheadrightarrow \vect{Z}\setminus \vect{Y}$ is obtained from $\sigma_0=\vect{X}\twoheadrightarrow \vect{Y}$ and $\sigma_1=\vect{Y}\twoheadrightarrow \vect{Z}$. Let us write 
 \begin{itemize}
 \item $\vect{A}= \vect{X}\setminus \vect{YZ}$,
  \item $\vect{B}= \vect{Y}\setminus \vect{XZ}$,
 \item $\vect{C}= \vect{Z}\setminus \vect{XY}$,
\item $\vect{D}=(\vect{X}\cap\vect{Y})\setminus \vect{Z}$,
 \item $\vect{E}=(\vect{X}\cap\vect{Z})\setminus \vect{Y}$,
\item $\vect{F}=(\vect{Y}\cap\vect{Z})\setminus \vect{X}$,
\item $\vect{G}=\vect{X}\cap\vect{Y}\cap \vect{Z}$,
\item $\vect{H}=\vect{V} \setminus \vect{XYZ}$.
 \end{itemize}
 We need to prove $\tau^*=\pci{\vect{ADEG}}{\vect{C}}{\vect{BFH}}$ from $\sigma_0^*=\pci{\vect{ADEG}}{\vect{BF}}{\vect{CH}}$ and $\sigma_1^*=\pci{\vect{BDFG}}{\vect{CE}}{\vect{AH}}$. This can be done as follows:
 \begin{enumerate}
 \item $\pci{\vect{ABDEG}}{\vect{CH}}{\vect{F}}$ (from $\sigma_1^*$ using symmetry and weak union)
  \item $\pci{\vect{ABDEG}}{\vect{C}}{\vect{F}}$ (from (1) by decomposition)
  \item $\pci{\vect{ABDEFG}}{\vect{C}}{\vect{H}}$ (from $\sigma_2^*$ using symmetry and weak union)
   \item $\pci{\vect{ABDEG}}{\vect{C}}{\vect{FH}}$ (from (2) and (3) by contraction)
   \item $\pci{\vect{ADEG}}{\vect{C}}{\vect{B}}$ (from $\sigma_1^*$ using symmetry and decomposition)
   \item $\pci{\vect{ADEG}}{\vect{C}}{\vect{BFH}}$ (from (4) and (5) by contraction)
 \end{enumerate}
 \item MVD-FD1: Suppose $\tau = \vect{X} \twoheadrightarrow \vect{Y}$ is obtained from $\sigma = \vect{X}\to \vect{Y}$. We can deduce $\vect{X} \to \vect{Y}\setminus \vect{X}$ from $\sigma^*$ using transitivity and reflexivity of the Armstrong axioms. Then, by  CI introduction we obtain $\sigma^*=\pci{\vect{X}}{\vect{Y}\setminus \vect{X}}{\vect{V}\setminus \vect{XY}}$.
  \item MVD-FD2: Suppose $\tau=\vect{X}\to \vect{Z}'$ is obtained from $\sigma_0=\vect{X}\twoheadrightarrow \vect{Z}$ and
  $\sigma_1= \vect{Y}\to \vect{Z}'$, where $\vect{Z}' \subseteq \vect{Z}$ and $\vect{Y}\cap\vect{Z}=\emptyset$.
  We need to show that $\tau^*=\tau$ can be deduced from $\sigma^*_0=\pci{\vect{X}}{\vect{Z}\setminus \vect{X}}{\vect{V}\setminus \vect{XZ}}$ and
  $\sigma^*_1= \sigma_1$. Using symmetry and decomposition we obtain $\pci{\vect{X}}{\vect{Y}\setminus \vect{X}}{\vect{Z}'\setminus \vect{X}}$ from $\sigma^*_0$. Using reflexivity and transitivity $\vect{X}\vect{Y}\to \vect{Z}'$ follows from $\sigma_1^*$. By FD contraction, we then deduce $\vect{X}\to \vect{Z}'\setminus \vect{X}$ from $\pci{\vect{X}}{\vect{Y}\setminus \vect{X}}{\vect{Z}'\setminus \vect{X}}$ and $\vect{X}\vect{Y}\to \vect{Z}'$. Finally, we obtain $\vect{X}\to \vect{Z}'$ from $\vect{X}\to \vect{Z}'\setminus \vect{X}$ by reflexivity and transitivity. 
 \end{itemize}
This completes the induction proof. Thus, the statement of the lemma follows.
\end{proof}


\end{document}